\documentclass[journal]{IEEEtran}

\IEEEoverridecommandlockouts


\usepackage[noadjust]{cite}						
\usepackage[nolist,nohyperlinks,printonlyused]{acronym}

\usepackage{booktabs}							
\usepackage{tabularx}							

\usepackage{supertabular}
\usepackage{multirow}

\usepackage{nicefrac}							
\usepackage{siunitx}							
\usepackage{amsmath,amssymb,amsfonts}
\usepackage{gensymb}
\usepackage{mathtools}							

\usepackage{graphicx}							
\usepackage{color}								
\usepackage{xcolor}
\usepackage{caption}
\usepackage{subcaption}
\usepackage{tikz}
\usepackage{pgfplots}
\usepackage{grffile}
\usepackage[european, americanvoltages, nooldvoltagedirection]{circuitikz}		

\usepackage[german,USenglish,UKenglish]{babel}
\usepackage{textcomp}							
\usepackage[T1]{fontenc}
\usepackage{bm}									
\usepackage{dsfont}								
\usepackage{lmodern}							


\usepackage{xspace}
\usepackage{enumerate}							
\usepackage{cancel}								
\usepackage[normalem]{ulem}						
\usepackage{amsthm}								
\usepackage{thmtools}							
\usepackage{hyphenat}							

\sisetup{range-phrase=--,per-mode = symbol,output-product=\cdot,exponent-product=\cdot,group-digits = false} 

\usepackage{xifthen}							

\usetikzlibrary{external}
\usetikzlibrary{decorations}
\usetikzlibrary{shapes,arrows}
\usetikzlibrary{calc}
\usetikzlibrary{angles}
\usetikzlibrary{positioning}
\usetikzlibrary{fit}
\usetikzlibrary{backgrounds}
\usetikzlibrary{mindmap,trees}

\pgfplotsset{compat=newest}
\usetikzlibrary{plotmarks}
\usetikzlibrary{arrows.meta}
\usetikzlibrary{bending}
\usetikzlibrary{shadows}
\usepgfplotslibrary{patchplots}

\usepackage{arydshln}					
\usepackage[colorlinks=true,linkcolor=black,anchorcolor=black,citecolor=black,filecolor=black,menucolor=black,runcolor=black,urlcolor=black]{hyperref}



	\definecolor{purpleDark}{RGB}{118, 4, 205}
	\definecolor{purpleLight}{RGB}{186, 102, 250}
	
	\definecolor{blueDark}{RGB}{52, 78, 243}
	\definecolor{blueLight}{RGB}{118, 135, 244}
	\definecolor{redDark}{RGB}{197, 34, 0}
	\definecolor{redLight}{RGB}{255, 91, 57}
	\definecolor{yellowDark}{RGB}{255, 183, 0}
	\definecolor{yellowLight}{RGB}{255, 204, 77}
	\definecolor{greenDark}{RGB}{0, 143, 53}
	\definecolor{greenLight}{RGB}{42, 189, 97}
	
	\colorlet{greenFaint}{green!10!white}
	\colorlet{redFaint}{red!10!white}

	\definecolor{redText}{RGB}{222, 2, 10}
	\definecolor{orangeText}{RGB}{245, 86, 0}
	\definecolor{greenText}{RGB}{20,125,50}
	\definecolor{blueText}{RGB}{0, 114, 190}
	\definecolor{purpleText}{RGB}{115, 38, 146}
	\definecolor{pinkText}{RGB}{255, 107, 250}

	\definecolor{ballblue}{rgb}{0.13, 0.67, 0.8}
	\definecolor{buff}{rgb}{0.94, 0.86, 0.51}
	\definecolor{bronze}{rgb}{0.93,0.53,0.18}
	
	\definecolor{matlabCol1}{rgb}{0.0000,0.4470,0.7410}
	\definecolor{matlabCol2}{rgb}{0.8500,0.3250,0.0980}
	\definecolor{matlabCol3}{rgb}{0.9290,0.6940,0.1250}
	\definecolor{matlabCol4}{rgb}{0.4940,0.1840,0.5560}
	\definecolor{matlabCol5}{rgb}{0.4660,0.6740,0.1880}
	\definecolor{matlabCol6}{rgb}{0.3010,0.7450,0.9330}
	\definecolor{matlabCol7}{rgb}{0.6350,0.0780,0.1840}
	\definecolor{matlabCol8}{rgb}{0.0000,0.0000,0.0000}
	

	\newcommand{%
		\tikzsetnextfilename{}%
		\input{}%
	}[1]{%
		\tikzsetnextfilename{#1}%
		\input{#1}%
	}


	\newcommand{\autorefapp}[1]{\hyperref[#1]{Appendix~\ref*{#1}}}
	

	
	\addto\extrasUKenglish{%
	}

	\addto\extrasUSenglish{%
	}

	\makeatletter
	\newcommand\autorefMulti[1]{\@first@ref#1,@}
	\def\@throw@dot#1.#2@{#1}
	\def\@set@refname#1{
		\edef\@tmp{\getrefbykeydefault{#1}{anchor}{}}%
		\xdef\@tmp{\expandafter\@throw@dot\@tmp.@}%
		\ltx@IfUndefined{\@tmp autorefnameplural}%
			{\def\@refname{\@nameuse{\@tmp autorefname}s}}%
			{\def\@refname{\@nameuse{\@tmp autorefnameplural}}}%
	}
	\def\@first@ref#1,#2{%
		\ifx#2@\autoref{#1}\let\@nextref\@gobble
		\else%
			\@set@refname{#1}
			\@refname~\ref{#1}
			\let\@nextref\@next@ref
		\fi%
		\@nextref#2%
	}
	\def\@next@ref#1,#2{%
		\ifx#2@ and~\ref{#1}\let\@nextref\@gobble
		\else, \ref{#1}
		\fi%
		\@nextref#2%
	}
	\makeatother


	\newcommand{\Matlab}{{\rm \sc Matlab}\xspace}			
	\newcommand{\Simulink}{{\rm \sc Simulink}\xspace}		
	\newcommand{\Simscape}{{\rm \sc Simscape}\xspace}		
	
	\DeclareSIUnit{\pu}{pu}							
	\DeclareSIUnit{\VAR}{\volt\ampere{}R}			
	
	\newcommand{\addWithPreComma}[1]{%
		\if\relax #1\relax
		\else%
		,#1
		\fi%
	}
	\newcommand{\addWithPostComma}[1]{%
		\if\relax #1\relax
		\else%
		#1,
		\fi%
	}
	\newcommand{\addInParentheses}[1]{%
	\if\relax #1\relax
	\else%
		(#1)
	\fi%
	}


	 


\newacro{PI}[PI]{proportional-integral}
\newacro{DDA}[DDA]{dynamic distributed averaging}
\newacro{DGU}[DGU]{distributed generation unit}

\newacro{EIP}[EIP]{equilibrium-independent passivity}
\newacro{IFP}[IFP]{input-feedforward passive}
\newacro{OFP}[OFP]{output-feedback passive}
\newacro{IFOFP}[IF-OFP]{input-feedforward output-feedback passive}

\newacro{ZSO}[ZSO]{zero-state observable}
\newacro{ZSD}[ZSD]{zero-state detectable}

%
%




\newtheorem{theorem}{Theorem}
\newtheorem{definition}[theorem]{Definition}
\newtheorem{proposition}[theorem]{Proposition}

\newtheorem{remark}{Remark}
\newtheorem{objective}{Objective}
\newtheorem{assumption}{Assumption}



	\renewcommand{\vec}[1]{\bm{#1}}						
	\renewcommand{\matrix}[1]{\bm{#1}}					
	
	\newcommand{\oSP}[1]{{#1}^*}						
	\newcommand{\oRef}[2]{{#1}_{\mathrm{Ref}\addWithPreComma{#2}}}	
	\newcommand{\oErr}[1]{\tilde{#1}}					
	\newcommand{\oEq}[1]{\hat{#1}}						
	
	\newcommand{\Reals}{\ensuremath{\mathbb{R}}}		
	\newcommand{\RealsPos}{\ensuremath{\mathbb{R}_{+}}}	
	
	\newcommand{\classC}[1]{\ensuremath{C^{#1}}}		
	

	\newcommand{\vOneCol}[1][]{\vec{\mathds{1}}_{#1}}	
	\newcommand{\Ident}[1][]{\matrix{I}_{#1}}			
	
	\newcommand{\Ltwo}{\ensuremath{L_{2}}}				

	\newcommand{\Transpose}{T}							
	
	\newcommand{\hi}[1]{\overline{#1}}					
	\newcommand{\lo}[1]{\underline{#1}}					
	
	\newcommand{\posDef}{\ensuremath{\succ}}			
	\newcommand{\negDef}{\ensuremath{\prec}}			
	\newcommand{\negSemiDef}{\ensuremath{\preccurlyeq}}	
	
	
	
	\DeclareMathOperator{\Diag}{Diag}					
	
	\newcommand{\dFull}[2]{\dfrac{\text{d} #1}{\text{d} #2}}					
	
	
	\newlength\mytemplena
	\newlength\mytemplenb
	\DeclareDocumentCommand\myalignalign{sm}%
	{%
		\settowidth{\mytemplena}{$\displaystyle #2$}%
		\setlength\mytemplenb{\widthof{$\displaystyle=$}/2}%
		\hskip-\mytemplena%
		\hskip\IfBooleanTF#1{-\mytemplenb}{+\mytemplenb}%
	}



	\newcommand{\denoteAct}{\alpha}
	\newcommand{\denoteUnact}{\beta}

	\newcommand{\denoteNLweight}{\mathrm{w}}
	\newcommand{\denoteCtrl}{\mathrm{c}}
	\newcommand{\denoteDGU}{\mathrm{d}}
	\newcommand{\denoteDDA}{\mathrm{a}}
	
	\newcommand{\denoteLoad}{\mathrm{L}}
	\newcommand{\denoteMG}{\mathrm{M}}

	\newcommand{\denoteTx}{\mathrm{t}}
	\newcommand{\denoteVSC}{\mathrm{s}}
	\newcommand{\denoteCrit}{\mathrm{crit}}
	
	\newcommand{\denoteComm}{\mathrm{C}}
	\newcommand{\denotePhys}{\mathrm{P}}
	

\newcommand{\sAct}[1][]{\alpha_{#1}}

\newcommand{\saNLweight}[1][]{a_{\denoteNLweight\addWithPreComma{#1}}}

\newcommand{\mA}[1][]{\matrix{A}_{#1}}
\newcommand{\mAT}[1][]{\matrix{A}_{#1}^\Transpose}

\newcommand{\mADGU}[1][]{\mA[\denoteDGU\addWithPreComma{#1}]}
\newcommand{\mADGUT}[1][]{\mAT[\denoteDGU\addWithPreComma{#1}]}

\newcommand{\sbNLweight}[1][]{b_{\denoteNLweight\addWithPreComma{#1}}}

\newcommand{\vb}[1][]{\vec{b}_{#1}}
\newcommand{\vbT}[1][]{\vec{b}_{#1}^\Transpose}

\newcommand{\vbDGU}[1][]{\vb[\denoteDGU\addWithPreComma{#1}]}
\newcommand{\vbDGUT}[1][]{\vbT[\denoteDGU\addWithPreComma{#1}]}


\newcommand{\scNLweight}[1][]{c_{\denoteNLweight\addWithPreComma{#1}}}

\newcommand{\sclo}[1][]{\lo{c}_{#1}}
\newcommand{\schi}[1][]{\hi{c}_{#1}}

\newcommand{\scLoadlo}[1][]{\sclo[\denoteLoad\addWithPreComma{#1}]}

\newcommand{\vc}[1][]{\vec{c}_{#1}}
\newcommand{\vcT}[1][]{\vec{c}_{#1}^\Transpose}

\newcommand{\vcDGU}[1][]{\vc[\denoteDGU\addWithPreComma{#1}]}
\newcommand{\vcDGUT}[1][]{\vcT[\denoteDGU\addWithPreComma{#1}]}

\newcommand{\sC}[1][]{C_{#1}}
\newcommand{\sCeq}[1][]{C_{\text{eq}\addWithPreComma{#1}}}

\newcommand{\sd}[1][]{d_{#1}}
\newcommand{\vd}[1][]{\vec{d}_{#1}}
\newcommand{\vdT}[1][]{\vec{d}_{#1}^\Transpose}

\newcommand{\mD}[1][]{\matrix{D}_{#1}}


\newcommand{\se}[1][]{e_{#1}}
\newcommand{\see}[1][]{\oErr{e}_{#1}}
\newcommand{\seeq}[1][]{\oEq{e}_{#1}}
\newcommand{\sedot}[1][]{\dot{e}_{#1}}
\newcommand{\seedot}[1][]{\dot{\oErr{e}}_{#1}}

\newcommand{\seDGU}[1][]{\se[\denoteDGU\addWithPreComma{#1}]}
\newcommand{\seDGUe}[1][]{\see[\denoteDGU\addWithPreComma{#1}]}
\newcommand{\seDGUeq}[1][]{\seeq[\denoteDGU\addWithPreComma{#1}]}
\newcommand{\seDGUdot}[1][]{\sedot[\denoteDGU\addWithPreComma{#1}]}
\newcommand{\seDGUedot}[1][]{\seedot[\denoteDGU\addWithPreComma{#1}]}

\newcommand{\vE}[1][]{\vec{e}_{#1}}
\newcommand{\vET}[1][]{\vec{e}_{#1}^\Transpose}

\newcommand{\vEP}[1][]{\vE[\denotePhys\addWithPreComma{#1}]}
\newcommand{\vEPT}[1][]{\vET[\denotePhys\addWithPreComma{#1}]}

\newcommand{\mE}[1][]{\matrix{E}_{#1}}
\newcommand{\mET}[1][]{\matrix{E}_{#1}^\Transpose}
\newcommand{\mEAct}[1][]{\mE[\denoteAct\addWithPreComma{#1}]}
\newcommand{\mEUnact}[1][]{\mE[\denoteUnact\addWithPreComma{#1}]}

\newcommand{\mEP}[1][]{\mE[\denotePhys\addWithPreComma{#1}]}

\newcommand{\graphE}[1][]{\mathcal{E}_{#1}}
\newcommand{\graphEComm}[1][]{\graphE[\denoteComm\addWithPreComma{#1}]}
\newcommand{\graphEPhys}[1][]{\graphE[\denotePhys\addWithPreComma{#1}]}

\newcommand{\vf}[1][]{\vec{f}_{#1}}

\newcommand{\siLtwo}[1][]{\gamma_{\Ltwo\addWithPreComma{#1}}}

\newcommand{\sgDDA}[1][]{\gamma_{\denoteDDA\addWithPreComma{#1}}}

\newcommand{\sg}[1][]{g_{#1}}

\newcommand{\sgNLweight}[1][]{\sg[\denoteNLweight\addWithPreComma{#1}]}


\newcommand{\graphG}[1][]{\mathcal{G}_{#1}}
\newcommand{\graphGComm}[1][]{\graphG[\denoteComm\addWithPreComma{#1}]}
\newcommand{\graphGPhys}[1][]{\graphG[\denotePhys\addWithPreComma{#1}]}

\newcommand{\sh}[1][]{h_{#1}}
\newcommand{\vh}[1][]{\vec{h}_{#1}}

\newcommand{\shNLweight}[1][]{\sh[\denoteNLweight\addWithPreComma{#1}]}

\newcommand{\she}[1][]{\oErr{h}_{#1}}

\newcommand{\mH}[1][]{\matrix{H}_{#1}}

\newcommand{\sii}[1][]{i_{#1}}
\newcommand{\sidot}[1][]{\dot{i}_{#1}}
\newcommand{\vi}[1][]{\vec{i}_{#1}}

\newcommand{\sie}[1][]{\oErr{i}_{#1}}
\newcommand{\siedot}[1][]{\dot{\oErr{i}}_{#1}}
\newcommand{\vie}[1][]{\oErr{\vec{i}}_{#1}}

\newcommand{\vieT}[1][]{\oErr{\vec{i}}_{#1}^\Transpose}
\newcommand{\sieq}[1][]{\oEq{i}_{#1}}
\newcommand{\vieq}[1][]{\oEq{\vec{i}}_{#1}}

\newcommand{\siTx}[1][]{\sii[\denoteTx\addWithPreComma{#1}]}
\newcommand{\siTxdot}[1][]{\sidot[\denoteTx\addWithPreComma{#1}]}
\newcommand{\viTx}[1][]{\vi[\denoteTx\addWithPreComma{#1}]}

\newcommand{\sieTx}[1][]{\sie[\denoteTx\addWithPreComma{#1}]}
\newcommand{\sieTxdot}[1][]{\siedot[\denoteTx\addWithPreComma{#1}]}
\newcommand{\vieTx}[1][]{\vie[\denoteTx\addWithPreComma{#1}]}

\newcommand{\vieTxT}[1][]{\vieT[\denoteTx\addWithPreComma{#1}]}

\newcommand{\sieqTx}[1][]{\sieq[\denoteTx\addWithPreComma{#1}]}
\newcommand{\vieqTx}[1][]{\vieq[\denoteTx\addWithPreComma{#1}]}

\newcommand{\sI}[1][]{I_{#1}}


\newcommand{\setIeq}[1][]{\oEq{\mathcal{I}}_{#1}}

\newcommand{\sIL}[1][]{I_{\denoteLoad\addWithPreComma{#1}}}
\newcommand{\sILv}[1][]{\sIL[#1](\sv[#1])}

\newcommand{\sILe}[1][]{\oErr{I}_{\denoteLoad\addWithPreComma{#1}}}

\newcommand{\sILeve}[1][]{\sILe[#1](\sve[#1])}
\newcommand{\sILveq}[1][]{\sIL[#1](\sveq[#1])}

\newcommand{\setJ}[1][]{\mathcal{J}_{#1}}

\newcommand{\sk}[1][]{k_{#1}}
\newcommand{\skp}[1][]{\sk[#1]^P}
\newcommand{\ski}[1][]{\sk[#1]^I}

\newcommand{\skpCtrl}[1][]{\skp[\denoteCtrl\addWithPreComma{#1}]}
\newcommand{\skiCtrl}[1][]{\ski[\denoteCtrl\addWithPreComma{#1}]}

\newcommand{\skpDDA}[1][]{\skp[\denoteDDA\addWithPreComma{#1}]}
\newcommand{\skiDDA}[1][]{\ski[\denoteDDA\addWithPreComma{#1}]}

\newcommand{\skpDGU}[1][]{\skp[\denoteDGU\addWithPreComma{#1}]}
\newcommand{\skiDGU}[1][]{\ski[\denoteDGU\addWithPreComma{#1}]}

\newcommand{\skLoop}[1][]{\kappa_{#1}}

\newcommand{\skLoopiCtrl}[1][]{\skLoop[\denoteCtrl\addWithPreComma{#1}]^I}






\newcommand{\sL}[1][]{L_{#1}}

\newcommand{\mLap}[1][]{\matrix{\mathcal{L}}_{#1}}
\newcommand{\mLapComm}[1][]{\mLap[\denoteComm\addWithPreComma{#1}]}
\newcommand{\mLapCommT}[1][]{\mLap[\denoteComm\addWithPreComma{#1}]^\Transpose}


\newcommand{\setM}[1][]{\mathcal{M}_{#1}}

\newcommand{\siIn}[1][]{\nu_{#1}}

\newcommand{\siInDGU}[1][]{\siIn[\denoteDGU\addWithPreComma{#1}]}
\newcommand{\siInLoad}[1][]{\siIn[\denoteLoad\addWithPreComma{#1}]}

\newcommand{\siInCtrl}[1][]{\siIn[\denoteCtrl\addWithPreComma{#1}]}
\newcommand{\siInNLweight}[1][]{\siIn[\denoteNLweight\addWithPreComma{#1}]}

\newcommand{\setN}[1][]{\mathcal{N}_{#1}}
\newcommand{\setNAct}[1][]{\setN[\denoteAct\addWithPreComma{#1}]}
\newcommand{\setNUnact}[1][]{\setN[\denoteUnact\addWithPreComma{#1}]}

\newcommand{\spp}[1][]{p_{#1}}
\newcommand{\spSP}[1][]{\oSP{p}_{#1}}

\newcommand{\speSP}[1][]{\oSP{\oErr{p}}_{#1}}
\newcommand{\speqSP}[1][]{\oSP{\oEq{p}}_{#1}}

\newcommand{\vpSP}[1][]{\oSP{\vec{p}}_{#1}}

\newcommand{\vpeSP}[1][]{\oSP{\oErr{\vec{p}}}_{#1}}
\newcommand{\vpeSPT}[1][]{{\oSP{\oErr{\vec{p}}}_{#1}{}}^\Transpose}
\newcommand{\vpeqSP}[1][]{\oSP{\oEq{\vec{p}}}_{#1}}

\newcommand{\speSPAct}[1][]{\speSP[\denoteAct\addWithPreComma{#1}]}
\newcommand{\vpeSPAct}[1][]{\vpeSP[\denoteAct\addWithPreComma{#1}]}
\newcommand{\vpeSPActT}[1][]{\vpeSPT[\denoteAct\addWithPreComma{#1}]}
\newcommand{\speSPUnact}[1][]{\speSP[\denoteUnact\addWithPreComma{#1}]}
\newcommand{\vpeSPUnact}[1][]{\vpeSP[\denoteUnact\addWithPreComma{#1}]}
\newcommand{\vpeSPUnactT}[1][]{\vpeSPT[\denoteUnact\addWithPreComma{#1}]}

\newcommand{\sP}[1][]{P_{#1}}

\newcommand{\mP}[1][]{\matrix{P}_{#1}}

\newcommand{\mPDGU}[1][]{\mP[\denoteDGU\addWithPreComma{#1}]}

\newcommand{\mQ}[1][]{\matrix{Q}_{#1}}
\newcommand{\sQ}[1][]{q_{#1}}

\newcommand{\mQDGU}[1][]{\mQ[\denoteDGU\addWithPreComma{#1}]}

\newcommand{\siOut}[1][]{\rho_{#1}}

\newcommand{\siOutDGU}[1][]{\siOut[\denoteDGU\addWithPreComma{#1}]}
\newcommand{\siOutLoad}[1][]{\siOut[\denoteLoad\addWithPreComma{#1}]}
\newcommand{\siOutTx}[1][]{\siOut[\denoteTx\addWithPreComma{#1}]}

\newcommand{\siOutDDA}[1][]{\siOut[\denoteDDA\addWithPreComma{#1}]}

\newcommand{\siOutCtrl}[1][]{\siOut[\denoteCtrl\addWithPreComma{#1}]}
\newcommand{\siOutNLweight}[1][]{\siOut[\denoteNLweight\addWithPreComma{#1}]}

\newcommand{\sR}[1][]{R_{#1}}

\newcommand{\sRDGUdamp}[1][]{\tilde{R}_{#1}}

\newcommand{\sS}[1][]{S_{#1}}
\newcommand{\sSdot}[1][]{\dot{S}_{#1}}

\newcommand{\sSDGU}[1][]{\sS[\denoteDGU\addWithPreComma{#1}]}
\newcommand{\sSDGUdot}[1][]{\sSdot[\denoteDGU\addWithPreComma{#1}]}
\newcommand{\sSLoad}[1][]{\sS[\denoteLoad\addWithPreComma{#1}]}
\newcommand{\sSLoaddot}[1][]{\sSdot[\denoteLoad\addWithPreComma{#1}]}
\newcommand{\sSTx}[1][]{\sS[\denoteTx\addWithPreComma{#1}]}
\newcommand{\sSTxdot}[1][]{\sSdot[\denoteTx\addWithPreComma{#1}]}
\newcommand{\sSMG}[1][]{\sS[\denoteMG\addWithPreComma{#1}]}
\newcommand{\sSMGdot}[1][]{\sSdot[\denoteMG\addWithPreComma{#1}]}

\newcommand{\sSCtrl}[1][]{\sS[\denoteCtrl\addWithPreComma{#1}]}
\newcommand{\sSCtrldot}[1][]{\sSdot[\denoteCtrl\addWithPreComma{#1}]}

\newcommand{\sSDDA}[1][]{\sS[\denoteDDA\addWithPreComma{#1}]}

\newcommand{\sSDDAdot}[1][]{\sSdot[\denoteDDA\addWithPreComma{#1}]}

\newcommand{\setS}[1][]{\mathcal{S}_{#1}}

\newcommand{\siCross}[1][]{\sigma_{#1}}

\newcommand{\siCrossCtrl}[1][]{\siCross[\denoteCtrl\addWithPreComma{#1}]}
\newcommand{\siCrossDGU}[1][]{\siCross[\denoteDGU\addWithPreComma{#1}]}
\newcommand{\siCrossNLweight}[1][]{\siCross[\denoteNLweight\addWithPreComma{#1}]}

\newcommand{\vdist}[1][]{\vec{\tau}_{#1}}

\newcommand{\vdistDGU}[1][]{\vdist[\denoteDGU\addWithPreComma{#1}]}

\newcommand{\su}[1][]{u_{#1}}
\newcommand{\vu}[1][]{\vec{u}_{#1}}
\newcommand{\vuT}[1][]{\vec{u}_{#1}^\Transpose}

\newcommand{\sue}[1][]{\oErr{u}_{#1}}
\newcommand{\sueq}[1][]{\oEq{u}_{#1}}
\newcommand{\vue}[1][]{\oErr{\vec{u}}_{#1}}

\newcommand{\vueq}[1][]{\oEq{\vec{u}}_{#1}}

\newcommand{\suNLweight}[1][]{\su[\denoteNLweight\addWithPreComma{#1}]}
\newcommand{\vuNLweight}[1][]{\vu[\denoteNLweight\addWithPreComma{#1}]}

\newcommand{\suCtrl}[1][]{\su[\denoteCtrl\addWithPreComma{#1}]}
\newcommand{\vuCtrl}[1][]{\vu[\denoteCtrl\addWithPreComma{#1}]}
\newcommand{\vuCtrle}[1][]{\vue[\denoteCtrl\addWithPreComma{#1}]}
\newcommand{\vuCtrlT}[1][]{\vuT[\denoteCtrl\addWithPreComma{#1}]}

\newcommand{\suDDA}[1][]{\su[\denoteDDA\addWithPreComma{#1}]}
\newcommand{\vuDDA}[1][]{\vu[\denoteDDA\addWithPreComma{#1}]}
\newcommand{\vuDDAe}[1][]{\vue[\denoteDDA\addWithPreComma{#1}]}
\newcommand{\vuDDAT}[1][]{\vuT[\denoteDDA\addWithPreComma{#1}]}

\newcommand{\setU}[1][]{\mathcal{U}_{#1}}

\newcommand{\sv}[1][]{v_{#1}}
\newcommand{\svdot}[1][]{\dot{v}_{#1}}
\newcommand{\vv}[1][]{\vec{v}_{#1}}

\newcommand{\sve}[1][]{\oErr{v}_{#1}}
\newcommand{\svedot}[1][]{\dot{\oErr{v}}_{#1}}
\newcommand{\vve}[1][]{\oErr{\vec{v}}_{#1}}

\newcommand{\vveT}[1][]{\oErr{\vec{v}}_{#1}^\Transpose}

\newcommand{\sveAct}[1][]{\sve[\denoteAct\addWithPreComma{#1}]}
\newcommand{\vveAct}[1][]{\vve[\denoteAct\addWithPreComma{#1}]}
\newcommand{\vveActT}[1][]{\vveT[\denoteAct\addWithPreComma{#1}]}
\newcommand{\sveUnact}[1][]{\sve[\denoteUnact\addWithPreComma{#1}]}
\newcommand{\vveUnact}[1][]{\vve[\denoteUnact\addWithPreComma{#1}]}
\newcommand{\vveUnactT}[1][]{\vveT[\denoteUnact\addWithPreComma{#1}]}

\newcommand{\sveq}[1][]{\oEq{v}_{#1}}
\newcommand{\vveq}[1][]{\oEq{\vec{v}}_{#1}}

\newcommand{\svRef}[1][]{\oRef{v}{#1}}
\newcommand{\svCrit}{\sv[\denoteCrit]}

\newcommand{\svVSC}[1][]{\sv[\denoteVSC\addWithPreComma{#1}]}

\newcommand{\setV}[1][]{\mathcal{V}_{#1}}


\newcommand{\sw}[1][]{w_{#1}}
\newcommand{\swlo}[1][]{\lo{w}_{#1}}
\newcommand{\swhi}[1][]{\hi{w}_{#1}}

\newcommand{\swCtrl}[1][]{\sw[\denoteCtrl\addWithPreComma{#1}]}
\newcommand{\swDDA}[1][]{\sw[\denoteDDA\addWithPreComma{#1}]}

\newcommand{\swDGU}[1][]{\sw[\denoteDGU\addWithPreComma{#1}]}
\newcommand{\swLoad}[1][]{\sw[\denoteLoad\addWithPreComma{#1}]}
\newcommand{\swTx}[1][]{\sw[\denoteTx\addWithPreComma{#1}]}
\newcommand{\swMG}[1][]{\sw[\denoteMG\addWithPreComma{#1}]}
\newcommand{\swMGdep}[1][]{\sw[\denoteMG,\denoteAct\denoteUnact\addWithPreComma{#1}]}
\newcommand{\swMGAct}[1][]{\sw[\denoteMG,\denoteAct\addWithPreComma{#1}]}
\newcommand{\swMGActlo}[1][]{\swlo[\denoteMG,\denoteAct\addWithPreComma{#1}]}
\newcommand{\swMGUnact}[1][]{\sw[\denoteMG,\denoteUnact\addWithPreComma{#1}]}
\newcommand{\swMGUnacthi}[1][]{\swhi[\denoteMG,\denoteUnact\addWithPreComma{#1}]}

\newcommand{\mW}[1][]{\matrix{W}_{#1}}

\newcommand{\sx}[1][]{x_{#1}}
\newcommand{\sxSP}[1][]{\oSP{x}_{#1}}
\newcommand{\sxdot}[1][]{\dot{x}_{#1}}
\newcommand{\vx}[1][]{\vec{x}_{#1}}
\newcommand{\vxT}[1][]{\vec{x}_{#1}^\Transpose}
\newcommand{\vxdot}[1][]{\dot{\vec{x}}_{#1}}

\newcommand{\sxe}[1][]{\oErr{x}_{#1}}
\newcommand{\sxeq}[1][]{\oEq{x}_{#1}}

\newcommand{\vxe}[1][]{\oErr{\vec{x}}_{#1}}

\newcommand{\vxeq}[1][]{\oEq{\vec{x}}_{#1}}

\newcommand{\vxeDGU}[1][]{\vxe[\denoteDGU\addWithPreComma{#1}]}

\newcommand{\sxCtrl}[1][]{\sx[\denoteCtrl\addWithPreComma{#1}]}
\newcommand{\sxCtrldot}[1][]{\sxdot[\denoteCtrl\addWithPreComma{#1}]}
\newcommand{\vxCtrl}[1][]{\vx[\denoteCtrl\addWithPreComma{#1}]}
\newcommand{\vxCtrlT}[1][]{\vxT[\denoteCtrl\addWithPreComma{#1}]}
\newcommand{\vxCtrldot}[1][]{\vxdot[\denoteCtrl\addWithPreComma{#1}]}

\newcommand{\vxDDA}[1][]{\vx[\denoteDDA\addWithPreComma{#1}]}

\newcommand{\vxDDAdot}[1][]{\vxdot[\denoteDDA\addWithPreComma{#1}]}


\newcommand{\setXeq}[1][]{\oEq{\mathcal{X}}_{#1}}


\newcommand{\sy}[1][]{y_{#1}}
\newcommand{\vy}[1][]{\vec{y}_{#1}}
\newcommand{\vyT}[1][]{\vec{y}_{#1}^\Transpose}

\newcommand{\sye}[1][]{\oErr{y}_{#1}}
\newcommand{\syeq}[1][]{\oEq{y}_{#1}}
\newcommand{\vye}[1][]{\oErr{\vec{y}}_{#1}}

\newcommand{\vyeq}[1][]{\oEq{\vec{y}}_{#1}}

\newcommand{\syNLweight}[1][]{\sy[\denoteNLweight\addWithPreComma{#1}]}

\newcommand{\syCtrl}[1][]{\sy[\denoteCtrl\addWithPreComma{#1}]}
\newcommand{\vyCtrl}[1][]{\vy[\denoteCtrl\addWithPreComma{#1}]}
\newcommand{\vyCtrle}[1][]{\vye[\denoteCtrl\addWithPreComma{#1}]}
\newcommand{\vyCtrlT}[1][]{\vyT[\denoteCtrl\addWithPreComma{#1}]}
\newcommand{\vyCtrlLoop}[1][]{\vy[\denoteCtrl\addWithPreComma{#1}]^\kappa}

\newcommand{\syDDA}[1][]{\sy[\denoteDDA\addWithPreComma{#1}]}
\newcommand{\vyDDA}[1][]{\vy[\denoteDDA\addWithPreComma{#1}]}
\newcommand{\vyDDAe}[1][]{\vye[\denoteDDA\addWithPreComma{#1}]}

\newcommand{\setY}[1][]{\mathcal{Y}_{#1}}


\newcommand{\vz}[1][]{\vec{z}_{#1}}

\newcommand{\vzdot}[1][]{\dot{\vec{z}}_{#1}}

\newcommand{\vzDDA}[1][]{\vz[\denoteDDA\addWithPreComma{#1}]}

\newcommand{\vzDDAdot}[1][]{\vzdot[\denoteDDA\addWithPreComma{#1}]}


\newcommand{\sZinv}[1][]{Z_{#1}^{-1}}

\newcommand{\sZCritinv}[1][]{\sZinv[\denoteCrit\addWithPreComma{#1}]}

\newcommand{\sdampCtrl}[1][]{\zeta_{\denoteCtrl\addWithPreComma{#1}}}

\graphicspath{{./03_Img/}}


\def\BibTeX{{\rm B\kern-.05em{\sc i\kern-.025em b}\kern-.08em
    T\kern-.1667em\lower.7ex\hbox{E}\kern-.125emX}}

\begin{document}
	
	\title{Passivity-based power sharing and voltage regulation in DC microgrids with unactuated buses}
	\author{Albertus Johannes Malan, Pol Jané-Soniera, Felix Strehle, and S{\"o}ren Hohmann
		\thanks{
		This work was supported in part by Germany’s Federal Ministry for Economic Affairs and Climate Action (BMWK) through the RegEnZell project (reference number 0350062C).
		(\emph{Corresponding author: A.\ J.\ Malan.})}
		\thanks{A.\ J.\ Malan, P.\ Jané-Soniera, F.\ Strehle, and S.\ Hohmann are with the Institute of Control Systems (IRS), Karlsruhe Institute of Technology (KIT), 76131, Karlsruhe, Germany. Emails: albertus.malan@kit.edu, pol.soneira@kit.edu, felix.strehle@kit.edu, soeren.hohmann@kit.edu.}
	}


	\maketitle
	
	\begin{abstract}
%
In this paper, we propose a novel four-stage distributed controller for a DC microgrid that achieves power sharing and average voltage regulation for the voltages at actuated and unactuated buses.
The controller is presented for a DC microgrid comprising multiple distributed generating units (DGUs) with time-varying actuation states; dynamic RLC lines; nonlinear constant impedance, current and power (ZIP) loads and a time-varying network topology. 
The controller comprising a nonlinear gain, PI controllers, and two dynamic distributed averaging stages is designed for asymptotic stability.
This constitutes first deriving passivity properties for the DC microgrid, along with each of the controller subsystems. Thereafter, design parameters are found through a passivity-based optimisation using the worst-case subsystem properties.
The resulting closed-loop is robust against DGU actuation changes, network topology changes, and microgrid parameter changes.
The stability and robustness of the proposed control is verified via simulations.

%
%
%

\end{abstract}

	\begin{IEEEkeywords}
		DC microgrids, distributed control, passivity, power sharing, voltage regulation.
	\end{IEEEkeywords}
	
	\section{Introduction} \label{sec:Introduction}
%
\IEEEPARstart{T}{he advent} of localised power generation and storage increasingly challenges the prevailing centralised power-generation structures. Originally proposed in \cite{Lasseter2001}, the \emph{microgrids} paradigm envisions networks that can operate autonomously through advanced control while meeting consumer requirements. Although current electrical grids predominantly use AC, high and low voltage DC networks have been made technically feasible due to the continual improvements of power electronics. Indeed, DC microgrids exhibit significant advantages over their AC counterparts, demonstrating a higher efficiency and power quality while simultaneously being simpler to regulate \cite{Justo2013, Meng2017}.

In microgrids, power generation and storage units are typically grouped into \acp{DGU} which connect to the microgrid through a single DC-DC converter for higher efficiency \cite{Justo2013}. This changes the traditionally centralised regulation problem in power grids into a problem of coordinating the \acs{DGU} connected throughout the microgrid. This coordination is generally realised as \emph{average} or \emph{global voltage regulation} in combination with \emph{load sharing} between the \acp{DGU} (see e.g.\ \cite{Nasirian2015, Tucci2018, Zhao2015}).

\paragraph*{Literature Review}
%
A vast number of approaches have been proposed for the voltage regulation and load sharing of DC microgrids, as detailed in the overview papers \cite{Dragicevic2016, Meng2017, Kumar2019} along with the sources therein. These approaches are broadly categorised as either centralised, decentralised or distributed in nature \cite{Dragicevic2016, Meng2017, Kumar2019}. While centralised controllers can optimally coordinate the \acp{DGU}, they offer reduced scalability and flexibility and have a single point of failure \cite{Kumar2019}. On the other hand, decentralised controllers either only attempt to achieve voltage stability \cite{Tucci2016, Strehle2020DC, Cucuzzella2023} or achieve load sharing at the cost of voltage regulation quality (e.g.\ the droop-based approaches in \cite{Meng2017}).

In response to these limitations, numerous controllers for voltage regulation and load sharing which operate in a distributed manner have been proposed \cite{Nasirian2015, Tucci2018, Trip2019, Cucuzzella2019cons, Sadabadi2022, Han2018, Nahata2020, Nahata2022, Zhao2015, DePersis2017, Fan2019, Cucuzzella2019pbc}. In \cite{Nasirian2015}, distributed averaging is employed to find a global voltage estimate with which voltage regulation is achieved, but the microgrid dynamics are neglected in the stability analysis. Distributed averaging with dynamic microgrid models is used in \cite{Tucci2018, Trip2019}, although \cite{Tucci2018} requires LMIs to be solved before buses are allowed to connect whereas \cite{Trip2019} only considers constant current loads. Similarly, a sliding-mode controller is proposed in \cite{Cucuzzella2019cons} for a dynamic microgrid with constant current loads. On the other hand, \cite{Sadabadi2022} proposes a cyberattack-resilient controller for a microgrid with constant conductance loads and resistive lines. A consensus-based distributed controller with event-triggered communication is presented in \cite{Han2018}. Consensus-based controllers are also utilised in \cite{Nahata2020, Nahata2022, Zhao2015}, where \cite{Zhao2015} uses a consensus-based integral layer on top of a droop-based controller. Finally, while many contributions strive to achieve proportional current sharing \cite{Nasirian2015, Tucci2018, Trip2019, Cucuzzella2019cons, Sadabadi2022, Han2018, Nahata2020, Nahata2022, Zhao2015, Cucuzzella2019pbc}, nonlinear controllers that achieve proportional power sharing have also been proposed in \cite{DePersis2017,Fan2019}.

While the literature listed above differ greatly in their approaches, we note a commonality in their omission of buses without actuation. This omission is typically motivated either by considering a microgrid comprising only \emph{actuated} \ac{DGU} buses \cite{Nasirian2015, Tucci2018, Nahata2020, Nahata2022}, or by eliminating the \emph{unactuated} buses with the Kron-reduction \cite{Trip2019, Cucuzzella2019cons, Sadabadi2022, Han2018, Zhao2015, DePersis2017, Fan2019, Cucuzzella2019pbc}. However, considering a network comprising only actuated buses severely limits the flexibility of a microgrid, since each bus must be able to supply or consume enough power at all times. On the other hand, the Kron-reduction requires loads to be described as positive conductances (see e.g.\ \cite{Dorfler2013}). While research into Kron-reduced networks with negative loads is ongoing (see e.g.\ \cite{Chen2021}), the general inclusion of negative loads, e.g.\ non-controllable power sources, in Kron-reducible networks remains out of reach at present. Furthermore, consider the case where a \ac{DGU} can no longer supply or consume the required amount of power, e.g.\ a fully charged or discharged battery storage. Such a \ac{DGU} then loses the ability to regulate itself and fully support the grid. In the approaches considered above \cite{Nasirian2015, Tucci2018, Trip2019, Cucuzzella2019cons, Sadabadi2022, Han2018, Nahata2020, Nahata2022, Zhao2015, DePersis2017, Fan2019, Cucuzzella2019pbc}, such a \ac{DGU} is forced to disconnect from the microgrid and its local measurements are discarded. For \acp{DGU} with intermittent power sources, this could result in significant swings in the number of controlled and observed buses in the microgrid.

\paragraph*{Main Contribution}
In this paper, we consider a DC microgrid as a physically interconnected multi-agent system. Extending our work in \cite{Malan2022a}\footnote{The controller proposed in \cite{Malan2022a} is extended by weighing the error with a nonlinear function. Moreover, in addition to applying the controller to a DC microgrid context, we here propose a new dissipativity-based analysis that investigates the closed loop stability analytically as opposed to the numerical results in \cite{Malan2022a}.}, we propose a four-stage controller that achieves voltage regulation and power sharing in a DC microgrid with actuated and unactuated buses in a distributed manner. The four-stage controller comprises a nonlinear weighting function, two \ac{DDA} stages and a \ac{PI} controller. The asymptotic stability of the closed loop comprising the DC microgrid and the four-stage controller interconnected in feedback is proven by means of passivity theory. In detail, the contributions comprise:
\begin{enumerate}
	\item A four-stage distributed controller for DC microgrids which achieves \emph{consensus} on the weighted average voltage error of actuated and unactuated buses and assures \emph{coordination} through power sharing at the actuated buses.
	\item A nonlinear weighting function that penalises voltage errors outside a given tolerance band more strongly than those within.
	\item Passivity classifications for each of the constitutive microgrid subsystems (\acp{DGU}, loads, and lines) and for each of the controller stages (weighting function, \ac{DDA}, and PI).
	\item A method for calculating the \ac{IFOFP} indices of the nonlinear power-controlled \acp{DGU} through optimisation.
	\item An \ac{IFOFP} formulation for the DC microgrid with a supply rate that is independent of the network topology, the number of buses and their states of actuation.
	\item A passivity-based stability analysis for the equilibrium of the DC microgrid connected in feedback with the four-stage controller. 
\end{enumerate}
In addition to the contributions listed above, we also contribute a theoretical result comprising a formalisation of the obstacle presented by cascaded \ac{IFP} and \ac{OFP} systems in the analysis of dissipative systems. This theoretical contribution informs and motivates parameter choices for the four-stage controller in Contribution 1.

We highlight that the proposed controller can achieve exact voltage regulation and power sharing with the stability verified with the eigenvalues of the linearised system. Moreover, by employing leaky PI controllers, we demonstrate a passivity-based stability analysis that is independent of and robust against changes in the communication topology, changes in the electrical topology, load changes, changes in the actuation status of \acp{DGU}, uncertainties in component parameters, and buses connecting or disconnecting.

\paragraph*{Paper Organisation}
The introduction concludes with some notation and preliminaries on graph theory. In \autoref{sec:Prelim}, we recall and introduce results relating to dissipativity theory. Next, in \autoref{sec:Problem}, the problem is modelled and objectives for the steady state are formalised. In \autoref{sec:Control}, a four-stage control structure is introduced that fulfils objectives from \autoref{sec:Problem}. Thereafter, the passivity properties of the constituent subsystems are investigated in \autoref{sec:Passivity} and the controller is designed for asymptotic stability of the closed loop in \autoref{sec:Stability}. Finally, in \autoref{sec:Simulation}, a simulation is used to verify the asymptotic stability and robustness of the closed loop. Concluding remarks are provided in \autoref{sec:Conclusion}.

\paragraph*{Notation and Preliminaries}
Define as a vector $\vec{a} = (a_k)$ and a matrix $\matrix{A} = (a_{kl})$. $\vOneCol[k]$ is a $k$-dimensional vector of ones and $\Ident[k]$ is the identity matrix of dimension $k$.
$\Diag[\cdot]$ creates a (block-)diagonal matrix from the supplied vectors (or matrices).
The upper and lower limits of a value $a$ are given by $\hi{a}$ and $\lo{a}$.
For a variable $\sx$, we denote its unknown steady state as $\sxeq$, its error state as $\sxe \coloneqq \sx - \sxeq$, and a desired setpoint as $\sxSP$.
Whenever clear from context, we omit the time dependence of variables.

We denote by $\graphG = (\setN, \graphE)$ a finite, weighted, undirected graph with vertices $\setN$ and edges $\graphE \subseteq \setN \times \setN$. Let $|\setN|$ be the cardinality of the set $\setN$. Let $\mLap$ be the \emph{Laplacian matrix} of $\graphG$.
By arbitrarily assigning directions to each edge in $\graphE$, the \emph{incidence matrix} $\mE \in \Reals^{|\setN|\times|\graphE|}$ of $\graphG$ is defined by
\begin{equation} \label{eq:Intro:Incidence}
	\se[kl] = \left\{\begin{array}{@{}rl}
		+1 & \text{if vertex $k$ is the sink of edge $l$}, \\
		-1 & \text{if vertex $k$ is the source of edge $l$}, \\
		0 & \text{otherwise}.
	\end{array}\right.
\end{equation}
	\section{Dissipativity Preliminaries} \label{sec:Prelim}
We here recall and introduce preliminaries of dissipativity theory for nonlinear systems. In \autoref{sec:Prelim:Diss_system} we provide definitions relating to dissipativity and passivity theory. Thereafter in \autoref{sec:Prelim:Diss_static}, we investigate the passivity properties of static functions. Finally, in \autoref{sec:Prelim:Diss_interconnect}, we recall a result on the interconnection of dissipative systems with quadratic supply rates and formalise a new result on the limitations of such an interconnection.
\subsection{Dissipative Systems} \label{sec:Prelim:Diss_system}
Consider a nonlinear system
\begin{equation} \label{eq:Prelim:NL_System}
	\left\{\begin{aligned}
		\vxdot &= \vf(\vx,\vu), \\
		\vy &= \vh(\vx),
	\end{aligned}\right.
\end{equation}
where $\vx \in \Reals^{n}$, $\vu \in \Reals^{m}$, $\vy \in \Reals^{m}$ and where $\vf\colon\Reals^{n}\times\Reals^{m}\rightarrow\Reals^{n}$ and $\vh\colon\Reals^{n}\times\Reals^{m}\rightarrow\Reals^{m}$ are class \classC{1} functions.
\begin{definition}[Dissipative system, c.f.\ \cite{vdSchaft2017,Arcak2006,Khalil2002}] \label{def:Prelim:disspativity}
	A system \eqref{eq:Prelim:NL_System} with a class \classC{1} storage function $\cramped{\sS \colon \Reals^n \times \Reals^m \rightarrow \RealsPos}$ is dissipative w.r.t.\ a supply rate $\sw(\vu,\vy)$ if $\sSdot \le \sw(\vu,\vy)$.
\end{definition}
\begin{definition}[Quadratic supply rates, c.f.\ \cite{vdSchaft2017,Arcak2006,Khalil2002}] \label{def:Prelim:passive_rates}
	A system \eqref{eq:Prelim:NL_System} that is dissipative w.r.t.\ $\sw(\vu,\vy)$ is
	\begin{itemize}
		\item passive if $\sw = \vuT \vy$,
		\item \acf{IFP} if $\sw = \vuT \vy - \siIn \vuT \vu$,
		\item \acf{OFP} if $\sw = \vuT \vy - \siOut \vyT \vy$,
		\item \acf{IFOFP} if $\sw = \cramped{(1 + \siIn \siOut)} \vuT \vy  - \siIn \vuT \vu - \siOut \vyT \vy$,
		\item has an \Ltwo{}-gain of $\siLtwo$ if $\sw = \siLtwo^2 \vuT \vu - \vyT \vy$, 
	\end{itemize}
	where $\siLtwo > 0$ and $\siIn, \siOut \in \Reals$.
\end{definition}
%
%
\begin{definition}[{\Ac{ZSO} \cite[p.~46]{vdSchaft2017}}] \label{def:Passive:ZSO}
	A system \eqref{eq:Prelim:NL_System} is \ac{ZSO} if $\vu \equiv \vec{0}$ and $\vy \equiv \vec{0}$ implies $\vx \equiv \vec{0}$.
\end{definition}
%
%
For cases where the desired equilibrium of a system is not at the origin but at some constant value, the shifted passivity \cite[p.~96]{vdSchaft2017} or \ac{EIP} \cite{Hines2011} of a system must be investigated. Naturally, this requires that an equilibrium exists, i.e.\ there is a unique input $\vueq \in \Reals^m$ for every equilibrium $\vxeq \in \setXeq \subset \Reals^n$ such that \eqref{eq:Prelim:NL_System} produces $\vf(\vxeq,\vueq) = 0$ and $\vyeq = \vh(\vxeq,\vueq)$ \cite[p.~24]{Arcak2016}.
\begin{definition}[{\Ac{EIP} \cite[p.~24]{Arcak2016}}] \label{def:Passive:EIP}
	A system \eqref{eq:Prelim:NL_System} is \ac{EIP} if there exists a class \classC{1} storage function $\sS(\vx, \vxeq, \vu)$, $\cramped{\sS \colon \Reals^n \times \setXeq \times \Reals^m \rightarrow \RealsPos}$, with $\sS(\vxeq, \vxeq, \vueq) = 0$, that is dissipative w.r.t.\ $\sw(\vu - \vueq, \vy - \vyeq)$ for any equilibrium $(\vueq, \vyeq)$.
\end{definition}
\subsection{Passive Static Functions} \label{sec:Prelim:Diss_static}
Recall that a sector-bounded static nonlinear function is dissipative to a supply rate defined by the sector bound \cite[Def.~6.2]{Khalil2002}. We now consider the arbitrarily shifted single-input single-output function
%
\begin{equation} \label{eq:Prelim:Static_function}
	\left\{
	\begin{aligned}
		\sy &= \sh(\su), \quad \su, \sueq \in \setU, \quad \sy,\syeq \in \setY, \quad \sh: \setU \rightarrow \setY, \\
		\sye &= \she(\sue) \coloneqq \sh(\su) - \sh(\sueq) = \sy - \syeq, \quad \sue \coloneqq \su - \sueq
	\end{aligned}
	\right.
\end{equation}
and show how its dissipativity properties may be derived.
\begin{proposition}[\ac{EIP} static functions] \label{prop:Prelim:Static_load}
	A static function \eqref{eq:Prelim:Static_function} of class \classC{0} is \ac{IFOFP}$(\sclo,1/\schi)$ w.r.t.\ the arbitrarily shifted input-output pair $(\sue, \sye)$ if
	\begin{equation} \label{eq:Prelim:Load_gradient}
		\sclo \le \dFull{\sh(\su)}{\su} \le \schi, \quad \forall \su \in \setU.
	\end{equation}
	and $0 < \schi < \infty$.
\end{proposition}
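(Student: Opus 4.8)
The plan is to exploit that the map \eqref{eq:Prelim:Static_function} is \emph{static} and possesses no internal state, so that any associated storage function is constant and $\sSdot = 0$. By \autoref{def:Prelim:disspativity}, the dissipation inequality $\sSdot \le \sw$ then collapses to the pointwise requirement $\sw(\sue, \sye) \ge 0$, and it suffices to verify this non-negativity for the \ac{IFOFP} supply rate of \autoref{def:Prelim:passive_rates}, evaluated at the shifted pair $(\sue, \sye)$ with the claimed indices $\siIn = \sclo$ and $\siOut = 1/\schi$.

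Inserting these indices into the scalar \ac{IFOFP} supply rate yields $\sw = (1 + \sclo/\schi)\,\sue\sye - \sclo\,\sue^2 - (1/\schi)\,\sye^2$. Since $0 < \schi < \infty$, multiplying by $\schi$ preserves the sign, and I would show that the result factors into a single sector form,
\begin{equation*}
	\schi\,\sw = (\schi + \sclo)\,\sue\sye - \sclo\schi\,\sue^2 - \sye^2 = (\sye - \sclo\sue)(\schi\sue - \sye).
\end{equation*}
The proposition is thereby reduced to establishing the scalar \emph{sector condition} $(\sye - \sclo\sue)(\schi\sue - \sye) \ge 0$.

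It then remains to derive this sector condition from the slope bound \eqref{eq:Prelim:Load_gradient}. Writing $\sye = \sh(\su) - \sh(\sueq)$ as the integral of the slope over the interval between $\sueq$ and $\su$ and inserting \eqref{eq:Prelim:Load_gradient} gives $\sclo\sue \le \sye \le \schi\sue$ whenever $\sue > 0$, so both factors of the sector product are non-negative; for $\sue < 0$ the integration limits reverse, yielding $\schi\sue \le \sye \le \sclo\sue$, so both factors are non-positive; and for $\sue = 0$ the product vanishes. In every case the product is non-negative, which gives $\sw \ge 0$ and establishes \ac{IFOFP}$(\sclo, 1/\schi)$.

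The main obstacle I anticipate is the regularity mismatch: the statement assumes only that $\sh$ is of class \classC{0}, yet the hypothesis \eqref{eq:Prelim:Load_gradient} constrains a derivative. I would address this by reading the slope bound in its integrated, difference-quotient form, so that the sector inequalities $\sclo\sue \le \sye \le \schi\sue$ follow from the slope restriction wherever the derivative exists; for the Lipschitz map implied by a finite slope bound this holds almost everywhere, which is enough to justify the integral representation of $\sye$ used above.
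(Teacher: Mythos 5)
Your proposal is correct and follows essentially the same route as the paper's proof: both reduce the claim to the sector condition $(\sye - \sclo\sue)(\schi\sue - \sye) \ge 0$, obtained from the slope bound \eqref{eq:Prelim:Load_gradient}, and identify that sector inequality with the \ac{IFOFP} supply rate for the indices $(\sclo, 1/\schi)$. The only differences are presentational: you pass from derivative bounds to difference-quotient bounds by integrating (fundamental theorem of calculus) where the paper invokes the mean value theorem, and you verify the supply-rate factorization explicitly where the paper cites \cite[p.~231]{Khalil2002}.
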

\begin{proof}
	Consider for \eqref{eq:Prelim:Static_function} the slope between an arbitrary shift $(\sueq, \syeq) \in \setU\times\setY$ and a point $(\su, \sy)$, for which the upper and lower bounds are given by
	\begin{equation} \label{eq:Prelim:Static_gradient}
		\sclo \le \frac{\sy - \syeq}{\su - \sueq} \le \schi, \quad \forall (\su, \sy), (\sueq, \syeq) \in \setU\times\setY.
	\end{equation}
	Changing to the shifted variables $\sue$ and $\sye$ as in \eqref{prop:Prelim:Static_load} and multiplying through by $\sue^2$ yields 
	\begin{equation} \label{eq:Prelim:Load_sector_alt}
		\begin{aligned}
			\sclo\sue^2 \le \sue\sye \le \schi\sue^2 &\iff (\sye - \sclo\sue)(\sye - \schi\sue) \le 0 \\
			&\iff (\sye - \sclo\sue)(\frac{1}{\schi}\sye - \sue) \le 0, 
		\end{aligned}
	\end{equation}
	for $\schi > 0$, which describes an \ac{IFOFP} function (see \cite[p.~231]{Khalil2002}).
	Finally, through the mean value theorem, the bounds in \eqref{eq:Prelim:Static_gradient} may be found from \eqref{eq:Prelim:Load_gradient}.
\end{proof}
We note that the restrictions on $\schi$ in \autoref{prop:Prelim:Static_load} are needed from a computational point of view $(\schi < \infty)$ and to ensure that the passivity indices correspond to the correct sector\footnote{Consider e.g.\ the sector \autoref{prop:Prelim:Static_load} would yield if $\sclo \le \schi < 0$.} $(\schi > 0)$. However, this limits the passivity properties attainable through \autoref{prop:Prelim:Static_load} to $\siOut = 1/\schi > 0$.
\begin{remark}[Symmetrical sectors] \label{rem:Prelim:Symmetrical_sectors}
	Placing the additional restriction $\sclo = -\schi$ in \eqref{eq:Prelim:Load_gradient} results in the Lipschitz continuity of $\sh(\su)$. Moreover, this implies that the arbitrarily shifted function $\she(\sue)$ has a finite \Ltwo-gain of $\schi$ \cite{Malan2022b}.
\end{remark}
%
%
\subsection{Interconnected Quadratic Dissipative Systems} \label{sec:Prelim:Diss_interconnect}
Building upon the results on the interconnection of dissipative systems in \cite{Arcak2016,Moylan1978}, we now provide a method for finding dissipativity properties for a subset of the interconnected subsystems such that interconnected stability is guaranteed. Specifically, we look for the dissipative supply rates that restrict the subset of subsystems as little as possible. For a set $\setS$ of subsystems, define $\vu = [\vuT[1], \dots, \vuT[|\setS|]]^\Transpose$ and $\vy = [\vyT[1], \dots, \vyT[|\setS|]]^\Transpose$.
\begin{theorem}[Minimally restrictive stabilising indices] \label{thm:Prelim:Min_restrictive_diss}
	Consider $|\setS|$ subsystems of the form \eqref{eq:Prelim:NL_System} which are dissipative w.r.t.\ the supply rates $\sw[i] = 2\siCross[i] \vuT[i] \vy[i]  - \siIn[i] \vuT[i] \vu[i] - \siOut[i] \vyT[i] \vy[i]$ and are linearly interconnected according to $\vu = \mH \vy$. The stability of the interconnected system is guaranteed if there exists a $\mD$ and $\siIn[j], \siOut[j] \in \Reals$ with $j\in\setJ$ such that
	\begin{equation} \label{eq:Prelim:Opt_interconnected_stability}
		\begin{array}{cl}
			\!\displaystyle \underset{\begin{array}{c}
					\\[-3.6ex]
					\scriptstyle \mD,\, \siIn[j],\, \siOut[j], \\[-3.5pt]
					\scriptstyle j\in\setJ
				\end{array}}{\min} & \! \displaystyle \sum_{j \in \setJ}\left( \siIn[j] + \siOut[j]\right) \\
			\text{s.t.} & \siCross[j] = \nicefrac{1}{2}(1 + \siIn[j]\siOut[j]), \quad j\in\setJ, \\
			 & \mQ \negSemiDef 0, \quad \mD^2 \posDef 0
		\end{array}
	\end{equation}
	where the subsystems with configurable supply rates are represented by the set $\setJ \subset \setS$, and
	\begin{align}
		\label{eq:Prelim:Optimisation_mat_Q}
		\mQ &\coloneqq \begin{bmatrix} \mH \\ \Ident \end{bmatrix}^\Transpose \mD \mW \mD \begin{bmatrix} \mH \\ \Ident \end{bmatrix} \\
		\label{eq:Prelim:Optimisation_mat_D}
		\mD &\coloneqq \Diag[\vdT,\vdT], \qquad && \vd = (\sqrt{\sd[i]}), \\
		\label{eq:Prelim:Optimisation_mat_W}
		\mW &\coloneqq \begin{bmatrix}
			-\Diag[\siIn[i]] & \Diag[\siCross[i]] \\
			\Diag[\siCross[i]] & -\Diag[\siOut[i]]
		\end{bmatrix}, \quad && i \in \setS.
	\end{align}
\end{theorem}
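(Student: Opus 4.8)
The plan is to establish stability through a composite storage function obtained as a $\sd[i]$-weighted sum of the individual storage functions, in the spirit of the dissipative interconnection results of \cite{Moylan1978, Arcak2016}. First I would collect, for each $i \in \setS$, a storage function $\sS[i]$ certifying dissipativity w.r.t.\ $\sw[i]$, and form the candidate $\sS = \sum_{i \in \setS} \sd[i] \sS[i]$. Since $\mD = \Diag[\vdT, \vdT]$ with $\vd = (\sqrt{\sd[i]})$, the requirement $\mD^2 \posDef 0$ is equivalent to $\sd[i] > 0$ for all $i$; hence $\sS$ is a nonnegative function vanishing at the (shifted) equilibrium, i.e.\ a legitimate Lyapunov candidate.

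Next I would differentiate $\sS$ along the interconnected trajectories. Dissipativity gives $\sSdot[i] \le \sw[i]$ for each subsystem, so $\sSdot \le \sum_{i \in \setS} \sd[i] \sw[i]$. The central algebraic step is to rewrite this weighted sum as a single quadratic form in the stacked signals. Expanding $\sw[i] = 2\siCross[i]\vuT[i]\vy[i] - \siIn[i]\vuT[i]\vu[i] - \siOut[i]\vyT[i]\vy[i]$ and grouping the cross, input, and output contributions, the block-diagonal conjugation by $\mD$ scales each index by $\sd[i]$ and produces
\[
	\sum_{i \in \setS} \sd[i]\sw[i] = \begin{bmatrix}\vu \\ \vy\end{bmatrix}^\Transpose \mD\mW\mD \begin{bmatrix}\vu \\ \vy\end{bmatrix},
\]
with $\mW$ symmetric as in \eqref{eq:Prelim:Optimisation_mat_W}. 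Substituting the interconnection $\vu = \mH\vy$, i.e.\ $\begin{bmatrix}\vu \\ \vy\end{bmatrix} = \begin{bmatrix}\mH \\ \Ident\end{bmatrix}\vy$, collapses the form onto the output subspace and yields $\sum_{i \in \setS} \sd[i]\sw[i] = \vyT \mQ \vy$ with $\mQ$ as in \eqref{eq:Prelim:Optimisation_mat_Q}. The constraint $\mQ \negSemiDef 0$ then gives $\sSdot \le \vyT \mQ \vy \le 0$ along every trajectory, so $\sS$ is nonincreasing and stability in the sense of Lyapunov follows.

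I would then interpret the remaining ingredients of the optimisation. The equality $\siCross[j] = \tfrac{1}{2}(1 + \siIn[j]\siOut[j])$ for $j \in \setJ$ is precisely the condition from \autoref{def:Prelim:passive_rates} that the supply rate assigned to subsystem $j$ be a genuine \ac{IFOFP} rate, ensuring the designed indices are realisable, whereas the subsystems in $\setS \setminus \setJ$ retain their given $(\siCross[i], \siIn[i], \siOut[i])$. Minimising $\sum_{j \in \setJ}(\siIn[j] + \siOut[j])$ selects, among all choices satisfying $\mQ \negSemiDef 0$, the indices that impose the weakest passivity requirement on the designed subsystems, which is what ``minimally restrictive'' denotes.

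The main obstacle I expect is twofold. The first is the bookkeeping in the quadratic-form identity: one must verify that conjugating the symmetric $\mW$ by the block-diagonal $\mD$ reproduces exactly the $\sd[i]$-weighted cross, input, and output terms, and that the resulting $\mQ$ is symmetric so that $\mQ \negSemiDef 0$ is meaningful. The second, more conceptual, is that the argument only furnishes $\sSdot \le 0$ and hence mere stability; to treat $\sS$ as a valid Lyapunov function I would rely on the nonnegativity of each $\sS[i]$ together with the strict positivity of the $\sd[i]$ from $\mD^2 \posDef 0$. Upgrading to asymptotic stability is not claimed here and would additionally require a detectability condition such as \ac{ZSO} or \ac{ZSD}.
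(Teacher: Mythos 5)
Your proposal is correct and takes essentially the same route as the paper: the paper omits the proof and defers to the Moylan--Hill-style argument of \cite[Theorem~13]{Malan2022b}, which is exactly your construction of the weighted composite storage function $\sS = \sum_{i} \sd[i] \sS[i]$, the identity $\sum_i \sd[i]\sw[i] = [\vuT, \vyT]\,\mD\mW\mD\,[\vuT, \vyT]^\Transpose$, and the substitution $\vu = \mH\vy$ giving $\sSdot \le \vyT\mQ\vy \le 0$ under $\mQ \negSemiDef 0$. Your closing caveat that this yields only Lyapunov stability, with asymptotic stability requiring additional \ac{ZSO}-type conditions, is also consistent with how the paper later applies the theorem (strict inequality plus zero-state analyses in \autoref{thm:Stability:Ctrl_MG_Stability}).
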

The proof for \autoref{thm:Prelim:Min_restrictive_diss} follows analogously to the proof of \cite[Theorem~13]{Malan2022b} with application of \cite[Remark~5]{Malan2022b} and is thus omitted for brevity. Note that if $\setJ = \emptyset$ in \eqref{eq:Prelim:Opt_interconnected_stability}, \autoref{thm:Prelim:Min_restrictive_diss} can be used to verify the stability of interconnected dissipative systems.

Despite the design flexibility provided by \autoref{thm:Prelim:Min_restrictive_diss}, certain cascade configurations present obstacles to the application of dissipativity theory. The following proposition formalises the problem presented by one such configuration which arises in the sequel and is used to inform the control design.
\begin{proposition}[Non-dissipativity of cascaded \ac{IFP}-\ac{OFP} systems] \label{prop:Prelim:non_diss_IFP_OFP}
	Consider $|\setS| \ge 2$ subsystems \eqref{eq:Prelim:NL_System} which are dissipative w.r.t.\ $\sw[i] = 2\siCross[i] \vuT[i] \vy[i]  - \siIn[i] \vuT[i] \vu[i] - \siOut[i] \vyT[i] \vy[i]$ and linearly interconnected according to $\vu = \mH \vy$. Let $i = 1$ and $i = 2$ arbitrarily denote subsystems that are \ac{IFP} and \ac{OFP}, respectively. If these systems are connected in exclusive casade and do not form a feedback connection, i.e.\
	\begin{equation}
		\mH = \begin{bmatrix}
			0 & 0 & \ast \\
			1 & 0 & \vec{0} \\
			\vec{0} & \ast & \ast
		\end{bmatrix}, 
	\end{equation}
	then investigating stability via separable storage functions as in \autoref{thm:Prelim:Min_restrictive_diss} fails.
\end{proposition}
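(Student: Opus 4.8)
The plan is to show that no admissible weighting $\sd[i] > 0$ can render the matrix $\mQ$ of \eqref{eq:Prelim:Optimisation_mat_Q} negative semidefinite, so that the separable-storage-function test of \autoref{thm:Prelim:Min_restrictive_diss} is inconclusive. Recalling that substituting the interconnection $\vu = \mH\vy$ into the weighted supply rate gives
\begin{equation*}
  \vy^\Transpose \mQ \vy = \sum_{i \in \setS} \sd[i]\sw[i], \qquad \text{so that} \qquad \mQ\negSemiDef0 \;\Longleftrightarrow\; \sum_{i\in\setS} \sd[i]\sw[i] \le 0 \;\text{ for all } \vy,
\end{equation*}
I would analyse $\mQ$ block-wise, partitioned conformally with the subsystem outputs $\vy[1], \vy[2], \dots$, and focus entirely on the block row and column of $\mQ$ associated with $\vy[1]$, the output of the IFP subsystem. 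Here $\siCross[1]=\siCross[2]=\nicefrac{1}{2}$, while being IFP fixes $\siOut[1]=0$ and being OFP fixes $\siIn[2]=0$.

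First I would show that the $\vy[1]$-diagonal block of $\mQ$ is identically $\vec{0}$. A contribution quadratic in $\vy[1]$ could only arise from (i) the output penalty $-\sd[1]\siOut[1]\vyT[1]\vy[1]$ of subsystem $1$, (ii) an input penalty $-\sd[i]\siIn[i]\vuT[i]\vu[i]$ of some subsystem whose input contains $\vy[1]$, or (iii) the cross term $2\sd[1]\siCross[1]\vuT[1]\vy[1]$ in the case that $\vu[1]$ itself depends on $\vy[1]$. Term (i) vanishes since $\siOut[1]=0$. By the exclusive-cascade structure of $\mH$, the only subsystem fed by $\vy[1]$ is subsystem $2$, through $\vu[2]=\vy[1]$; but $\siIn[2]=0$, so term (ii) vanishes as well. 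Finally, the first block row of $\mH$ carries a zero in the $\vy[1]$ column, hence $\vu[1]$ does not depend on $\vy[1]$ and term (iii) is absent. Thus the $\vy[1]$-diagonal block of $\mQ$ is zero.

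Next I would exhibit a nonzero off-diagonal block in the same block row. The relation $\vu[2]=\vy[1]$ together with $\siCross[2]=\nicefrac{1}{2}$ produces the cross term $2\sd[2]\siCross[2]\vuT[2]\vy[2]=\sd[2]\vyT[1]\vy[2]$, so the $(\vy[1],\vy[2])$ block of $\mQ$ equals $\tfrac{\sd[2]}{2}\Ident$, which is nonzero because $\mD^2\posDef0$ forces $\sd[2]>0$. I would then invoke the standard fact that a symmetric negative semidefinite matrix with a vanishing diagonal entry has the entire corresponding row and column equal to zero: since $-\mQ\posSemiDef0$ and each standard basis vector $\vec{e}$ supported in the $\vy[1]$ block satisfies $\vec{e}^\Transpose(-\mQ)\vec{e}=0$, positive semidefiniteness forces $(-\mQ)\vec{e}=\vec{0}$, so the whole $\vy[1]$ row and column of $\mQ$ vanish. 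This contradicts $\tfrac{\sd[2]}{2}\Ident\ne0$. Consequently no weighting $\sd[i]>0$ yields $\mQ\negSemiDef0$, and the separable test of \autoref{thm:Prelim:Min_restrictive_diss} fails.

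The main obstacle, and the step deserving the most care, is the exhaustive accounting in the second paragraph: one must verify that the specific zero pattern of $\mH$ (no $\vy[2]$-feedback into subsystem $1$, and $\vy[1]$ entering only subsystem $2$) together with the defining values $\siOut[1]=0$ and $\siIn[2]=0$ leaves no surviving quadratic-in-$\vy[1]$ term, since any such term would supply precisely the diagonal dominance needed to absorb the off-diagonal coupling and salvage negative semidefiniteness. The closing linear-algebra lemma is then routine.
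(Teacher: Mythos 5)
Your proof is correct and takes essentially the same route as the paper's: both arguments show that the diagonal entry of $\mQ$ associated with $\vy[1]$ vanishes (because $\siOut[1]=0$ and $\siIn[2]=0$ and the cascade structure of $\mH$ admits no other quadratic-in-$\vy[1]$ contribution), while the coupling $\vu[2]=\vy[1]$ forces the off-diagonal entry $\sQ[12]=\nicefrac{\sd[2]}{2}\neq 0$, so $\mQ$ cannot be negative semidefinite for any admissible weights. The only cosmetic difference is that you prove the concluding linear-algebra fact (a semidefinite matrix with a zero diagonal entry must have the whole corresponding row and column zero) from first principles, whereas the paper disposes of this step by identifying $\mQ$ as an indefinite saddle-point matrix with a citation.
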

\begin{proof}
	Evaluating the stability criteria in \eqref{eq:Prelim:Opt_interconnected_stability} under the imposed \ac{IFP} and \ac{OFP} conditions yields the $\mQ$ \eqref{eq:Prelim:Optimisation_mat_Q} entries
	\begin{equation}
		\sQ[11] = \sd[1]\siOut[1] + \sd[2]\siIn[2] = 0, \qquad \sQ[12] = \sQ[21] = \frac{\sd[2]\siCross[2]}{2} = \frac{\sd[2]}{2}.
	\end{equation}
	Since $\sd[i] > 0$, $\mQ$ constitutes an indefinite saddle-point matrix \cite[Section~3.4]{Benzi2005}, violating the requirement in \eqref{eq:Prelim:Opt_interconnected_stability}.
\end{proof}
\begin{remark}[Non-separable storage functions] \label{rem:Prelim:non_seperable_functions}
	The obstacle in \autoref{prop:Prelim:non_diss_IFP_OFP} arises due to the storage functions being compartmentalised by the subsystem boundaries. While the separability of storage functions is a central motivation for the use of dissipativity theory, forgoing this allows for a stability analysis through less conservative methods (e.g.\ the KYP lemma).
\end{remark}
	\section{Problem Description} \label{sec:Problem}
In this section, the components comprising the DC mircrogrid are introduced in \autoref{sec:Problem:Elec_Network}. This is followed by \autoref{sec:Problem:DGU}, where controllers are added which regulate the output power of actuated buses in order to facilitate power sharing in the sequel. Finally, we formulate the coordination and cooperation goals as a control problem in \autoref{sec:Problem:Control_Problem}.
\subsection{DC Network} \label{sec:Problem:Elec_Network}
\begin{figure*}
	\centering
	\scalebox{.8}{\begin{tikzpicture}
	\def\cHeight{1.85cm}	
	\def\colorLine{blue!80!black}	
	\coordinate(dgu_filter_base);
	\draw
	(dgu_filter_base) to [open, o-, v<=${\svVSC[k]}$] ++(0,\cHeight) coordinate(dgu_filter_base_high)
	to [short, o-] ++(0.4,0)
	to [ccgsw, -, name=switchAlpha] ++(1.0,0)
	to [short, i=${\sii[k]}$] ++(0.8,0)
	to [R, l=${\sR[k]}$] ++(1.2,0) 
	to [L, l=${\sL[k]}$] ++(2.2,0) coordinate(dgu_filter_output_high) 
	to [C, l_=${\sC[k]}$] ++(0,-\cHeight) coordinate(dgu_filter_output_low) 
	to [short] (dgu_filter_base)
	
	(dgu_filter_output_high) to [short] ++(1.2,0) coordinate(dgu_load_high)
	to [american current source, l=${\sILv[k]}$] ++(0,-\cHeight) coordinate(dgu_load_low)
	to [short] (dgu_filter_output_low)
	
	(dgu_load_high) to [short, -o] ++(2.1, 0) coordinate(dgu_pcc_high)
	to [open] ++(0, -\cHeight) coordinate(dgu_pcc_low)
	to [short, o-] (dgu_load_low)
	
	(dgu_pcc_high) to [open, v^>=${\sv[k]}$] (dgu_pcc_low)
	
	(dgu_filter_base) to [short] ++(-2.2,0) coordinate(dgu_source_low)
	to [battery, invert] ++(0,\cHeight) coordinate(dgu_source_high)
	to [short] (dgu_filter_base_high);
	
	\path (dgu_filter_base) +(-1.0,\cHeight/2) coordinate (inverter);
	\node[align=center,draw,fill=white,rectangle,minimum height =\cHeight + 0.6cm,minimum width =1.0cm ](nVSC) at(inverter) {\footnotesize Buck$_k$};
	
	\node[below=0.08] at (dgu_pcc_low) {Bus$_{k}$};
	
%

	\coordinate(line_left_low) at (dgu_pcc_low);
	\coordinate(line_left_high) at (dgu_pcc_high);
	\draw	
	(line_left_high) to [short, o-] ++(1.8,0) coordinate(line_Ci_high)
	to [C, color=\colorLine, \colorLine, l_=$\dfrac{\sC[kl]}{2}$] ++(0,-\cHeight) coordinate(line_Ci_low)
	to [short, -o] (line_left_low)
	
	
	(line_Ci_high) to [short] ++(1.0,0)
	to [R, color=\colorLine, \colorLine, l=${\sR[kl]}$] ++(1.2,0)
	to [short, i^>=${\siTx[kl]}$] ++(1.2, 0)
	to [L, color=\colorLine, \colorLine, l=${\sL[kl]}$] ++(1.2,0)
	to [short] ++(0.8,0) coordinate(line_Cj_high)
	to [C, color=\colorLine, \colorLine, l=$\dfrac{\sC[kl]}{2}$] ++(0,-\cHeight) coordinate(line_Cj_low)
	to [short] (line_Ci_low)
	
	(line_Cj_high) to [short, -o] ++(1.0,0) coordinate(line_right_high)
	to [open] ++(0,-\cHeight) coordinate(line_right_low)
	to [short, o-] (line_Cj_low);
	
	\node[below=0.08] at (line_right_low) {Bus$_{l}$};

	\path (dgu_source_low) +(-0.5,-0.5) coordinate (dgu_box_bottom_left);
	\path (line_Ci_high) +(0.5,0.75) coordinate (dgu_box_top_right);
	
	\path (line_Ci_low) +(0.9, -0.5) coordinate (line_box_bottom_left);
	\path (line_Cj_high) +(-0.7, 0.75) coordinate (line_box_top_right);
	
	\begin{scope}[on background layer]
		\node[draw,dashed,rounded corners=0.25cm,fit=(dgu_box_bottom_left) (dgu_box_top_right)] (dgu_box) {};
		
		\node[draw,dashed,rounded corners=0.25cm,fit=(line_box_bottom_left) (line_box_top_right)] (line_box) {};
		\node[above] at (line_box.south) {Line$_{kl}$};
	\end{scope}
	
	\begin{scope}[on background layer]
		\path (switchAlpha.mid) +(0, 0.9) coordinate (alpha_input);
		\draw[-, red, thick] (alpha_input) -- node[pos=0, anchor = south]{$\sAct[k] \in \{0,1\}$} (switchAlpha.mid);
		
		\path (nVSC.north) +(-1.0, 1.0) coordinate (setpoint_input);
		\draw[-latex, red, thick] (setpoint_input) -| node[pos=0, anchor = east]{$\spSP[k]$} (nVSC.north);
		
		\path (dgu_pcc_high) ++(0, 0.2) coordinate (measure_ouput_start) 
			++(-1.0, 1.0) coordinate (measure_ouput);
		\draw[{Circle[length=2.5pt]}-latex, red, thick] (measure_ouput_start) |- node[pos=1, anchor = east]{$\sv[k]$} (measure_ouput);
	\end{scope}

	
\end{tikzpicture}}
	\caption{
		Circuit diagram of a bus comprising a DC-DC buck converter, a filter, and a current source representing a load, connected to a $\pi$-model line (blue); the line capacitances considered to be part of the respective buses.}
	\label{fig:Problem:DC_Elements}
\end{figure*}
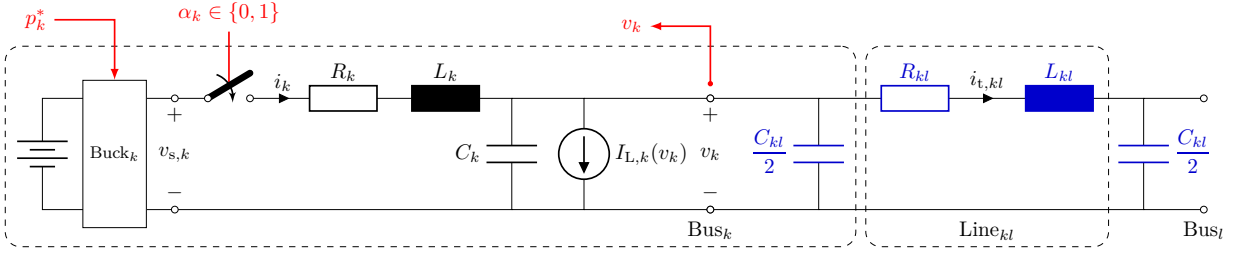
We consider a DC microgrid comprising $N = |\setN|$ buses connected by via $\pi$-model electrical lines, as depicted in \autoref{fig:Problem:DC_Elements}. Let the graph $\graphGPhys = (\setN, \graphEPhys)$ describe the interconnection with $\setN$ as the set of buses and $\graphEPhys$ as the set of lines. Without loss of generalisation, we allow each node to inject power through a DC-DC buck converter connected via a lossy LC-filter. Note that a time-averaged model (see e.g.\ \cite{Trip2019}) is used for the buck converter and the energy source is assumed to be ideal but finite.

Let the buses be split into an actuated set $\setNAct$ and an unactuated set $\setNUnact$, according to whether the buck converter can freely regulate the amount of power injected at a given time. Buses may freely switch between the sets $\setNAct$ and $\setNUnact$, but $\setNAct \cap \setNUnact = \emptyset$ and $\setNAct \cup \setNUnact = \setN$ always hold. To characterise this actuation state of a bus, define the piecewise-constant, time-varying actuation parameter $\sAct[k](t)$ as 
\begin{equation} \label{eq:Problem:actuation}
	\sAct[k](t) \coloneqq \left\lbrace\begin{array}{lll}
		1, & \qquad & k \in \setNAct, \\
		0, & \qquad & k \in \setNUnact .
	\end{array}\right.
\end{equation}
Note that we omit the time dependence of $\sAct[k]$ in the sequel.

The dynamics for actuated buses with \acp{DGU}, where $\sAct[k] = 1$ with $k \in \setNAct$ are described by
\begin{equation} \label{eq:Problem:node_act_dynamics}
	\begin{bmatrix}
		\sL[k]\sidot[k] \\ \sCeq[k]\svdot[k]
	\end{bmatrix} = \begin{bmatrix}
		-\sR[k] & -1 \\
		1 & 0
	\end{bmatrix} \!\! \begin{bmatrix}
		\sii[k] \\ \sv[k]
	\end{bmatrix} \!\: + \!\: \begin{bmatrix}
		\svVSC[k] \\ -\vEPT[k]\viTx - \sILv[k]
	\end{bmatrix}
\end{equation}
where $\sCeq[k] = \sC[k] + \nicefrac{1}{2} \vEPT[k] \Diag[\sC[kl]] \vEP[k]$; $\sC[k], \sC[kl], \sL[k] > 0$; $\sii[k] \in \Reals$; and $\sv[k] \in \RealsPos$. The line currents $\viTx$ connect to the capacitor voltages according to incidence matrix $\mEP = (\vEPT[k])$ of $\graphGPhys$.
The dynamics of the unactuated \emph{load} buses with $\sAct[k] = 0$ correspond to the simplified system
\begin{equation} \label{eq:Problem:node_unact_dynamics}
	\sCeq[k]\svdot[k] = -\vEPT[k]\viTx - \sILv[k], \quad k \in \setNUnact
\end{equation}

In both the actuated \eqref{eq:Problem:node_act_dynamics} and unactuated \eqref{eq:Problem:node_unact_dynamics} cases, the loads are considered static, nonlinear voltage-dependent current sources which are described by class \classC{0} functions. In this work, we utilise the standard ZIP-model comprising constant impedance, constant current and constant power parts. Note that other continuous functions may also be used without restriction\footnote{This includes exponential loads (see e.g.\ \cite{Strehle2020Load}).}. As described in \cite[pp.~110--112]{Machowski2008}, we define a critical voltage $\svCrit$, typically set to $0.7\svRef$, below which the loads are purely resistive. Thus,
\begin{align} \label{eq:Problem:ZIP_model}
	\sILv[k] =& \left\{\begin{array}{lll}
		\sZinv[k]\cdot\sv[k] + \sI[k] + \dfrac{\sP[k]}{\sv[k]}, & \qquad & \sv[k] \ge \svCrit, \\[1pt]
		\sZCritinv[k]\cdot\sv[k], & \qquad & \sv[k] < \svCrit ,
	\end{array} \right. \\[2pt]
	\label{eq:Problem:Z_crit}
	\sZCritinv[k] \coloneqq& \; \frac{\sIL[k](\svCrit)}{\svCrit} = \sZinv[k] + \frac{\sI[k]}{\svCrit} + \frac{\sP[k]}{\svCrit^2} \, ,
\end{align}
describes a static, nonlinear load which conforms to \eqref{eq:Prelim:Static_function}.

Lastly, the $\pi$-model transmission lines physically connecting the nodes are governed by the dynamics
\begin{equation} \label{eq:Problem:line_dynamics}
	\sL[kl]\siTxdot[kl] = -\sR[kl]\siTx[kl] + \vEPT[kl]\vv , \quad kl \in \graphEPhys,
\end{equation}
where $\siTx[kl] \in \Reals$, $\sL[kl], \sR[kl] > 0$ and $(\vEPT[kl])^\Transpose = \mEP$. Note that the line capacitances are included in the equivalent capacitances $\sCeq[k]$ at the buses.
\subsection{\ac{DGU} Power Regulator} \label{sec:Problem:DGU}
To allow for power sharing between the actuated buses \eqref{eq:Problem:node_act_dynamics} in the sequel, we equip each \ac{DGU} with a controller that can regulate the injected power to a desired setpoint $\spSP[k]$. This regulator has the form 
\begin{equation} \label{eq:Problem:DGU_Power_Control}
	\begin{aligned}
		\seDGUdot[k] &= \sAct[k](\spSP[k] - \spp[k]) \\
		\svVSC[k] &= \skpDGU (\spSP[k] - \spp[k]) + \skiDGU \seDGU[k] + \sRDGUdamp[k] \sii[k] + \svRef
	\end{aligned}
\end{equation}
where $\seDGU \in \Reals$, $\spp[k] = \sv[k]\sii[k]$ is the actual power injected, $\sRDGUdamp \in \Reals$ is the damping added to the system, and $\skpDGU, \skiDGU > 0$ are the control parameters. Combining \eqref{eq:Problem:DGU_Power_Control} with \eqref{eq:Problem:node_act_dynamics} yields the nonlinear system describing the actuated agents $k \in \setNAct$

\begin{equation} \label{eq:Problem:Controlled_DGU}
	\begin{aligned}
		\begin{bmatrix}
			\seDGUdot[k] \\ \sL[k] \sidot[k] \\ \sC[k] \svdot[k]
		\end{bmatrix} \! =&  \begin{bmatrix}
			0 & -\sv[k] & 0 \\
			\skiDGU & \sRDGUdamp[k] - \sR[k] - \skpDGU \sv[k] & -1 \\
			0 & 1 & 0
		\end{bmatrix} \!\!\! \begin{bmatrix}
			\seDGU[k] \\ \sii[k] \\ \sv[k]
		\end{bmatrix} \\
		&+ \begin{bmatrix}
			\spSP[k] \\ \skpDGU \spSP[k] + \svRef \\ -\vET[k]\viTx -\sILv[k]
		\end{bmatrix}
	\end{aligned}
\end{equation}
\begin{remark}[Regulating current or voltage] \label{rem:Problem:I_or_V_regulation}
	Without invalidating the stability analysis in the sequel, the regulator in \eqref{eq:Problem:DGU_Power_Control} can be exchanged for simpler, purely linear current or voltage regulators (see e.g.\ \cite{Tucci2016,Strehle2020DC,Cucuzzella2023}).
\end{remark}
\begin{remark}[Constrained \ac{DGU} operation] \label{rem:Problem:Power_limited_DGU}
	If an actuated \ac{DGU} cannot provide the desired power $\spSP[k]$, e.g.\ due to current, storage or temperature limitations, the \ac{DGU} may simply set its actuation state $\sAct[k] = 0$ to disable its control. If some power can still be supplied, it may simply be regarded as a negative load. This allows \acp{DGU} to contribute to the power supply of the network, even in the face of control limitations.
\end{remark}
\subsection{Control Problem} \label{sec:Problem:Control_Problem}
A central requirement for DC microgrids is voltage stability, which requires the bus voltages to remain within a given tolerance band around the reference $\svRef$. Specifically, this requirement should be met throughout the network, and not only at the actuated buses. Due to the presence of lossy lines, power flows are associated with voltage differences between buses, meaning that $\cramped{\sv[k] \to \svRef}, \, \forall k \in \setN$ is not practical. Ideally, the voltages at all buses should be arrayed in the tolerance band around $\svRef$ and be as close to $\svRef$ as possible\footnote{The magnitude of the errors $\svRef - \sv[k]$ strongly depend on the loads and line resistance. Small errors therefore presuppose adequate network design.}. The manipulated variables used to achieve this are the power setpoints $\spSP[k]$ supplied to the actuated \acp{DGU} \eqref{eq:Problem:DGU_Power_Control}. This leads to the first objective for the control of the DC microgrid, which involves finding the setpoints $\spSP[k]$ that ensure the weighted average voltage equals $\svRef$ at steady state.
\begin{objective}[Weighted voltage consensus] \label{obj:Problem:weighted_errors}
	%
	\begin{equation} \label{eq:Problem:Obj_voltage_consensus}
		\text{Find\ } \, \spSP[k] \text{\ s.t.\ } \lim_{t \to \infty} \frac{1}{N} \sum_{k \in \setN} \sh(\sv[k](t)) = \svRef
	\end{equation}
	for a strictly increasing weighting function $\sh: \Reals \to \Reals$.
\end{objective}
By choosing a nonlinear $\sh$, large voltage errors may be weighed more strongly. This allows for better utilisation of the tolerance band since bus voltages can be further from $\svRef$ before registering as a significant error.

In addition to \autoref{obj:Problem:weighted_errors}, it is desired that all actuated \acp{DGU} contribute towards supplying and stabilising this network. Ensuring that all \acp{DGU} receive the same setpoint spreads the load across actuated buses, leading to a reduction in localised stress on the \acp{DGU}. We thus formulate the second objective as requiring uniform setpoints for the \acp{DGU} in steady state.
\begin{objective}[Cooperative power sharing] \label{obj:Problem:power_sharing}
	\begin{equation}
		\lim_{t \to \infty} (\spSP[k](t) - \spSP[l](t)) = 0, \quad \forall \, k, l \in \setN
	\end{equation}
\end{objective}

Achieving \autorefMulti{obj:Problem:weighted_errors, obj:Problem:power_sharing} thus yields a controlled microgrid where the average weighted voltage error of all buses tends to zero through the coordinated action of the actuated buses in a distributed fashion. These objectives also allow \acp{DGU} to transition seamlessly between actuated and unactated states and ensure no measurement information is discarded simply because a bus cannot regulate itself. Notice that disregarding the unactuated buses in \autorefMulti{obj:Problem:weighted_errors, obj:Problem:power_sharing} yields the objectives typically used in the literature \cite{Nasirian2015, Trip2019, Cucuzzella2019cons, Sadabadi2022, Nahata2020, Nahata2022, Zhao2015, Cucuzzella2019pbc}.

To achieve these objectives, we make the following assumptions related to appropriate network design.
\begin{assumption}[Feasible network] \label{ass:Problem:Feasible_network}
	The available power sources can feasibly supply the loads with power over the given electrical network, i.e.\ a suitable equilibrium for the microgrid exists.
\end{assumption}
\begin{assumption}[Number of actuated \acp{DGU}] \label{ass:Problem:Actuated_agents}
	At least one \ac{DGU} is actuated at any given time, i.e.\ $\setNAct \neq \emptyset$.
\end{assumption}
\begin{assumption}[Connected topologies] \label{ass:Problem:Connected_graphs}
	\autorefMulti{obj:Problem:weighted_errors, obj:Problem:power_sharing} only apply to a subset of buses electrically connected as per $\graphGPhys$. Moreover, for a distributed control, a connected communication graph exclusively interconnects the same subset of buses.
\end{assumption}
Note that \autoref{ass:Problem:Feasible_network} is a typically made implicitly or explicitly in the literature (see e.g.\ the discussion in \cite{Nahata2020}). \autorefMulti{ass:Problem:Actuated_agents, ass:Problem:Connected_graphs} further specify requirements that allow a distributed control to achieve the feasible state in \autoref{ass:Problem:Feasible_network}, i.e.\ by ensuring that at least one source of stabilisation is present in the network (\autoref{ass:Problem:Connected_graphs}), and by ensuring that the coordination corresponds to the network to be controlled \autoref{rem:Problem:proportional_power_sharing}.
\begin{remark}[Proportional power sharing] \label{rem:Problem:proportional_power_sharing}
	By normalising the power setpoint $\spSP[k]$ and weighing the input in \eqref{eq:Problem:DGU_Power_Control} according to the rated power of a given \ac{DGU}, \autoref{obj:Problem:power_sharing} automatically describes a proportional power sharing. With reference to \autoref{rem:Problem:Power_limited_DGU}, this also allows the constrained \acp{DGU} to lower their maximum injectable power instead of setting the \acp{DGU} to the unactuated state $\sAct[k] = 0$. We omit the extension to proportional power sharing in this work for simplicity.
\end{remark}
	\section{Control Structure} \label{sec:Control}
\begin{figure}[t]
	\centering
	\resizebox{\columnwidth}{!}{\hspace*{10pt}%
		\tikzsetnextfilename{03_Img/control_structure}%
		\begin{tikzpicture}[thick,>=latex']	
	\def\blockHeight{0.8cm}
	\def\blockWidth{0.85cm}
	\def\muxWidth{0.1cm}
	\def\nodedist{1.7}
	
	\def\feedbackHeight{2.4}
	\def\localControlHeight{3.3cm}
	
	\tikzstyle{systemElement}	= [ fill=white ]
	
	\tikzstyle{sum}				= [node distance=\nodedist, draw, circle, systemElement]
	\tikzstyle{box}				= [node distance=\nodedist, draw, rectangle, minimum height=\blockHeight, minimum width=\blockWidth, systemElement]
	\tikzstyle{mux}				= [node distance=\nodedist, draw, fill=black, rectangle, minimum height={\localControlHeight}, minimum width=\muxWidth, inner sep=-2]
	
	\tikzstyle{arrow}			= [->]
	\tikzstyle{noArrow}			= [-]
	\tikzstyle{comms}			= [-,bend left, densely dotted, line width=1.0pt, red]
	\tikzstyle{hiddenSystems}	= [-, dotted, line width=1.4pt]

	\coordinate(input_Ref);
	
	\path
	(input_Ref) to ++(\nodedist*6/7,0) coordinate (sum_feedback)
	to ++(\nodedist*4/7,0) coordinate (demux_left)
	to ++(\nodedist*5.5/7,0) coordinate (weight_func)
	to ++(\nodedist*8.5/7,0) coordinate (dda_1_output)
	to ++(\nodedist*9/7,0) coordinate (control_pi)
	to ++(\nodedist*8.5/7,0) coordinate (dda_2_control)
	to ++(\nodedist*8/7,0) coordinate (mux_right)
	to ++(\nodedist*3/7,-\feedbackHeight) coordinate (sum_output);

	\path (control_pi)
	to +(0,{\localControlHeight/3}) coordinate (control_pi_1)
	to +(0,{-\localControlHeight/3}) coordinate (control_pi_n);
	
	\path (weight_func)
	to +(0,{\localControlHeight/3}) coordinate (weight_func_1)
	to +(0,{-\localControlHeight/3}) coordinate (weight_func_n);
	
	\path (dda_1_output)
	to +(0,{\localControlHeight/3}) coordinate (dda_1_output_1)
	to +(0,{-\localControlHeight/3}) coordinate (dda_1_output_n);
	
	\path (dda_2_control)
	to +(0,{\localControlHeight/3}) coordinate (dda_2_control_1)
	to +(0,{-\localControlHeight/3}) coordinate (dda_2_control_n);

	\path ( $(demux_left)!0.5!(mux_right)$ ) to ++(0,-\feedbackHeight) coordinate (system_block);
	
	\path (sum_output) to ++(\nodedist*6/7,0) coordinate (input_out_Ref);

	\node[sum] at (sum_feedback)	(node_sum_feedback)		{};
	\node[mux] at (demux_left)		(node_demux_left)		{};
	\node[box] at (weight_func_1)	(node_weight_func_1)	{{$\shNLweight$}};
	\node[box] at (weight_func_n)	(node_weight_func_n)	{{$\shNLweight$}};
	\node[box] at (dda_1_output_1)	(node_dda_1_output_1)	{DDA$_{2,1}$};
	\node[box] at (dda_1_output_n)	(node_dda_1_output_n)	{DDA$_{2,N}$};
	\node[box] at (control_pi_1)	(node_control_pi_1)		{PI$_1$};
	\node[box] at (control_pi_n)	(node_control_pi_n)		{PI$_N$};
	\node[box] at (dda_2_control_1)	(node_dda_2_control_1)	{DDA$_{4,1}$};
	\node[box] at (dda_2_control_n)	(node_dda_2_control_n)	{DDA$_{4,N}$};
	\node[mux] at (mux_right)		(node_mux_right)		{};
	\node[box] at (system_block)	(node_system_block)		{DC MG};
	
	\path (node_weight_func_1) -- (node_weight_func_n) coordinate[pos=0.3] (node_weight_func_dots_start) coordinate[pos=0.5] (node_weight_func_dots) coordinate[pos=0.7] (node_weight_func_dots_end);
	\draw[hiddenSystems] (node_weight_func_dots_start) -- (node_weight_func_dots_end);
	
	\path (node_control_pi_1) -- (node_control_pi_n) coordinate[pos=0.3] (node_control_pi_dots_start) coordinate[pos=0.5] (node_control_pi_dots) coordinate[pos=0.7] (node_control_pi_dots_end);
	\draw[hiddenSystems] (node_control_pi_dots_start) -- (node_control_pi_dots_end);
	
	\path (node_dda_1_output_1) -- (node_dda_1_output_n) coordinate[pos=0.3] (node_dda_1_output_dots_start) coordinate[pos=0.5] (node_dda_1_output_dots) coordinate[pos=0.7] (node_dda_1_output_dots_end);
	\draw[hiddenSystems] (node_dda_1_output_dots_start) -- (node_dda_1_output_dots_end);
	
	\path (node_dda_2_control_1) -- (node_dda_2_control_n) coordinate[pos=0.3] (node_dda_2_control_dots_start) coordinate[pos=0.5] (node_dda_2_control_dots) coordinate[pos=0.7] (node_dda_2_control_dots_end);
	\draw[hiddenSystems] (node_dda_2_control_dots_start) -- (node_dda_2_control_dots_end);
	
	\node[above=24pt, align=center, anchor=base](nodeTextWeightFunc) at(weight_func_1) {Stage 1};
	\node[above=24pt, align=center, anchor=base](nodeTextStage1) at(dda_1_output_1) {Stage 2};
	\node[above=24pt, align=center, anchor=base](nodeTextStage2) at(node_control_pi_1) {Stage 3};
	\node[above=24pt, align=center, anchor=base](nodeTextStage3) at(node_dda_2_control_1) {Stage 4};

	\draw[arrow] (node_sum_feedback) -- node[pos=0.4, anchor = south]{\large $\vuNLweight$} (node_demux_left);
	\draw[arrow] (node_demux_left.east) ++(0,{\localControlHeight/3}) -- node[pos=0.5, anchor = south, name = node_weight_func_1_text]{$\suNLweight[1]$} (node_weight_func_1);
	\draw[arrow] (node_demux_left.east) ++(0,{-\localControlHeight/3}) -- node[pos=0.5, anchor = south, name = node_weight_func_n_text]{$\suNLweight[N]$} (node_weight_func_n);
	\draw[arrow] (node_weight_func_1) -- node[pos=0.5, anchor = south]{$\syNLweight[1]$} (node_dda_1_output_1);
	\draw[arrow] (node_weight_func_n) -- node[pos=0.5, anchor = south]{$\syNLweight[N]$} (node_dda_1_output_n);
	\draw[arrow] (node_dda_1_output_1) -- node[pos=0.5, anchor = south]{$\syDDA[2,1]$} (node_control_pi_1);
	\draw[arrow] (node_dda_1_output_n) -- node[pos=0.5, anchor = south]{$\syDDA[2,N]$} (node_control_pi_n);
	\draw[arrow] (node_control_pi_1) -- node[pos=0.5, anchor = south]{$\syCtrl[1]$} (node_dda_2_control_1);
	\draw[arrow] (node_control_pi_n) -- node[pos=0.5, anchor = south]{$\syCtrl[N]$} (node_dda_2_control_n);
	\draw[arrow] (node_dda_2_control_1) -- node[pos=0.5, anchor = south, name = node_dda_2_control_1_text]{$\syDDA[4,1]$} ($(node_mux_right.west) + (0,{\localControlHeight/3})$);
	\draw[arrow] (node_dda_2_control_n) -- node[pos=0.5, anchor = south, name = node_dda_2_control_n_text]{$\syDDA[4,N]$} ($(node_mux_right.west) + (0,{-\localControlHeight/3})$);
	
	\draw[arrow] (node_mux_right) -- ++({\nodedist/4},0) |- (node_system_block) node[pos=0.95, anchor = south west]{\large $\vpSP$};
	
	\draw[arrow] (node_system_block) -| node[pos=0.05, anchor = south east]{\large $\vv$} (node_sum_feedback) node[pos=0.95, anchor = north east]{$-$};
	
	\draw[arrow] (input_Ref) node[anchor = south west] {\large $\svRef\vOneCol[N]$} -- (node_sum_feedback) node[pos=0.95, anchor = north east]{$+$};
	
	\draw[comms] (node_dda_1_output_n) to (node_dda_1_output_1);
	\draw[comms] ($(node_dda_1_output_dots) + (4pt,-3pt)$) to (node_dda_1_output_n);
	\draw[comms] (node_dda_1_output_1) to ($(node_dda_1_output_dots) + (4pt,3pt)$);
	
	\draw[comms] (node_dda_2_control_1) to (node_dda_2_control_n);
	\draw[comms] (node_dda_2_control_n) to ($(node_dda_2_control_dots) + (-4pt,-3pt)$);
	\draw[comms] ($(node_dda_2_control_dots) + (-4pt,3pt)$) to (node_dda_2_control_1);

	\path (node_weight_func_1.west) +(0.05,0) coordinate (cBound_1_left);
	\path (node_dda_2_control_1.east) +(-0.03,0) coordinate (cBound_1_right);
	\path (node_weight_func_1.north) +(0,0.02) coordinate (cBound_1_top);
	\path (node_dda_2_control_1.south) +(0,0.0) coordinate (cBound_1_bottom);
	
	\path (node_weight_func_n.west) +(+0.05,0) coordinate (cBound_n_left);
	\path (node_dda_2_control_n.east) +(-0.05,0) coordinate (cBound_n_right);
	\path (node_weight_func_n.north) +(0,-0.0) coordinate (cBound_n_top);
	\path (node_dda_2_control_n.south) +(0,-0.02) coordinate (cBound_n_bottom);
	
	\path (cBound_1_left) -- (cBound_n_left) coordinate[pos=0.5] (cBound_dots_left);
	\path (cBound_1_right) -- (cBound_n_right) coordinate[pos=0.5] (cBound_dots_right);
	\path (node_weight_func_dots_start.north) +(0,-0.03) coordinate (cBound_dots_top);
	\path (node_dda_2_control_dots_end.south) +(0,0.03) coordinate (cBound_dots_bottom);

	\begin{scope}[on background layer]
		\node[fill=black!10,rounded corners=0.05cm,fit=(cBound_1_left) (cBound_1_right) (cBound_1_top) (cBound_1_bottom)] (box_control) {};
		\node[fill=black!10,rounded corners=0.05cm,fit=(cBound_dots_left) (cBound_dots_right) (cBound_dots_top) (cBound_dots_bottom)] (box_control) {};
		\node[fill=black!10,rounded corners=0.05cm,fit=(cBound_n_left) (cBound_n_right) (cBound_n_top) (cBound_n_bottom)] (box_control) {};
	\end{scope}
\end{tikzpicture}%
	\hspace*{-35pt}}
	\caption{Distributed four-stage control connected in feedback to the microgrid and with indicated communication links %
		\begin{tikzpicture}[baseline=-0.5ex]
			\protect\draw[-,bend left, densely dotted, line width=1.2pt, red] (0,0) to (0.27cm,0);
		\end{tikzpicture} %
		between the local control structures.}
	\label{fig:Control:control_structure}
\end{figure}
To meet \autorefMulti{obj:Problem:weighted_errors, obj:Problem:power_sharing}, we propose the four-stage control structure depicted in \autoref{fig:Control:control_structure}. This control structure comprises two \ac{DDA} implementations separated by agent PI controllers local to the buses as in \cite{Malan2022a}. This is prepended by a nonlinear weighting function $\shNLweight$. In the \autorefMulti{sec:Control:DDA, sec:Control:Leaky_PI, sec:Control:NL_Func}, we successively introduce these respective subsystems. Finally in \autoref{sec:Control:Steady_State}, we show that the control structure meets the objectives.
\subsection{DDA Controller} \label{sec:Control:DDA}
Consider the communiation graph $\graphGComm = (\setN, \graphEComm)$ linking the buses of the DC microgrid. The communication graph comprises the same vertices as the physical interconnection graph $\graphGPhys$ but possibly with a different topology. Let $\mLapComm$ denote the Laplacian of $\graphGComm$. For Stages 2 and 4 of the control structure, each agent implements an instance of the \ac{DDA}\footnote{We implement the PI-\ac{DDA} variant proposed in \cite{Freeman2006} and use the same communication graph for the proportional and integral terms.} described in \cite{Freeman2006}. The instances of the respective stages may be combined into vector form as
\begin{equation} \label{eq:Control:DDA}
	\text{DDA}_s \! \left\{ \!\!\!\;
		\begin{aligned}
			\begin{bmatrix} \vxDDAdot[s] \\ \vzDDAdot[s] \end{bmatrix} \!\! &= \!\! \begin{bmatrix}
				\!-\sgDDA \Ident[N] \!-\! \mLapComm[P] \!\! & \! \mLapCommT[I] \\
				-\mLapComm[I] & \matrix{0}
			\end{bmatrix} \!\!\! \begin{bmatrix} \vxDDA[s] \\ \vzDDA[s] \end{bmatrix} \!\!+\!\!
			\begin{bmatrix}
				\!\sgDDA \Ident[N] \! \\ \matrix{0}
			\end{bmatrix} \!\! \vuDDA[s], \\
			\vyDDA[s] &= \vxDDA[s],
		\end{aligned}
	\right.
\end{equation}
where $s \in \{2,4\}$ denotes the stage in \autoref{fig:Control:control_structure}, and $\vxDDA[s], \vzDDA[s] \in \Reals^N$ are the consensus and integral states respectively.
Furthermore, $\sgDDA > 0$ is a global estimator parameter (see \cite{Freeman2006}), and $\mLapComm[I] = \skiDDA \mLapComm$ and $\mLapComm[P] = \skpDDA \mLapComm$ are Laplacian matrices weighted for the integral and proportional responses, respectively. Recall from \cite{Freeman2006} that a constant input $\vuDDA[s]$ yields
\begin{equation} \label{eq:Control:DDA_equilibrium}
	\lim_{t \to \infty} \syDDA[s, k] = \frac{\vuDDAT[s]\vOneCol[N]}{N} , \quad \forall\,k.
\end{equation}
\subsection{Agent PI Controller} \label{sec:Control:Leaky_PI}
In Stage 3, we equip each bus $k \in \setN$ with a leaky agent PI controller similar to the approach in \cite{Weitenberg2018} 
\begin{equation} \label{eq:Control:local_PI}
	\text{PI}_k \left\{ \,\, \begin{aligned}
		\sxCtrldot[k] &= -\sdampCtrl\sxCtrl[k] + \suCtrl[k], \\
		\syCtrl[k] &= \skiCtrl \sxCtrl[k] + \skpCtrl \suCtrl[k],
	\end{aligned} \right.
\end{equation}
where $\sxCtrl[k] \in \Reals$, $\sdampCtrl \ge 0$ and $\skpCtrl, \skiCtrl > 0$. Note that $\sdampCtrl = 0$ reduces \eqref{eq:Control:local_PI} to an ideal PI controller. The combined form of the $N$ agent controllers is
\begin{equation} \label{eq:Control:local_PI_vector}
	\begin{aligned}
		\vxCtrldot &= -\sdampCtrl\vxCtrl + \vuCtrl, \\
		\vyCtrl &= \skiCtrl \vxCtrl + \skpCtrl \vuCtrl
	\end{aligned}
\end{equation}
\begin{remark}[Non-ideal integrators] \label{rem:Control:Agent_PI:Non_ideal}
	As shown in the sequel, ideal PI controllers only exhibit an \ac{IFP} property, whereas the \ac{DDA} controller is \ac{OFP}. The interconnection in \autoref{fig:Control:control_structure} thus yields a cascaded \ac{IFP}-\ac{OFP} structure which obstructs the dissipativity analysis (see \autoref{prop:Prelim:non_diss_IFP_OFP}). The use of leaky integrators $(\sdampCtrl > 0)$ overcomes this obstacle at the cost of negatively affecting the steady-state properties, since \eqref{eq:Control:local_PI} forces the equilibrium
	\begin{equation} \label{eq:Control:leaky_PI_steady_state}
		\vuCtrl = \sdampCtrl \vxCtrl
	\end{equation} 
	instead of $\vuCtrl = \vec{0}$. In the context of \autoref{fig:Control:control_structure}, this corresponds to a unwanted steady-state offset for the average weighted voltage error.
\end{remark}
\begin{remark}[Agent PI controller anti-windup] \label{rem:Control:Agent_PI:Anti_windup}
	To prevent controller windup, the input to the PI control in \eqref{eq:Control:local_PI} should be zeroed for any unactuated agents that are disconnected from the communication network.
\end{remark}
\begin{remark}[Non-participating agents] \label{rem:Control:Agent_PI:Non_participants}
	Implementing \eqref{eq:Control:local_PI} at each bus $k \in \setN$ allows for a faster reaction to disturbances at the cost of controller redundancy. By setting $\suDDA[4,m] \coloneqq \syDDA[4, m]$ at Stage 4 \ac{DDA} of the control structure for some agents $m \in \setM \subset \setN$, the PI control \eqref{eq:Control:local_PI} can be omitted at the agents in $\setM$ without affecting the steady state. Nevertheless, the measurements of the buses in $k \in \setM$ are still included in the Stage~2 \ac{DDA}. Note that at least one participating agent PI controller is required (see \cite[Remark~8]{Malan2022a}).
\end{remark}
\subsection{Weighting Function} \label{sec:Control:NL_Func}
\begin{figure}[!t]
	\centering
	\resizebox{0.7\columnwidth}{!}{%
		\tikzsetnextfilename{03_Img/nonlinear_weighting_function}%
		\begin{tikzpicture}[>=latex']
    
    \def\aw{0.5}
    \def\bw{1.5}
    \def\cw{2}
    
    \def\Xlim{6}
    \def\Ylim{4}
    \def\domainX{-\Xlim:\Xlim}
    \def\rangeY{-\Ylim+1:\Ylim}
    \def\scaleXY{0.4}
    
    \def\domainXa{-\Xlim:-\cw}
    \def\domainXb{-\cw:\cw}
    \def\domainXc{\cw:\Xlim}
    
    \begin{scope}[range=-\rangeY, scale=\scaleXY]
		\draw[very thin, color=gray!40, domain=\domainX] (-\Xlim,-\Ylim+1) grid (\Xlim,\Ylim);
    \end{scope}

	\draw[->,thick, scale=\scaleXY](-\Xlim-0.2,0) -- (\Xlim+0.2,0) node[below left=0.25, circle, inner sep=-0.4, align=center] {$\su$} coordinate (cMidRight);
	\draw[->,thick, scale=\scaleXY](0,-\Ylim+1-0.2) -- (0,\Ylim+0.2) node[below left] {$\sy$};

    
    \begin{scope}[range=\rangeY, scale=\scaleXY, very thick]
    	
    	\draw[domain=\domainXa, color=matlabCol1] plot[id=hw_l, samples=100] function{\aw*x + \bw*(x + \cw) - \bw*tanh(x + \cw)};
    	\draw[domain=\domainXb, color=matlabCol1] plot[id=hw_m] function{\aw*x};
    	\draw[domain=\domainXc, color=matlabCol1] plot[id=hw_r, samples=100] function{\aw*x + \bw*(x - \cw) - \bw*tanh(x - \cw)} node[right=0.15, rectangle, fill=white!50, inner sep=0.2] {$\shNLweight(\su)$};
    	
    	\draw[domain=\domainXa, color=matlabCol2] plot[id=dhw_l] function{\aw + \bw*tanh(x + \cw)*tanh(x + \cw)};
    	\draw[domain=\domainXb, color=matlabCol2] plot[id=dhw_m] function{\aw};
    	\draw[domain=\domainXc, color=matlabCol2] plot[id=dhw_r] function{\aw + \bw*tanh(x - \cw)*tanh(x - \cw)} node[right] {$\dFull{\shNLweight(\su)}{\suNLweight}$};
    \end{scope}
\end{tikzpicture}%
	}
	\caption{Example of the weighting function $\shNLweight$ \eqref{eq:Control:NL_Func} and its derivative \eqref{eq:Passivity:Ctrl:NL_der} on a unit grid, with $\saNLweight = 0.5$, $\sbNLweight = 1.5$ and $\scNLweight = 2$.}
	\label{fig:Control:NL_Func}
\end{figure}
To allow for a better utilisation of the tolerance band around $\svRef$, we desire a weighting function that assigns a low gain for errors within the tolerance band and a high gain for larger errors. We therefore define the class \classC{1} function $\syNLweight[k] = \shNLweight(\suNLweight[k])$ conforming to \eqref{eq:Prelim:Static_function}, where
\begin{align} \label{eq:Control:NL_Func}
	&\shNLweight(\su) \coloneqq \saNLweight \su + \sbNLweight \sgNLweight(\su) - \sbNLweight\tanh(\sgNLweight(\su)), \\
	\label{eq:Control:NL_input_func}
	&\sgNLweight(\su) \coloneqq \left\{ \begin{Array}{lr@{\,\,}l@{\,\,}l}
		\su + \scNLweight, \quad && \su &< -\scNLweight \\
		0, \quad & -\scNLweight \le& \su &\le \scNLweight \\
		\su - \scNLweight, \quad &  \scNLweight <& \su &
	\end{Array}\right.
\end{align}
and where \eqref{eq:Control:NL_input_func} describes a dead-zone parametrised by $\scNLweight$. An example of \eqref{eq:Control:NL_Func} is depicted in \autoref{fig:Control:NL_Func} along with its derivative. For a strictly increasing function as per \autoref{obj:Problem:weighted_errors}, set $\cramped{\saNLweight > 0}$ and $\sbNLweight > -\saNLweight$.
\subsection{Equilibrium Analysis} \label{sec:Control:Steady_State}
In a first step towards analysing the closed loop, we analyse the assumed equilibrium of the interconnected microgrid and four-stage controller (see \autoref{ass:Problem:Feasible_network}). Specifically, we verify that the proposed control yields an equilibrium which satisfies \autorefMulti{obj:Problem:weighted_errors, obj:Problem:power_sharing}.
\begin{proposition}[Controller equilibrium analysis] \label{prop:Control:desired_equilibrium}
	Consider the DC microgrid comprising \eqref{eq:Problem:node_unact_dynamics}, \eqref{eq:Problem:line_dynamics}, and \eqref{eq:Problem:Controlled_DGU} which is connected in feedback with the four-stage controller comprising \eqref{eq:Control:DDA}, \eqref{eq:Control:local_PI_vector}, and \eqref{eq:Control:NL_Func} as in \autoref{fig:Control:control_structure}. Let \autorefMulti{ass:Problem:Feasible_network, ass:Problem:Actuated_agents, ass:Problem:Connected_graphs} hold. Then, \autoref{obj:Problem:power_sharing} is met for the equilibrium imposed by the control structure. Moreover, \autoref{obj:Problem:weighted_errors} is achieved exactly for ideal integrators $\sdampCtrl = 0$ in \eqref{eq:Control:local_PI_vector}. For lossy integrators with $\sdampCtrl > 0$, the remaining error for \autoref{obj:Problem:weighted_errors} is be described by the steady-state value of $\vyDDA[2]$, where
	\begin{equation} \label{eq:Control:Steady_state_error_output}
		\vyDDA[2] = \frac{\sdampCtrl}{\skiCtrl (1 + \sdampCtrl \skpCtrl)} \vyDDA[4] .
	\end{equation}
\end{proposition}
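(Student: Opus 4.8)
The plan is to characterise the closed-loop equilibrium by setting all state derivatives in \eqref{eq:Control:DDA} and \eqref{eq:Control:local_PI_vector} (together with the microgrid dynamics) to zero; \autoref{ass:Problem:Feasible_network} ensures such an equilibrium exists, while \autorefMulti{ass:Problem:Actuated_agents, ass:Problem:Connected_graphs} supply at least one actuated bus and a connected $\graphGComm$. The first step is to fix the equilibrium of each \ac{DDA} stage. Setting $\vzDDAdot[s] = \vec{0}$ in \eqref{eq:Control:DDA} gives $\mLapComm[I]\vxDDA[s] = \vec{0}$; because $\mLapComm[I] = \skiDDA\mLapComm$ and $\graphGComm$ is connected, the Laplacian kernel collapses to multiples of $\vOneCol[N]$, so $\vyDDA[s] = \vxDDA[s]$ is a consensus vector. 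Left-multiplying the remaining equation $\vxDDAdot[s] = \vec{0}$ by $\vOneCol[N]^\Transpose$ and using $\vOneCol[N]^\Transpose\mLapComm = \vec{0}^\Transpose$ recovers the averaging identity \eqref{eq:Control:DDA_equilibrium}, namely $\vyDDA[s] = \tfrac{1}{N}(\vuDDAT[s]\vOneCol[N])\,\vOneCol[N]$, which applies here precisely because the interconnection inputs are constant at equilibrium.

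\autoref{obj:Problem:power_sharing} then follows at once. The power setpoints are the Stage-4 output, $\spSP = \vyDDA[4]$, which by the averaging identity is a consensus vector whose entries all equal the mean of the Stage-4 inputs; hence $\spSP[k] - \spSP[l] = 0$ for all $k,l \in \setN$. The self-feedback of any non-participating agents $m \in \setM$ only replaces this plain mean by the mean over the participating agents and leaves the conclusion intact.

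For \autoref{obj:Problem:weighted_errors} I would track the residual of the integrator chain. The PI input is the Stage-2 output, $\vuCtrl = \vyDDA[2]$, which by the averaging identity equals the consensus vector $\big(\tfrac{1}{N}\sum_{k\in\setN}\syNLweight[k]\big)\vOneCol[N]$ carrying the mean of the Stage-1 weighted voltage errors, whose vanishing is equivalent to \autoref{obj:Problem:weighted_errors} under the identification $\sh(\sv[k]) = \shNLweight(\sv[k] - \svRef) + \svRef$. For ideal integrators $\sdampCtrl = 0$, setting $\vxCtrldot = \vec{0}$ in \eqref{eq:Control:local_PI_vector} forces $\vuCtrl = \vec{0}$, so the mean weighted error vanishes and \autoref{obj:Problem:weighted_errors} holds exactly. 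For leaky integrators $\sdampCtrl > 0$, the equilibrium instead reads $\vuCtrl = \sdampCtrl\vxCtrl$ as in \eqref{eq:Control:leaky_PI_steady_state}, so $\vyDDA[2]$ no longer vanishes. Substituting $\vxCtrl = \vuCtrl/\sdampCtrl$ into the output equation $\vyCtrl = \skiCtrl\vxCtrl + \skpCtrl\vuCtrl$ expresses $\vyCtrl$ as a scalar multiple of $\vuCtrl = \vyDDA[2]$; since $\vuDDA[4] = \vyCtrl$ is already a consensus vector, the Stage-4 averaging gives $\vyDDA[4] = \vyCtrl$, and solving this chain for $\vyDDA[2]$ yields the proportionality \eqref{eq:Control:Steady_state_error_output}.

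The main obstacle is conceptual rather than computational: one must justify that the averaging identity \eqref{eq:Control:DDA_equilibrium} may be invoked at the closed-loop equilibrium, which requires arguing that every \ac{DDA} input is genuinely constant there and that the PI input (hence its output $\vyCtrl$) inherits the consensus structure from $\vyDDA[2]$. The connectedness assumed in \autoref{ass:Problem:Connected_graphs} is indispensable, as it is what reduces the Laplacian kernel to multiples of $\vOneCol[N]$; without it the Stage-2 and Stage-4 outputs need not be consensus vectors and the two objectives would not decouple. A secondary point requiring care is the bookkeeping for the non-participating set $\setM$ and the anti-windup zeroing of disconnected buses, which must be shown not to perturb the mean driving \autorefMulti{obj:Problem:weighted_errors, obj:Problem:power_sharing}.
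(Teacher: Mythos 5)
Your proposal is correct and takes essentially the same route as the paper's appendix proof: characterise the closed-loop equilibrium stage by stage, use the DDA averaging identity \eqref{eq:Control:DDA_equilibrium} at Stages~2 and~4 (yielding power sharing and identifying $\vyDDA[2]$ as the average weighted-error residual), then split on $\sdampCtrl = 0$ versus $\sdampCtrl > 0$ and chain the PI equilibrium relations to reach \eqref{eq:Control:Steady_state_error_output}. The only differences are cosmetic: you re-derive the averaging identity from the Laplacian kernel where the paper simply cites \cite{Freeman2006}, and your final algebraic step shares the paper's own looseness — taking \eqref{eq:Control:local_PI_vector} literally, the chain gives the factor $\sdampCtrl/(\skiCtrl + \sdampCtrl\skpCtrl)$ rather than $\sdampCtrl/(\skiCtrl(1+\sdampCtrl\skpCtrl))$, a discrepancy already present between the paper's proof and its stated proposition.
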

The proof of \autoref{prop:Control:desired_equilibrium} can be found in \autorefapp{sec:App_Proofs}. Through \autoref{prop:Control:desired_equilibrium} we thus confirm that the proposed controller yields an equilibrium which meets the requirements, even though the requirements are not perfectly met when leaky agent PI controllers are used. We also note that \autoref{prop:Control:desired_equilibrium} only considers the controlled microgrid already in equilibrium and does not consider the convergence to the equilibrium.
\begin{remark}[Compensating leaky-integral errors] \label{rem:Passive:Compensate_leaky_offset}
	As indicated by \eqref{eq:Control:Steady_state_error_output} in \autoref{prop:Control:desired_equilibrium}, the leaky agent PI controllers result in a constant steady-state error for the average voltage regulation (\autoref{obj:Problem:weighted_errors}). Since a positive $\vyDDA[2]$ corresponds to voltages below the desired $\svRef$, it follows that setting $\svRef$ above the actual desired voltage reference will result in higher bus voltages. Changing $\svRef$ thus allows the steady-state effects of the leaky integrators to be compensated.
	Moreover, notice that $\vyDDA[4]$ is the controller output, i.e.\ the power setpoint $\vpSP$ used for the \acp{DGU} (see \autoref{fig:Control:control_structure}). Thus, the error measure in \eqref{eq:Control:Steady_state_error_output}, which is only dependent on the controller output, can be used to determine the offset to $\svRef$ for exact voltage regulation.
	Note, however, that modifying $\svRef$ based on $\vpSP$ results in a new loop which requires an additional stability analysis.
\end{remark}
	\section{Subsystem Passivity Analysis} \label{sec:Passivity}
Having verified whether the desirable steady state is achieved by the controller, we now set about analysing the convergence to this steady state. With the aim of applying \autoref{thm:Prelim:Min_restrictive_diss} for the closed-loop stability, we first analyse the passivity properties of the individual subsystems. Since the steady-state bus voltages $\sveq[k]$ are unknown and non-zero, we investigate the passivity properties shifted to any plausible point of operation using \ac{EIP}. To this end, we construct an \ac{EIP} formulation for the DC microgrid from its constitutive elements in \autoref{sec:Passivity:MG}. This is followed by the respective analyses of the various controller stages in \autoref{sec:Passivity:Ctrl}. Note that we omit the bus indices $k$ and $l$ in this section where clear from context.
\subsection{DC Microgrid Passivity} \label{sec:Passivity:MG}
For the stability of the microgrid at the equilibrium $\vveq$, we desire an \ac{EIP} property relating the shifted input power setpoints $\vpeSP = \vpSP - \vpeqSP$ to the output voltage errors $\vve = \vv - \vveq$ of all nodes, since this port $(\vpeSP, \vve)$ is used by the controller in \autoref{fig:Control:control_structure}.
To this end, we derive \ac{EIP} properties for the load, \ac{DGU} and line subsystems of the microgrid, making sure to shift the subsystem dynamics to the assumed equilibrium in each case (see \autoref{ass:Problem:Feasible_network}). Thereafter, we combine the results of these subsystems, to construct an \ac{EIP} property for the microgrid as a whole. Where applicable, an analysis of the zero-state dynamics is performed to ensure the eventual stability of the controlled microgrid.
\subsubsection{Load Passivity} \label{sec:Passivity:MG:Loads}
Let the unactuated bus dynamics in \eqref{eq:Problem:node_unact_dynamics} for the buses in $\setNUnact$ be shifted to the equilibrium $(\vieqTx, \sveq)$, yielding
\begin{equation} \label{eq:Problem:Shited_load_dynamics}
	\sCeq\svedot = -\vEPT[k]\vieTx - \sILeve + (\vEPT[k]\vieqTx + \sILveq) ,
\end{equation}
for the static load function shifted according to \eqref{eq:Prelim:Static_function}. In \eqref{eq:Problem:Shited_load_dynamics}, $\vEPT[k]\vieqTx = -\sILveq$ since the load is fully supplied by the cumulative line currents in steady state. 
\begin{proposition}[Load \ac{EIP}] \label{prop:Passivity:Load_passivity}
	The shifted load dynamics in \eqref{eq:Problem:Shited_load_dynamics} are \ac{OFP}$(\siOutLoad)$ w.r.t.\ the input-output pair $(-\vEPT[k]\vieTx,\sve)$ with $\siOutLoad = \scLoadlo$ the smallest gradient of the static load function $\sILv$.
\end{proposition}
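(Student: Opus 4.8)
The plan is to exhibit an explicit \ac{EIP} storage function for the shifted load dynamics \eqref{eq:Problem:Shited_load_dynamics} and to bound its derivative by the \ac{OFP} supply rate with $\siOut = \scLoadlo$. Since this subsystem is a single integrator (the equivalent capacitor voltage) driven by the aggregated line currents and the nonlinear load, I would take the natural capacitor-energy function shifted to the equilibrium,
\begin{equation*}
	\sSLoad = \tfrac{1}{2}\sCeq \sve^2 ,
\end{equation*}
which is class \classC{1}, nonnegative, and vanishes at $\sve = 0$, so it qualifies as an \ac{EIP} storage function in the sense of \autoref{def:Passive:EIP}.

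First I would differentiate $\sSLoad$ along \eqref{eq:Problem:Shited_load_dynamics}, giving $\sSLoaddot = \sve\,\sCeq\svedot$. The key simplification is that the constant term $(\vEPT[k]\vieqTx + \sILveq)$ vanishes through the steady-state power balance $\vEPT[k]\vieqTx = -\sILveq$ noted below \eqref{eq:Problem:Shited_load_dynamics}, so that $\sCeq\svedot = -\vEPT[k]\vieTx - \sILeve$. Substituting the designated port variables $\su = -\vEPT[k]\vieTx$ and $\sy = \sve$ then yields
\begin{equation*}
	\sSLoaddot = \su\,\sy - \sve\,\sILeve .
\end{equation*}

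Next I would bound the residual term $\sve\,\sILeve$ from below using the sector characterisation of the static load. Because the ZIP load \eqref{eq:Problem:ZIP_model} conforms to \eqref{eq:Prelim:Static_function} and its gradient is bounded below by $\scLoadlo$ (the infimum of $\dFull{\sILv}{\sv}$ over the admissible voltage range, taken piecewise across the kink at $\svCrit$), the lower sector inequality underlying \autoref{prop:Prelim:Static_load}, i.e.\ the lower branch of \eqref{eq:Prelim:Load_sector_alt} with slope $\scLoadlo$, gives $\scLoadlo\sve^2 \le \sve\,\sILeve$ for every admissible $\sve$. Note that only this lower bound is required, so no restriction on the upper gradient enters here. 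Combining it with the expression for $\sSLoaddot$ gives $\sSLoaddot \le \su\,\sy - \scLoadlo\,\sy^2$, which is exactly the \ac{OFP}$(\scLoadlo)$ supply rate of \autoref{def:Prelim:passive_rates}, establishing the claim.

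I expect the only delicate point to be conceptual rather than computational: the smallest gradient $\scLoadlo$ may be \emph{negative}, since the constant-power component $\sP[k]/\sv[k]$ contributes a slope $-\sP[k]/\sv[k]^2 < 0$. Consequently the load is \ac{OFP} only with a possibly negative index, reflecting its potentially destabilising character that must later be compensated by the remaining microgrid subsystems; the derivation nonetheless goes through unchanged because the sector bound $\scLoadlo\sve^2 \le \sve\,\sILeve$ holds irrespective of the sign of $\scLoadlo$. I would also verify that $\scLoadlo$ is taken as a genuine bound over the full operating range, including the resistive branch $\sv < \svCrit$ where the slope equals the positive $\sZCritinv[k]$, so that the binding (smallest) slope is attained on the ZIP branch.
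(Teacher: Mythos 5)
Your proposal is correct and follows essentially the same route as the paper's proof: the shifted capacitor-energy storage function $\sSLoad = \tfrac{\sCeq}{2}\sve^2$, cancellation of the constant term via the steady-state balance $\vEPT[k]\vieqTx = -\sILveq$, and the lower sector bound $\scLoadlo\sve^2 \le \sve\sILeve$ from \autoref{prop:Prelim:Static_load} to obtain the \ac{OFP}$(\scLoadlo)$ supply rate. Your added observations—that only the lower sector bound is needed and that $\scLoadlo$ may be negative due to the constant-power term—are sound and correspond to what the paper discusses separately in \autoref{rem:Passivity:load_measure}.
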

\begin{proof}
	Consider the storage function $\sSLoad$ along with its time derivative
	\begin{align} \label{eq:Passive:load_storage}
		\sSLoad &= \frac{\sCeq}{2} \sve^2, \\
		\label{eq:Passive:load_Hdot}
		\sSLoaddot &= -\sve\vEPT[k]\vieTx - \sve\sILeve .
	\end{align}
	Since the static load function $\sILv$ is \ac{IFOFP} according to \autoref{prop:Prelim:Static_load}, it is bounded from below by $\scLoadlo \sve^2 \le \sve\sILeve$ (see \eqref{eq:Prelim:Load_sector_alt}). Incorporate this lower bound into \eqref{eq:Passive:load_Hdot} to obtain
	\begin{equation} \label{eq:Passive:load_Hdot_ineq}
		\sSLoaddot \le \swLoad \coloneqq -\sve\vEPT[k]\vieTx - \scLoadlo \sve^2
	\end{equation}
	which yields the \ac{OFP} property from \autoref{def:Prelim:passive_rates}.
\end{proof}
\begin{remark}[ZIP load passivity] \label{rem:Passivity:load_measure}
	\autoref{prop:Passivity:Load_passivity} and \eqref{eq:Prelim:Load_gradient} demonstrate that the passivity properties of the unactuated buses are directly linked to the smallest gradient of the load function. For the ZIP load in \eqref{eq:Problem:ZIP_model}, this yields
	\begin{equation} \label{eq:Passive:ZIP_passivity}
		\scLoadlo = \min\left(\sZinv, \, \sZinv - \frac{\sP}{\svCrit^2}, \, \sZCritinv\right).
	\end{equation}
	Considering the strictly passive case $(\scLoadlo = 0)$ along with $\sI, \sP \ge 0$ yields the passivity condition $\sZinv \svCrit^2 \ge \sP$ frequently used in the literature \cite{DePersis2017, Fan2019, Strehle2020DC, Nahata2020, Cucuzzella2019pbc}. 
\end{remark}
\subsubsection{DGU Passivity} \label{sec:Passivity:MG:DGU}
Shift the states $(\se, \sii, \sv)$ and inputs $(\spSP, \viTx)$ of the DGU dynamics in \eqref{eq:Problem:Controlled_DGU} for the buses in $\setNAct$ to the respective error variables $(\see, \sie, \sve)$ and $(\speSP, \vieTx)$ to obtain \eqref{eq:Passivity:Shifted_DGU} on the next page,
\begin{figure*}[ht!]
	\addtocounter{figure}{-1}
	\begin{equation} \label{eq:Passivity:Shifted_DGU}
			\!\begin{bmatrix}
				\seDGUedot \\ \sL \siedot \\ \sCeq \svedot
			\end{bmatrix} \!\! =  \underbrace{ \!\!\begin{bmatrix}
					0 & -\sv & -\sieq \\
					\skiDGU & \sRDGUdamp \!-\! \sR \!-\! \skpDGU \sv & -1 \!-\! \skpCtrl \!\: \sieq \\
					0 & 1 & - \frac{\sILeve}{\sve}
				\end{bmatrix}\!\!}_{\textstyle \mADGU(\sv, \sieq, \frac{\sILeve}{\sve})} \, \underbrace{\!\!\begin{bmatrix}
					\seDGUe \\ \sie \\ \sve
				\end{bmatrix}\!\!}_{\textstyle \vxeDGU} + \, \underbrace{\!\!\begin{bmatrix}
					1 \\
					\skpDGU \\
					0 
				\end{bmatrix}\!\!}_{\textstyle \vbDGU[1]} \, \speSP 
			{\,-} \underbrace{\!\!\begin{bmatrix}
					0 \\ 0 \\ 1
				\end{bmatrix}\!\!}_{\textstyle \vbDGU[2]} \! \vEPT[k] \vieTx
			+ \underbrace{\!\!\begin{bmatrix}
					\speqSP - \sveq \sieq \\ \skiCtrl \seDGUeq + (\sRDGUdamp \!-\! \sR) \, \sieq - \sveq + \svRef - \skpCtrl (\speqSP \:\!-\!\: \sveq \sieq) \\ \sieq -\vEPT[k]\vieqTx -\sILveq
				\end{bmatrix}\!\!}_{\textstyle \vdistDGU}
	\end{equation}
	\hrulefill
\end{figure*}
where the static load function is incorporated into the matrix $\mADGU$.
Furthermore, the measured power $\spp = \sv \sii = \sv (\sie + \sieq)$ in \eqref{eq:Problem:DGU_Power_Control} is left partially in unshifted variables such that $\mADGU$ is also dependent on the unshifted voltage $\sv$ and the steady-state current $\sieq$.

Note that the constant $\vdistDGU$ in \eqref{eq:Passivity:Shifted_DGU} is found by setting the error variables $(\speSP, \vieTx, \seDGUe, \sie, \sve)$ and their time derivatives to zero. As such, the constant $\vdistDGU \equiv 0$ can be disregarded in the passivity analysis. We now analyse the shifted nonlinear system in \eqref{eq:Passivity:Shifted_DGU} for \ac{EIP}.
\begin{theorem}[\ac{EIP} \acp{DGU}] \label{thm:Passivity:Shifted_DGU}
	The shifted DGU dynamics in \eqref{eq:Passivity:Shifted_DGU} are simultaneously \ac{IFOFP}$(\siInDGU[1], \siOutDGU)$ w.r.t.\ the input-output pair $(\speSP, \sve)$ and \ac{IFP}$(\siInDGU[2])$ w.r.t.\ the input-output pair $(-\vET[k]\vieTx, \sve)$, if a feasible solution can be found for
	%
	\begin{equation} \label{eq:Passivity:Opt_DGU_Passivity}
		\begin{array}{@{}c@{\,}l@{}}
			\!\displaystyle \max_{\mPDGU,\, \siInDGU[1],\, \siInDGU[2],\, \siOutDGU} & \siInDGU[1] + \siInDGU[2] + \siOutDGU \\
			\text{s.t.} & \eqref{eq:Passivity:Opt_DGU_Passivity_condition} \,\, \text{holds}\,\, \forall \, \sv \in \setV \subseteq \RealsPos, \forall \, \sieq \in \setIeq \subseteq \Reals
		\end{array}
	\end{equation}
	\begin{figure*}[ht!]
		\addtocounter{figure}{-1}
		\begin{equation} \label{eq:Passivity:Opt_DGU_Passivity_condition}
			\begin{aligned}
				\begin{bmatrix}
					\mQDGU(\sv, \sieq, \scLoadlo) + \siOutDGU\vcDGU\vcDGUT & \mPDGU \vbDGU[1] - \frac{1+\siInDGU[1]\siOutDGU}{2} \vcDGU & \mPDGU \vbDGU[2] - \frac{1}{2} \vcDGU \\
					\vbDGUT[1] \mPDGU - \frac{1+\siInDGU[1]\siOutDGU}{2} \vcDGUT & \siInDGU[1] & 0 \\
					\vbDGUT[2] \mPDGU - \frac{1}{2} \vcDGUT & 0 & \siInDGU[2]
				\end{bmatrix} \negDef 0, \qquad \mPDGU \posDef 0
			\end{aligned}
		\end{equation}
		\hrulefill
	\end{figure*}
	where $\mQDGU(\sv, \sieq, \scLoadlo) \coloneqq \mPDGU \mADGU(\sv, \sieq, \scLoadlo) + \mADGUT(\sv, \sieq, \scLoadlo) \mPDGU$,
	\begin{equation} \label{eq:Passivity:DGU_mod_A}
		\mADGU(\sv, \sieq, \scLoadlo) = \begin{bmatrix}
			0 & -\sv & -\sieq \\
			\skiDGU & \sRDGUdamp - \sR - \skpDGU \sv & -1 - \skpCtrl \sieq\\
			0 & 1 & -\scLoadlo
		\end{bmatrix} ,
	\end{equation}
	and with $\siInDGU[1], \siInDGU[2], \siOutDGU \in \Reals$, $\scLoadlo$ as in \eqref{eq:Prelim:Load_gradient} and $\vcDGU = [0,0,1]^\Transpose$.
\end{theorem}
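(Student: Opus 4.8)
The plan is to certify both dissipation inequalities with a \emph{single} quadratic storage function and to reduce the two simultaneous \ac{EIP} claims to the one matrix inequality \eqref{eq:Passivity:Opt_DGU_Passivity_condition}. I would take $\sSDGU = \vxeDGUT \mPDGU \vxeDGU$ with the decision variable $\mPDGU \posDef 0$, so that $\sSDGU \ge 0$ vanishes exactly at the shifted origin $\vxeDGU = \vec{0}$, i.e.\ at the equilibrium. Since the constant term $\vdistDGU \equiv 0$ may be dropped (as argued below \eqref{eq:Passivity:Shifted_DGU}), differentiating $\sSDGU$ along the shifted dynamics \eqref{eq:Passivity:Shifted_DGU} yields, after accounting for the filter weights $\sL, \sCeq$ on the left-hand side, the symmetric Lyapunov operator $\mQDGU = \mPDGU \mADGU + \mADGUT \mPDGU$ acting on the state, together with the two input cross-terms $2\vxeDGUT \mPDGU \vbDGU[1] \speSP$ and $-2\vxeDGUT \mPDGU \vbDGU[2] \vEPT[k]\vieTx$.

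Next I would form the combined supply rate $\sw$ as the sum of the target \ac{IFOFP} rate for the port $(\speSP,\sve)$ and the target \ac{IFP} rate for the port $(-\vEPT[k]\vieTx,\sve)$ taken from \autoref{def:Prelim:passive_rates}, and eliminate the shared output via $\sve = \vcDGUT\vxeDGU$ (recall $\vcDGU = [0,0,1]^\Transpose$). Collecting $\sSDGUdot - \sw$ as a quadratic form in the stacked vector $\vec{\chi} \coloneqq [\vxeDGUT,\, \speSP,\, -\vEPT[k]\vieTx]^\Transpose$ reproduces, block by block, the matrix of \eqref{eq:Passivity:Opt_DGU_Passivity_condition}: the state block becomes $\mQDGU + \siOutDGU\vcDGU\vcDGUT$, the state--input couplings become $\mPDGU\vbDGU[1] - \tfrac{1+\siInDGU[1]\siOutDGU}{2}\vcDGU$ and $\mPDGU\vbDGU[2] - \tfrac{1}{2}\vcDGU$, the pure-input blocks become $\siInDGU[1]$ and $\siInDGU[2]$, and the coupling between the two inputs vanishes because neither the supply rate nor the input-affine dynamics mixes them. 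Hence $\sSDGUdot - \sw = \vec{\chi}^\Transpose M \vec{\chi}$ for the matrix $M$ of \eqref{eq:Passivity:Opt_DGU_Passivity_condition}, and $\sSDGUdot \le \sw$ --- i.e.\ both \ac{EIP} properties at once --- follows from $M \negDef 0$. I expect this bookkeeping to be routine; it is the index relation $\siCross = \tfrac{1}{2}(1+\siIn\siOut)$ that turns the generic \ac{IFOFP} rate into the displayed off-diagonal entry.

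The genuine obstacle is the load nonlinearity. In the true dynamics the $(3,3)$ entry of $\mADGU$ carries the state-dependent secant slope $\sILeve/\sve$, whereas \eqref{eq:Passivity:DGU_mod_A} freezes it at the constant $\scLoadlo$. The enabling fact is that \autoref{prop:Prelim:Static_load}, through \eqref{eq:Prelim:Static_gradient}, guarantees $\scLoadlo \le \sILeve/\sve$ pointwise, i.e.\ the lower sector bound $\sve\,\sILeve \ge \scLoadlo\,\sve^2$. Replacing $\sILeve/\sve$ by $\scLoadlo$ changes $\sSDGUdot - \sw$ only by the residual $-2\,(\sILeve/\sve - \scLoadlo)\,\sve\,(\vxeDGUT\mPDGU\vcDGU)$, and showing that this residual cannot destroy negativity --- so that $M \negDef 0$ evaluated at $\scLoadlo$ genuinely certifies the inequality at the true slope --- is the step I expect to be most delicate. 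This is precisely what ties the construction to the \emph{lower} load slope $\scLoadlo$ (and not $\scLoadhi$), since it is the monotonicity of the load, rather than an arbitrary choice, that fixes both the bound and the orientation of the inequality.

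Finally, $\mADGU(\sv,\sieq,\scLoadlo)$ still depends on the operating point through $\sv$ and $\sieq$, so no single constant certificate can work. I would close the argument by demanding \eqref{eq:Passivity:Opt_DGU_Passivity_condition} for every $\sv \in \setV \subseteq \RealsPos$ and every $\sieq \in \setIeq \subseteq \Reals$, which is exactly the quantifier appearing in \eqref{eq:Passivity:Opt_DGU_Passivity}. Because a common $\mPDGU$ and common indices $(\siInDGU[1],\siInDGU[2],\siOutDGU)$ must satisfy this whole family of inequalities, feasibility of the optimisation furnishes one storage function valid along all admissible trajectories, which is the claimed \ac{EIP} conclusion. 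The objective $\siInDGU[1] + \siInDGU[2] + \siOutDGU$ plays no role in correctness; it merely selects the least conservative indices for the subsequent application of \autoref{thm:Prelim:Min_restrictive_diss}.
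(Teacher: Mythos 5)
Your proposal is correct and takes essentially the same route as the paper's proof: the paper chooses the quadratic storage $\sSDGU = \vxeDGUT\mPDGU\,\Diag[1,\sL,\sCeq]\,\vxeDGU$ --- the weighting by $\Diag[1,\sL,\sCeq]$ being the precise form of your ``accounting for the filter weights'', so that $\sSDGUdot$ yields exactly $\mQDGU = \mPDGU\mADGU + \mADGUT\mPDGU$ with input blocks $\mPDGU\vbDGU[1]$ and $\mPDGU\vbDGU[2]$ --- then forms the same combined supply rate $\swDGU$, collects $\sSDGUdot - \swDGU$ into the block matrix of \eqref{eq:Passivity:Opt_DGU_Passivity_condition}, and requires that inequality for all $\sv\in\setV$ and $\sieq\in\setIeq$, with the objective playing no role in the dissipativity argument, exactly as you say. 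The one step you flag as delicate is precisely where the paper is least rigorous: it asserts that the sector bound $-\sve\sILeve \le -\scLoadlo\sve^2$ ``can be incorporated'' into the quadratic bound without further justification, yet --- as your residual $-2\,(\sILeve/\sve - \scLoadlo)\,\sve\,(\vxeDGUT\mPDGU\vcDGU)$ makes explicit --- that substitution is valid for every state only if $\vxeDGUT\mPDGU\vcDGU$ carries the sign of $\sve$, i.e.\ only under the unstated structural restriction $\mPDGU\vcDGU \propto \vcDGU$ (vanishing off-diagonal entries in the third column of $\mPDGU$, e.g.\ a diagonal or suitably block-diagonal $\mPDGU$), so your caution is warranted and the paper offers no resolution beyond the bare assertion.
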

\begin{proof}
	Consider for \eqref{eq:Passivity:Shifted_DGU} the storage function 
	\begin{equation} \label{eq:Passivity:DGU_Storage}
		\sSDGU = \begin{bmatrix}
			\seDGUe \\ \sie \\ \sve
		\end{bmatrix}^\Transpose \!\!\!
		\mPDGU \!
		\begin{bmatrix}
			\seDGUe \\ \sL \sie \\ \sCeq \sve
		\end{bmatrix} ,
	\end{equation}
	with $\mPDGU \posDef 0$. The time derivative of \eqref{eq:Passivity:DGU_Storage} is
	\begin{equation} \label{eq:Passivity:DGU_Storage_dot}
		\sSDGUdot \!=\!\! \begin{bmatrix}
			\vxeDGU \\ \speSP \\ \vEPT[k] \vieTx
		\end{bmatrix}^{\!\!\Transpose} \!\! \begin{bmatrix}
		\mQDGU(\sv, \sieq, \frac{\sILeve}{\sve}) & \mPDGU \vbDGU[1] & \mPDGU \vbDGU[2] \\
		\vbDGUT[1] \mPDGU & 0 & 0 \\
		\vbDGUT[2] \mPDGU & 0 & 0
		\end{bmatrix}\!\!\! \begin{bmatrix}
			\vxeDGU \\ \speSP \\ \vEPT[k] \vieTx
		\end{bmatrix}\! ,
	\end{equation}
	with $\vxeDGU$ as in \eqref{eq:Passivity:Shifted_DGU}.
	Since it follows from \eqref{eq:Prelim:Load_sector_alt} that $-\sve \sILeve \le -\scLoadlo\sve^2$, this bound can be incorporated into the inequality
	\begin{equation} \label{eq:Passivity:DGU_Storage_dot_load_bound}
		\sSDGUdot \!\le\!\! \begin{bmatrix}
			\vxeDGU \\ \speSP \\ \vEPT[k] \vieTx
		\end{bmatrix}^{\!\!\Transpose} \!\! \begin{bmatrix}
			\mQDGU(\sv, \sieq, \scLoadlo) & \mPDGU \vbDGU[1] & \mPDGU \vbDGU[2] \\
			\vbDGUT[1] \mPDGU & 0 & 0 \\
			\vbDGUT[2] \mPDGU & 0 & 0
		\end{bmatrix}\!\!\! \begin{bmatrix}
			\vxeDGU \\ \speSP \\ \vEPT[k] \vieTx
		\end{bmatrix}\! .
	\end{equation}
	The desired \ac{IFOFP} and \ac{IFP} properties for the \ac{DGU} are described by the supply rate
	\begin{equation} \label{eq:Passivity:DGU_Supply_rate}
		\begin{aligned}
			\swDGU = \;& (1 + \siInDGU[1]\siOutDGU)\speSP\sve - \siInDGU[1](\speSP)^2 - \siOutDGU\sve^2 \\
			&-  \sve \vEPT[k]\vieTx - \siInDGU[2]\left(\vEPT[k]\vieTx\right)^2
		\end{aligned}
	\end{equation}
	These properties are guaranteed, if $\sSDGUdot - \swDGU < 0$ for all valid inputs and outputs and for $\sv \in \setV$ and $\sieq \in \setIeq$. Combining \eqref{eq:Passivity:DGU_Storage_dot_load_bound} and \eqref{eq:Passivity:DGU_Supply_rate} in this manner directly leads to constraint \eqref{eq:Passivity:Opt_DGU_Passivity_condition} in \eqref{eq:Passivity:Opt_DGU_Passivity}. Finally, the objective function in \eqref{eq:Passivity:Opt_DGU_Passivity} seeks to find the largest indices for which the constraints are satisfied in a similar manner to \autoref{thm:Prelim:Min_restrictive_diss}.
\end{proof}
Although \autoref{thm:Passivity:Shifted_DGU} demonstrates the \ac{EIP} of the actuated buses, notice that the $\seDGUe$ and $\sieq$ of \eqref{eq:Passivity:Shifted_DGU} are not included in the supply rate $\swDGU$ in \eqref{eq:Passivity:DGU_Supply_rate}. As such, an investigation of the zero state dynamics of the \ac{DGU} is required.
\begin{proposition}[\ac{ZSO} \acp{DGU}] \label{prop:Passivity:DGU_zero_state}
	The shifted \ac{DGU} dynamics in \eqref{eq:Passivity:Shifted_DGU} are \ac{ZSO}.
\end{proposition}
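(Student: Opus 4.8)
The plan is to apply \autoref{def:Passive:ZSO} directly to the ports identified in \autoref{thm:Passivity:Shifted_DGU}: I set both exogenous inputs $\speSP$ and $\vEPT[k]\vieTx$, as well as the output $\sve$, identically to zero, and then peel off the two remaining error states one at a time by exploiting the near-triangular structure of $\mADGU$. Since $\vdistDGU \equiv 0$ has already been argued to encode the equilibrium condition, the shifted dynamics \eqref{eq:Passivity:Shifted_DGU} are genuinely homogeneous, so this back-substitution is clean.

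First I would observe that $\sve \equiv 0$ immediately gives $\svedot \equiv 0$, and that the shifted load current vanishes at zero voltage error, i.e.\ $\sILeve = 0$ when $\sve = 0$, by the shifting convention of \eqref{eq:Prelim:Static_function}. Substituting this together with $\vEPT[k]\vieTx \equiv 0$ into the third row of \eqref{eq:Passivity:Shifted_DGU} collapses it to $0 = \sie$, thereby recovering the current error. This is the one place where the entry $-\frac{\sILeve}{\sve}$ of $\mADGU$ must be handled carefully: rather than evaluating the ratio pointwise at $\sve = 0$, I would use that the actual contribution to the dynamics is $-\frac{\sILeve}{\sve}\cdot\sve = -\sILeve$, which is well defined and vanishes, so the third equation is purely algebraic in $\sie$.

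Having $\sie \equiv 0$ forces $\siedot \equiv 0$, so I would then read off the second row of \eqref{eq:Passivity:Shifted_DGU}. With $\sie$, $\sve$ and $\speSP$ all identically zero, every term except $\skiDGU \seDGUe$ drops out, leaving $0 = \skiDGU \seDGUe$. Because $\skiDGU > 0$ is a standing assumption on the \ac{DGU} controller gains, this yields $\seDGUe \equiv 0$, and the three error states together give $\vxeDGU \equiv \vec{0}$, establishing the claim.

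I do not expect a genuine analytical obstacle here; the argument is a sequential observability chain enabled by the structure that the integral state $\seDGUe$ enters the current equation through the nonzero gain $\skiDGU$, while the current $\sie$ enters the voltage equation with a unit coefficient. The only points requiring attention are (i) confirming that the load nonlinearity contributes nothing at $\sve = 0$ so that the third equation isolates $\sie$, and (ii) invoking $\skiDGU > 0$ to invert the final step; neither presents a difficulty.
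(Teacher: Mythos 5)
Your proposal is correct and follows essentially the same argument as the paper: set $\speSP \equiv 0$, $\vieTx \equiv 0$, $\sve \equiv 0$, use $\vdistDGU = 0$ and $\sILe(0)=0$ to read $\sie \equiv 0$ off the $\svedot$ equation, and then conclude $\seDGUe \equiv 0$ from the $\siedot$ equation via $\skiDGU > 0$. Your extra care in treating the $-\sILeve/\sve$ entry as the well-defined product $-\sILeve$ rather than a pointwise ratio at $\sve = 0$ is a minor presentational refinement of the paper's terser step, not a different route.
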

\begin{proof}
	In \eqref{eq:Passivity:Shifted_DGU}, set the inputs $\speSP \equiv 0$, $\vieTx \equiv 0$ and the output $\sve \equiv 0$. Since $\vdistDGU = 0$ and $\sILe(0) = 0$, verify from the equation for $\svedot$ that $\sie \equiv 0$. From the equation for $\siedot$, it then follows that $\seDGUe \equiv 0$ which concludes this proof.
\end{proof}
\begin{remark}[Compensating non-passive loads] \label{rem:Passivity:Non_passive_loads}
	As demonstrated in \cite{Cucuzzella2023}, adding a term dependent on $\svdot[k]$ to the regulator output $\svVSC[k]$ in \eqref{eq:Problem:DGU_Power_Control} allows for damping to be added to the unactuated state $\sv[k]$. This in turn allows for regulation in the presence of non-passive loads and can yield more favourable passivity indices when applying \autoref{thm:Passivity:Shifted_DGU}.
\end{remark}
\subsubsection{Line Passivity} \label{sec:Passivity:MG:Lines}
The dynamics of the line subsystem \eqref{eq:Problem:line_dynamics} shifted to the equilibrium $(\sieqTx, \vveq)$ yield
\begin{equation} \label{eq:Passivity:Shifted_line}
	\sL[kl]\sieTxdot = -\sR[kl]\sieTx + \vEPT[kl]\vve ,
\end{equation}
which can now be analysed for passivity.
\begin{proposition}[\ac{OFP} lines] \label{prop:Passivity:Line_Passivity}
	The shifted line dynamics in \eqref{eq:Passivity:Shifted_line} are \ac{OFP}$(\siOutTx)$ with $\siOutTx = \sR[kl]$ w.r.t.\ the input-output pair $(\vEPT[kl]\vve, \sieTx)$ with the storage function
	\begin{equation} \label{eq:Passivity:Line_Storage}
		\sSTx = \frac{\sL[kl]}{2}\sieTx^2 .
	\end{equation}
\end{proposition}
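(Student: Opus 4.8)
The plan is to verify the dissipation inequality of \autoref{def:Prelim:passive_rates} directly, exploiting the fact that the shifted line dynamics \eqref{eq:Passivity:Shifted_line} are linear and time-invariant. Because the equilibrium shift from $\siTx[kl]$ to $\sieTx$ and from $\vv$ to $\vve$ is affine and the residual constant term cancels (the line equation is linear), no sector bound on a nonlinearity is required here, in contrast to the load case in \autoref{prop:Passivity:Load_passivity}. I would therefore simply differentiate the quadratic storage function and match the result against the \ac{OFP} supply rate $\sw = \vuT\vy - \siOut\vyT\vy$ with the designated port $(\vEPT[kl]\vve, \sieTx)$.

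Concretely, the first step is to compute $\sSTxdot = \sL[kl]\sieTx\sieTxdot$ along the trajectory. Substituting the dynamics $\sL[kl]\sieTxdot = -\sR[kl]\sieTx + \vEPT[kl]\vve$ then yields $\sSTxdot = -\sR[kl]\sieTx^2 + \sieTx\vEPT[kl]\vve$. The second step is to read off the input $\su = \vEPT[kl]\vve$ and output $\sy = \sieTx$, so that the derivative takes the form $\sSTxdot = \su\sy - \sR[kl]\sy^2$. This is exactly the \ac{OFP} supply rate with $\siOutTx = \sR[kl]$, and in fact holds with equality rather than inequality.

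The notable feature is that there is no genuine obstacle: the output-feedback passivity margin $\siOutTx = \sR[kl]$ is supplied directly by the (strictly positive) line resistance, and the single dissipating term $-\sR[kl]\sieTx^2$ is precisely the quadratic output penalty demanded by the supply rate. The only point worth checking carefully is that the storage function is nonnegative and vanishes at the equilibrium $\sieTx = 0$, which is immediate from $\sL[kl] > 0$, so that \eqref{eq:Passivity:Line_Storage} is a valid \ac{EIP} storage function in the sense of \autoref{def:Passive:EIP}. This completes the argument.
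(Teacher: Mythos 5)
Your proof is correct and follows essentially the same route as the paper: differentiate the quadratic storage function \eqref{eq:Passivity:Line_Storage} along the shifted dynamics \eqref{eq:Passivity:Shifted_line} and observe that $\sSTxdot = \sieTx\vEPT[kl]\vve - \sR[kl]\sieTx^2$ is exactly an \ac{OFP} supply rate with $\siOutTx = \sR[kl]$. Your additional remarks (equality rather than inequality, nonnegativity of the storage function, and the absence of any sector-bound argument) are accurate but the paper treats the verification as immediate.
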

\begin{proof}
	The proof follows trivially by verifying that 
	\begin{equation} \label{eq:Passivity:Line_Hdot}
		\sSTxdot = \sieTx\vEPT[kl]\vve -\sR[kl]\sieTx^2 \eqqcolon \swTx ,
	\end{equation}
	where $\swTx$ in an \ac{OFP} supply rate as per \autoref{def:Prelim:passive_rates}.
\end{proof}
\subsubsection{Interconnected Microgrid Dissipativity} \label{sec:Passivity:MG:Interconnected}
Having separately analysed the subsystems comprising the microgrid, we now combine the results to formulate the dissipativity of the full microgrid w.r.t.\ the input-output pair $(\vpeSP, \vve)$. For simplicity, we group the buses according to their actuation states \eqref{eq:Problem:actuation}. Thus, $\vpeSP = [\vpeSPActT, \vpeSPUnactT]^\Transpose$ and $\vve = [\vveActT, \vveUnactT]^\Transpose$ have the same dimensions. Note that we include the inputs $\vpeSPUnact$ for the unactuated buses in $\setNUnact$ as provided by the four-stage controller (see \autoref{fig:Control:control_structure}), even though these inputs are not used.
\begin{proposition}[Microgrid dissipativity] \label{prop:Passivity:MG_Supply}
	A DC microgrid comprising DGUs \eqref{eq:Problem:Controlled_DGU}, lines \eqref{eq:Problem:line_dynamics} and loads \eqref{eq:Problem:node_unact_dynamics} with an interconnection topology described by a connected graph $\graphGPhys$ is dissipative w.r.t.\ the supply rate
	\begin{equation} \label{eq:Passive:MG_supply_rate}
		\begin{aligned}
			\swMGdep =& \,  (1+\siInDGU[1]\siOutDGU)\vpeSPActT \vveAct - \siInDGU[1]\vpeSPActT \vpeSPAct  \\
			& - \siOutDGU\vveActT\vveAct - \siOutLoad \vveUnactT\vveUnact ,
		\end{aligned}
	\end{equation}
	if $\cramped{\siInDGU[2] +\siOutTx \ge 0}$ for the worst-case indices of the buses and lines calculated in \autoref{prop:Passivity:Load_passivity} $(\siOutLoad)$, \autoref{prop:Passivity:Line_Passivity} $(\siOutTx)$, and \autoref{thm:Passivity:Shifted_DGU} $(\siInDGU[1], \siInDGU[2], \siOutDGU)$, i.e.\
	\begin{equation} \label{eq:Passive:worst_case_indices}
		\begin{aligned}
			\siInDGU[1] \!&=\! \min_{k \in \setNAct} \siInDGU[1, k], \!\! &
			\siInDGU[2] \!&=\! \min_{k \in \setNAct} \siInDGU[2, k], \!\! &
			\siOutDGU \!&=\! \min_{k \in \setNAct} \siOutDGU[k], \!\! \\
			\siOutLoad \!&=\! \min_{k \in \setNUnact} \siOutLoad[k], \!\! &
			\siOutTx \!&=\! \min_{kl \in \graphEPhys} \siOutTx[k] .
		\end{aligned}
	\end{equation}
\end{proposition}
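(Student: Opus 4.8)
My plan is to certify the microgrid supply rate by additively composing the storage functions of the constituent subsystems. Concretely, I would take $\sSMG \coloneqq \sum_{k\in\setNAct}\sSDGU[k] + \sum_{k\in\setNUnact}\sSLoad[k] + \sum_{kl\in\graphEPhys}\sSTx[kl]$, with the summands taken from \eqref{eq:Passivity:DGU_Storage}, \eqref{eq:Passive:load_storage}, and \eqref{eq:Passivity:Line_Storage}. Differentiating along the interconnected dynamics and invoking the subsystem dissipation inequalities of \autoref{thm:Passivity:Shifted_DGU}, \autoref{prop:Passivity:Load_passivity}, and \autoref{prop:Passivity:Line_Passivity}, one bounds $\sSMGdot$ by $\sum_{k\in\setNAct}\swDGU[k]+\sum_{k\in\setNUnact}\swLoad[k]+\sum_{kl}\swTx[kl]$. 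Since each subsystem satisfies its own inequality, I would then replace the individual indices by the uniform worst-case values \eqref{eq:Passive:worst_case_indices}; for the quadratic input- and output-penalty terms this merely enlarges the bound and yields network-wide coefficients (the bilinear cross-coefficient $1+\siInDGU[1]\siOutDGU$ requires the same substitution and is the one book-keeping point to treat with care).

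Next I would exploit the power-preserving nature of the electrical coupling. The line currents enter the bus equations through $-\mEP\vieTx$ and the bus voltage errors enter the line equations through $\mEP^\Transpose\vve$, so the bilinear coupling contributions are $-\sum_{k\in\setN}\sve[k]\vEPT[k]\vieTx$ from the buses and $+\sum_{kl}\sieTx[kl]\vEPT[kl]\vve=\vieTxT\mEP^\Transpose\vve$ from the lines. As $\sum_{k}\sve[k]\vEPT[k]\vieTx=\vveT\mEP\vieTx=\vieTxT\mEP^\Transpose\vve$, these cancel identically. What survives is exactly the external DGU part, which collects into $\swMGdep$ of \eqref{eq:Passive:MG_supply_rate}, together with the load term $-\siOutLoad\vveUnactT\vveUnact$ and the internal residual $-\siInDGU[2]\sum_{k\in\setNAct}(\vEPT[k]\vieTx)^2-\siOutTx\,\vieTxT\vieTx$ pitting the DGU input-feedforward deficit against the line dissipation.

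The remaining, and decisive, step is to show this residual is nonpositive so that $\sSMGdot\le\swMGdep$. Writing it as $-\vieTxT\big(\sum_{k\in\setNAct}\siInDGU[2]\vEP[k]\vEPT[k]+\siOutTx\Ident\big)\vieTx$, the task reduces to establishing positive semidefiniteness of the bracketed matrix. When $\siInDGU[2]\ge0$ this is immediate — both summands are positive semidefinite — and the bound is patently independent of topology, bus count, and actuation pattern; the stated condition $\siInDGU[2]+\siOutTx\ge0$ is meant to additionally accommodate a mild input-feedforward shortage by leaning on the strictly positive resistive damping $\siOutTx=\sR[kl]$ carried along each incident edge. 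I expect this reconciliation — matching the per-bus aggregate-current deficit, propagated through the incidence pattern $\vEP[k]\vEPT[k]$, against the per-edge resistive excess — to be the main obstacle, since it is the only place where the graph structure could reintroduce a topology dependence; the storage-function summation and the exact cancellation of the coupling terms are otherwise routine.
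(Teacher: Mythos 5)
Your proposal follows the paper's proof step for step up to the last move: the paper also takes $\sSMG = \sum_{k\in\setNAct}\sSDGU[k] + \sum_{k\in\setNUnact}\sSLoad[k] + \sum_{kl\in\graphEPhys}\sSTx[kl]$, bounds $\sSMGdot$ by the sum of the subsystem supply rates \eqref{eq:Passive:load_Hdot_ineq}, \eqref{eq:Passivity:DGU_Supply_rate}, \eqref{eq:Passivity:Line_Hdot}, and cancels the coupling via the incidence identity $\vieTxT\mET\vve = \vveActT\mEAct\vieTx + \vveUnactT\mEUnact\vieTx$, exactly as you do. (Your parenthetical worry about the worst-case substitution in the cross coefficient $1+\siInDGU[1]\siOutDGU$ is also legitimate: minimising the indices does not pointwise enlarge an IF-OFP supply rate, and the paper does not address this either; the clean repair is to verify the LMI of \autoref{thm:Passivity:Shifted_DGU} for each DGU directly with the common indices.)

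The step you flag as the ``main obstacle'' is precisely where the paper's proof is silent, and your hesitation is well founded. The paper never establishes $\siInDGU[2]\,\mEAct^\Transpose\mEAct + \siOutTx\Ident \posSemiDef 0$; it simply writes the residual in \eqref{eq:Passive:MG_Hdot} as $-(\siInDGU[2]+\siOutTx)\vieTxT\vieTx$, i.e.\ it tacitly replaces $\sum_{k\in\setNAct}(\vEPT[k]\vieTx)^2 = \vieTxT\mEAct^\Transpose\mEAct\vieTx$ by $\vieTxT\vieTx$, and then drops the term under $\siInDGU[2]+\siOutTx\ge0$. That replacement is harmless only when $\siInDGU[2]\ge 0$, where (as you note) both residual terms are separately nonpositive. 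But in the regime the hypothesis is actually designed to admit, $-\siOutTx \le \siInDGU[2] < 0$, the substitution goes the wrong way: what is needed is $\siOutTx \ge -\siInDGU[2]\,\lambda_{\max}\!\left(\mEAct\mEAct^\Transpose\right)$, and $\mEAct\mEAct^\Transpose$ is the actuated principal submatrix of the graph Laplacian $\mLapPhys$, whose largest eigenvalue generically exceeds $1$ -- a single line joining two actuated buses already gives $\lambda_{\max}\ge 2$, and the paper's own simulation ($\siInDGU[2]=-0.01$, $\siOutTx=0.01$, adjacent actuated buses) sits exactly in this regime. So your final step cannot be closed from $\siInDGU[2]+\siOutTx\ge0$ alone; one must either strengthen the hypothesis to the spectral condition above (which reintroduces a mild, but bounded, topology dependence through $\lambda_{\max}$), or restrict to $\siInDGU[2]\ge0$. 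In short: your proof is the paper's proof, your one unresolved step is the paper's one unproved step, and a rigorous completion requires slightly more than the proposition states.
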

\begin{proof}
	Define for the interconnected microgrid the storage function
	\begin{equation} \label{eq:Passive:MG_Storage}
		\sSMG = \sum_{k \in \setNAct} \sSDGU[k] + \sum_{k \in \setNUnact} \sSLoad[k] + \sum_{kl \in \graphEPhys} \sSTx[kl].
	\end{equation}
	An upper bound for time derivative of \eqref{eq:Passive:MG_Storage} may then be found by combining the supply rates in \eqref{eq:Passive:load_Hdot_ineq}, \eqref{eq:Passivity:DGU_Supply_rate} and \eqref{eq:Passivity:Line_Hdot}
	\begin{equation} \label{eq:Passive:MG_Hdot}
		\begin{aligned}
			\sSMGdot \le& \, (1+\siInDGU[1]\siOutDGU)\vpeSPActT \vveAct - \siInDGU[1]\vpeSPActT\vpeSPAct - \siOutDGU\vveActT\vveAct \\
			& +  \vieTxT\mET\vve - \vveActT \mEAct\vieTx - \vveUnactT\mEUnact\vieTx \\
			&- \siOutLoad \vveUnactT\vveUnact - (\siInDGU[2] +\siOutTx)\vieTxT\vieTx  .
		\end{aligned}
	\end{equation}
	The skew-symmetric interconnection of the nodes and lines results in $\vieTxT\mET\vve = \vveActT \mEAct\vieTx + \vveUnactT\mEUnact\vieTx$. Furthermore with $\siInDGU[2] +\siOutTx \ge 0$, we can drop the unnecessary strictly negative $\vieTxT\vieTx$ term and verify that $\cramped{\sSMGdot \le \swMGdep}$.
\end{proof}
Through \autoref{prop:Passivity:MG_Supply}, the dissipativity of the entire microgrid is formulated using the desired input and output vectors. However, the supply rate in \eqref{eq:Passive:MG_supply_rate} is dependent on the actuation states of the buses. We now remove this dependence by finding a supply rate for a specific bus that encompasses both its actuated and unactuated state. By considering a quadratic supply rate as a sector condition (see \cite{Khalil2002, Malan2022b}), a combined supply rate is found through the union of the sectors for the actuated and unactuated cases.
\begin{theorem}[Actuation independent passivity] \label{thm:Passive:MG_indep_passive}
	A DC microgrid for which \autoref{prop:Passivity:MG_Supply} holds is \ac{IFOFP}$(\siInDGU[1], \siOutDGU)$ w.r.t.\ the supply rate
	\begin{equation} \label{eq:Passive:MG_IFOFP}
		\swMG = (1 + \siInDGU[1]\siOutDGU)\vpeSPT \vve - \siInDGU[1] \vpeSPT \vpeSP - \siOutDGU \vveT \vve
	\end{equation}
	if, for an arbitrarily small $\siInLoad > 0$,
	\begin{align} 
		\label{eq:Passive:MG_lines_dominate_DGU}
		0 &\le \siInDGU[2] +\siOutTx, \\
		\label{eq:Passive:MG_Unact_passive_loads}
		0 &< \siOutLoad < 1, \\
		\label{eq:Passive:MG_Act_iIn_passivity}
		0 &> \siInDGU[1].
	\end{align}
\end{theorem}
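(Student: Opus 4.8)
The plan is to reduce the statement to a one-bus condition and to dispatch the unactuated buses through a sector-union argument. By \autoref{prop:Passivity:MG_Supply}, whose hypothesis is exactly the first inequality $0 \le \siInDGU[2] + \siOutTx$, we already have $\sSMGdot \le \swMGdep$. Since $\swMG$ and $\swMGdep$ coincide on the actuated block, it suffices to prove $\swMGdep \le \swMG$, and as both supply rates are sums over the buses this reduces to one inequality per unactuated bus $k \in \setNUnact$,
\[
	0 \le (1 + \siInDGU[1]\siOutDGU)\,\speSP\sve - \siInDGU[1]\,(\speSP)^2 + (\siOutLoad - \siOutDGU)\,\sve^2 ,
\]
where $\speSP$ and $\sve$ denote the (unused) input and the voltage error of that bus. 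Collecting this quadratic form in the symmetric matrix
\[
	\matrix{M} \coloneqq \begin{bmatrix} -\siInDGU[1] & \tfrac{1}{2}(1 + \siInDGU[1]\siOutDGU) \\ \tfrac{1}{2}(1 + \siInDGU[1]\siOutDGU) & \siOutLoad - \siOutDGU \end{bmatrix},
\]
the remaining task is to certify $\matrix{M} \posSemiDef 0$.

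The reason the unactuated buses require their own treatment is that, by \autoref{prop:Passivity:Load_passivity}, a load is only \ac{OFP}$(\siOutLoad)$ with respect to its physical port; its microgrid-level input $\vpeSPUnact$ does not enter the dynamics, so its supply rate furnishes neither the cross term $\speSP\sve$ nor an input-quadratic term of the target \ac{IFOFP} form. The hypothesis $\siInDGU[1] < 0$ is precisely what turns the \ac{IFOFP}$(\siInDGU[1], \siOutDGU)$ supply rate into a genuine two-sided sector: it makes the leading entry $-\siInDGU[1]$ of $\matrix{M}$ strictly positive, so that the otherwise uncontrolled cross term is dominated by the $-\siInDGU[1]\,(\speSP)^2$ contribution. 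Reading $\swMGdep$ and $\swMG$ as sector conditions, as in the sector-union construction introduced before the theorem, this is exactly the embedding step: the strictly output-passive load sector, which is nonempty and has upper slope exceeding unity by virtue of $0 < \siOutLoad < 1$, is placed inside the \ac{IFOFP} cone, with the arbitrarily small $\siInLoad > 0$ supplying the input-feedforward slack that the load lacks intrinsically and making the embedding strict.

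With $-\siInDGU[1] > 0$ established, $\matrix{M} \posSemiDef 0$ follows from a Schur-complement (determinant) check, and it is here that the position of $\siOutLoad$ relative to $\siOutDGU$ together with the slack $\siInLoad$ are consumed. Summing the per-bus inequalities over $\setNUnact$, appending the identically matching actuated contributions, and chaining with $\sSMGdot \le \swMGdep$ then delivers $\sSMGdot \le \swMG$, i.e.\ the actuation-independent \ac{IFOFP}$(\siInDGU[1], \siOutDGU)$ property. I expect the unactuated block to be the main obstacle: because these buses possess no genuine input, the load's output passivity cannot be carried over verbatim, and the entire weight of the hypotheses $\siInDGU[1] < 0$ and $0 < \siOutLoad < 1$, reinforced by the slack $\siInLoad$, is spent on fitting the decoupled-input load into the common \ac{IFOFP} sector; the actuated and line terms, by contrast, are inherited unchanged from \autoref{prop:Passivity:MG_Supply}.
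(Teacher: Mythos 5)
Your reduction is set up correctly: the actuated and line contributions do carry over from \autoref{prop:Passivity:MG_Supply}, and a multiplier-one pointwise bound $\swMGdep \le \swMG$ would indeed follow from per-unactuated-bus inequalities governed by your matrix $\matrix{M}$. The gap is the pivotal claim that $\matrix{M} \posSemiDef 0$ ``follows from a Schur-complement (determinant) check'' under the stated hypotheses. It does not. Positive semidefiniteness of $\matrix{M}$ requires, besides $-\siInDGU[1] \ge 0$, both $\siOutLoad - \siOutDGU \ge 0$ and
\begin{equation*}
	-\siInDGU[1]\,\bigl(\siOutLoad - \siOutDGU\bigr) \;\ge\; \tfrac{1}{4}\bigl(1 + \siInDGU[1]\siOutDGU\bigr)^{2},
\end{equation*}
a quantitative condition that appears nowhere in \eqref{eq:Passive:MG_lines_dominate_DGU}--\eqref{eq:Passive:MG_Act_iIn_passivity} and is not implied by them. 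It even fails for the paper's own design values from \autoref{sec:Simulation:Setup} ($\siInDGU[1] = -4.686$, $\siOutDGU = 0.01$, $\siOutLoad = 0.05$): there $\det\matrix{M} \approx 4.686\cdot 0.04 - \tfrac{1}{4}(0.953)^{2} \approx -0.04 < 0$, and for instance at $(\speSPUnact[k],\sveUnact[k]) = (-0.1,\,1)$ one gets $\swMGUnact[k] > \swMGAct[k]$. So the pointwise inequality you are trying to certify is simply false in general. Note also that $\siInLoad$ never enters $\matrix{M}$, so it cannot be ``consumed'' in your check; adding $\siInLoad(\speSPUnact[k])^{2}$ \emph{enlarges} the unactuated supply rate and therefore works against, not for, a pointwise upper bound by $\swMGAct[k]$.

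The missing idea is that the theorem asserts dissipativity w.r.t.\ $\swMG$, i.e.\ the existence of \emph{some} storage function, whereas a multiplier-one pointwise comparison of supply rates forces the original $\sSMG$ to do the job --- a strictly stronger requirement, and exactly the one that fails above. The paper's proof instead reads the supply rates as \emph{sectors}: it enlarges $\swMGUnact[k]$ to $\siInLoad(\speSPUnact[k])^{2} - \siOutLoad\sveUnact[k]^{2}$, rescales by $1/\siOutLoad$ (a rescaling of storage and supply, not a pointwise inequality), identifies the result with the \Ltwo{} sector $[-\sqrt{\siInLoad/\siOutLoad},\,\sqrt{\siInLoad/\siOutLoad}]$, and embeds this sector --- shrinkable at will because $\siInLoad$ is arbitrary --- into the \ac{DGU} sector $[\siInDGU[1], 1/\siOutDGU]$ (respectively $[\siInDGU[1], \infty)$ when $\siOutDGU \le 0$). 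Sector containment yields dissipativity w.r.t.\ the larger sector's supply rate only up to a positive multiplier (equivalently, a further rescaling of the storage function, cf.\ \cite[Lemma~4]{Malan2022b} and the S-procedure), and it is precisely this multiplier freedom that converts the quantitative determinant condition your route needs into the purely qualitative hypothesis $\siInDGU[1] < 0$ of \eqref{eq:Passive:MG_Act_iIn_passivity}. Concretely: comparing $\sd[k]\,\swMGUnact[k] \le \swMGAct[k]$ with a free scalar $\sd[k] > 0$ gives the Schur condition $-\siInDGU[1](\sd[k]\siOutLoad - \siOutDGU) \ge \tfrac{1}{4}(1+\siInDGU[1]\siOutDGU)^{2}$, which is satisfiable for $\sd[k]$ large enough precisely when $\siInDGU[1] < 0$ and $\siOutLoad > 0$; with $\sd[k] \equiv 1$, as in your proposal, it is not.
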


The proof of \autoref{thm:Passive:MG_indep_passive} can be found in \autorefapp{sec:App_Proofs}.
Through \eqref{eq:Passive:MG_IFOFP}, we thus show that a single \ac{IFOFP} supply rate describes the input-output passivity of the entire microgrid, irrespective of the states of actuation of the buses. This supply rate is derived from the properties of the \acp{DGU} in \autoref{thm:Passivity:Shifted_DGU} and accounts for the worst-case loads.
\begin{remark}[Non-passive loads at \acp{DGU}] \label{rem:Passive:non_passive_loads_act}
	While \eqref{eq:Passive:MG_Unact_passive_loads} in \autoref{thm:Passive:MG_indep_passive} requires strictly passive loads at unactuated buses, this is not required for the loads at actuated buses. Indeed, the loads at \acp{DGU} may exhibit a lack of passivity with $\scLoadlo < 0$. However, this would be reflected by the indices obtained in \autoref{thm:Passivity:Shifted_DGU} and the supply rate in \eqref{eq:Passive:MG_IFOFP}.
\end{remark}
\begin{remark}[Non-static loads] \label{rem:Passive:non_static_loads}
	Due to the use of passivity in this section, the analysis presented here effortlessly extends to the case of dynamic loads. Such dynamic loads simply need to exhibit equivalent \ac{IFP} properties (see e.g.\ \autoref{prop:Passivity:Load_passivity}) and must be \ac{ZSO}.
\end{remark}
\begin{remark}[Passivity-based controllers] \label{rem:Passive:passivity_based_control}
	In addition to the four-stage controller proposed in this work, the passivity formulation of the DC microgrid in \autoref{thm:Passive:MG_indep_passive} can be used alongside any other controller which provides suitable passivity indices. This includes methods such as \emph{interconnection and damping assignment passivity-based control} \cite[p.~190]{vdSchaft2017} or \emph{passivity-based model predictive control} (see e.g.\ \cite{Raff2007}).
\end{remark}
%
%
\subsection{Controller Passivity} \label{sec:Passivity:Ctrl}
Having analysed the passivity of the microgrid subsystems and their interconnection, we now investigate the passivity properties of the control structure in \autoref{sec:Control}. This is done successively for each part of the controller: the DDA stages, the PI stage and the weighting function.
\subsubsection{DDA Passivity} \label{sec:Passivity:Ctrl:DDA}
Consider the \ac{DDA} stages in \autoref{fig:Control:control_structure}.
\begin{proposition}[\ac{DDA} Passivity] \label{prop:Passivity:Ctrl:DDA_Passive}
	The DDA controller in \eqref{eq:Control:DDA} with the storage function
	\begin{equation} \label{eq:Passivity:Ctrl:H_DDA}
		\sSDDA[s] = \frac{1}{2\sgDDA} \left(\vxDDA[s]^\Transpose \vxDDA[s] + \vzDDA[s]^\Transpose \vzDDA[s]\right)
	\end{equation}
	is \ac{OFP}$(\siOutDDA)$, $\siOutDDA = 1$, w.r.t.\ $(\vuDDA[s], \vyDDA[s])$ and is \ac{ZSO}.
\end{proposition}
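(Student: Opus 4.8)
The plan is to establish the \ac{OFP} property by a direct Lyapunov-type computation with the supplied quadratic storage function, and then treat the \ac{ZSO} claim separately by examining the zero-output dynamics. For the \ac{OFP} part, I would differentiate $\sSDDA[s]$ along the trajectories of \eqref{eq:Control:DDA}, substitute the state equations, and show that the result is bounded above by the \ac{OFP} supply rate $\vuDDA[s]^\Transpose \vyDDA[s] - \vyDDA[s]^\Transpose \vyDDA[s]$ corresponding to $\siOutDDA = 1$ in \autoref{def:Prelim:passive_rates}.

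Concretely, since $\sSDDAdot[s] = \tfrac{1}{\sgDDA}(\vxDDA[s]^\Transpose \vxDDAdot[s] + \vzDDA[s]^\Transpose \vzDDAdot[s])$, I would insert $\vxDDAdot[s] = (-\sgDDA\Ident[N] - \mLapComm[P])\vxDDA[s] + \mLapCommT[I]\vzDDA[s] + \sgDDA \vuDDA[s]$ and $\vzDDAdot[s] = -\mLapComm[I]\vxDDA[s]$. The key structural observation is that the two terms coupling $\vxDDA[s]$ and $\vzDDA[s]$ cancel: because $\mLapComm$ is the symmetric Laplacian of the undirected communication graph, $\mLapCommT[I] = \mLapComm[I]$, so $\vxDDA[s]^\Transpose \mLapCommT[I]\vzDDA[s] = \vzDDA[s]^\Transpose \mLapComm[I]\vxDDA[s]$ exactly offsets the contribution $-\vzDDA[s]^\Transpose \mLapComm[I]\vxDDA[s]$ coming from $\vzDDA[s]^\Transpose \vzDDAdot[s]$. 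What remains is $\sSDDAdot[s] = -\vxDDA[s]^\Transpose\vxDDA[s] - \tfrac{1}{\sgDDA}\vxDDA[s]^\Transpose\mLapComm[P]\vxDDA[s] + \vxDDA[s]^\Transpose\vuDDA[s]$. Since $\mLapComm[P] = \skpDDA\mLapComm \posSemiDef 0$ is a positive-semidefinite graph Laplacian scaled by $\skpDDA > 0$, and $\vyDDA[s] = \vxDDA[s]$, the middle term is nonnegative and can be dropped to give $\sSDDAdot[s] \le \vuDDA[s]^\Transpose\vyDDA[s] - \vyDDA[s]^\Transpose\vyDDA[s]$, which is precisely \ac{OFP}$(1)$.

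For the \ac{ZSO} part, I would follow \autoref{def:Passive:ZSO}: set $\vuDDA[s] \equiv \vec{0}$ and $\vyDDA[s] = \vxDDA[s] \equiv \vec{0}$, and attempt to conclude that the full state $(\vxDDA[s],\vzDDA[s])$ vanishes. With $\vxDDA[s] \equiv \vec{0}$, the equation $\vzDDAdot[s] = -\mLapComm[I]\vxDDA[s] \equiv \vec{0}$ shows that $\vzDDA[s]$ is constant, and the first state equation then forces $\mLapCommT[I]\vzDDA[s] \equiv \vec{0}$.

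The main obstacle is exactly this last step. The weighted Laplacian $\mLapComm[I] = \skiDDA\mLapComm$ has the nontrivial kernel $\mathrm{span}(\vOneCol[N])$ for a connected graph, so $\mLapCommT[I]\vzDDA[s] = \vec{0}$ only confines $\vzDDA[s]$ to this one-dimensional consensus subspace rather than forcing $\vzDDA[s] = \vec{0}$. The integral state therefore carries an unobservable constant mode along $\vOneCol[N]$, and the strict \ac{ZSO} conclusion can only be reached once this mode is accounted for --- e.g.\ by restricting the admissible $\vzDDA[s]$ to the complement of $\mathrm{span}(\vOneCol[N])$ (equivalently, exploiting the invariance of the average of $\vzDDA[s]$ under \eqref{eq:Control:DDA}), on which $\mLapComm[I]$ is injective so that $\mLapCommT[I]\vzDDA[s] = \vec{0}$ does yield $\vzDDA[s] = \vec{0}$. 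I expect justifying this reduction to the observable subspace to be the delicate point of the argument, whereas the \ac{OFP} computation is routine once the cross-term cancellation is noticed.
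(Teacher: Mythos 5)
Your \ac{OFP} computation is correct and is exactly the paper's argument: the paper writes the same derivative $\sSDDAdot[s] = -\vxDDA[s]^\Transpose\vxDDA[s] - \tfrac{1}{\sgDDA}\vxDDA[s]^\Transpose\mLapComm[P]\vxDDA[s] + \vxDDA[s]^\Transpose\vuDDA[s]$ and drops the nonnegative Laplacian term. (One small remark: the cross-term cancellation needs no symmetry of $\mLapComm$ at all, since $\vxDDA[s]^\Transpose\mLapCommT[I]\vzDDA[s] = \vzDDA[s]^\Transpose\mLapComm[I]\vxDDA[s]$ is just a scalar equalling its own transpose.)

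On \ac{ZSO}, the obstacle you identify is genuine, and it in fact afflicts the paper's own proof. The paper disposes of \ac{ZSO} by asserting that the system matrix of \eqref{eq:Control:DDA} is Hurwitz, citing \cite{Freeman2006}; but $(\vec{0}, \vOneCol[N])$ lies in the kernel of that matrix because $\mLapCommT[I]\vOneCol[N] = \skiDDA\mLapComm\vOneCol[N] = \vec{0}$, so the matrix is not Hurwitz on $\Reals^{2N}$, and the constant trajectory $\vxDDA[s]\equiv\vec{0}$, $\vzDDA[s]\equiv c\vOneCol[N]$, $\vuDDA[s]\equiv\vec{0}$ shows that strict \ac{ZSO} fails on the full state space --- precisely your unobservable consensus mode. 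Your proposed repair is the standard and correct one: $\vOneCol[N]^\Transpose\vzDDA[s]$ is conserved along trajectories, since $\vOneCol[N]^\Transpose\vzDDAdot[s] = -\skiDDA(\mLapComm\vOneCol[N])^\Transpose\vxDDA[s] = 0$, so with zero-initialised integral states the dynamics remain in the subspace $\{\vOneCol[N]^\Transpose\vzDDA[s]=0\}$; for a connected communication graph $\ker\mLapComm = \mathrm{span}(\vOneCol[N])$ intersects that subspace only in $\vec{0}$, and your argument then closes ($\vxDDA[s]\equiv\vec{0}$ forces $\vzDDA[s]$ constant with $\mLapCommT[I]\vzDDA[s]=\vec{0}$, hence $\vzDDA[s]=\vec{0}$). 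On that invariant subspace the restricted dynamics are indeed Hurwitz, which is the precise content behind the paper's citation. Two further points in your favour: first, even if the matrix were Hurwitz, that alone would only give \ac{ZSD} (unobservable modes decay), not \ac{ZSO}, so your kernel analysis is the right route for the claim as stated; second, what \autoref{thm:Stability:Ctrl_MG_Stability} actually needs downstream is that states invisible in $\vy$ converge, and this again requires excluding the conserved mode, i.e.\ exactly the zero-mean restriction you propose. In short, your proof --- with the invariant-subspace reduction made explicit --- is correct and is more careful than the paper's own treatment of this point.
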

\begin{proof}
	The time derivative of \eqref{eq:Passivity:Ctrl:H_DDA} is
	\begin{equation} \label{eq:Passivity:Ctrl:H_DDA_dot}
		\begin{aligned}
			\sSDDAdot[s] &= - \vxDDA[s]^\Transpose \vxDDA[s] - \frac{1}{\sgDDA} \vxDDA[s]^\Transpose \mLapComm[P]\vxDDA[s] + \vxDDA[s]^\Transpose \vuDDA[s] \\
			&\le \swDDA[s] \coloneqq \vxDDA[s]^\Transpose \vuDDA[s] - \vxDDA[s]^\Transpose \vxDDA[s]
		\end{aligned}
	\end{equation}
	since $\mLapComm[P] > 0$ and $\sgDDA > 0$, thus verifying the \ac{OFP} property for $\vyDDA[s] = \vxDDA[s]$. Furthermore, the \ac{DDA} controller is \ac{ZSO} since the system dynamics in \eqref{eq:Control:DDA} is Hurwitz \cite[Theorem~5]{Freeman2006}.
\end{proof}
The \ac{OFP} result in \autoref{prop:Passivity:Ctrl:DDA_Passive} also means that \eqref{eq:Control:DDA} has an \Ltwo-gain of 1 \cite[p.~3]{Arcak2016}. Note that since the \ac{DDA} in \eqref{eq:Control:DDA} is linear, the properties in \autoref{prop:Passivity:Ctrl:DDA_Passive} also hold for the shifted input-output combination $(\vuDDAe[s], \vyDDAe[s])$ \cite[p.~26]{Arcak2016}.
\subsubsection{PI Passivity} \label{sec:Passivity:Ctrl:DDA_PI}
The ideal PI controller in \eqref{eq:Control:local_PI_vector} with $\sdampCtrl = 0$ can trivially be shown to be \ac{IFP}$(\skpCtrl)$ for the storage function $\cramped{\sSCtrl = \skiCtrl \vxCtrlT \vxCtrl / 2}$. The leaky PI control with $\sdampCtrl > 0$ exhibits the following properties.
\begin{proposition}[Leaky PI Passivity] \label{prop:Passivity:Leaky_PI}
	The leaky PI control in \eqref{eq:Control:local_PI_vector} with the storage function $\cramped{\sSCtrl = \skiCtrl \vxCtrlT \vxCtrl / 2}$ is dissipative w.r.t.\ the supply rate
	\begin{equation} \label{eq:Passivity:Leaky_PI_supply}
		\begin{aligned}
			\swCtrl \, {=} \,
				\underbrace{\!\!\left(\!1 {+} \frac{2 \sdampCtrl \skpCtrl}{\skiCtrl}\! \right)\!\!}_{\textstyle 2\siCrossCtrl} \, \vuCtrlT \vyCtrl 
				- \, \underbrace{\!\!\left(\!\skpCtrl {+} \frac{{\sdampCtrl} {\skpCtrl}^2}{\skiCtrl}\! \right)\!\!}_{\textstyle \siInCtrl} \, \vuCtrlT \vuCtrl 
				- \!\! \underbrace{\!\!\!\frac{\sdampCtrl}{\skiCtrl}\!\!\!}_{\textstyle \siOutCtrl} \! \vyCtrlT \vyCtrl
		\end{aligned}
	\end{equation}
\end{proposition}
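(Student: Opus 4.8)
The plan is to verify the claimed supply rate by direct computation: differentiate the quadratic storage function $\sSCtrl = \skiCtrl \vxCtrlT \vxCtrl / 2$ along the trajectories of the leaky PI dynamics \eqref{eq:Control:local_PI_vector}, and then rewrite the resulting expression entirely in terms of the input $\vuCtrl$ and the output $\vyCtrl$, eliminating the internal state $\vxCtrl$ via the (invertible) output map. Since \eqref{eq:Control:local_PI_vector} is linear and $\sSCtrl$ is quadratic, I expect the dissipation relation to hold with \emph{equality}, so the task reduces to confirming that the three coefficients match the braces $2\siCrossCtrl$, $\siInCtrl$, and $\siOutCtrl$ in \eqref{eq:Passivity:Leaky_PI_supply}.

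First I would differentiate the storage function and insert the state equation, obtaining $\sSCtrldot = \skiCtrl \vxCtrlT \vxCtrldot = -\sdampCtrl \skiCtrl \vxCtrlT \vxCtrl + \skiCtrl \vxCtrlT \vuCtrl$. This is the natural starting point, but it still carries the unmeasured state $\vxCtrl$, whereas the target supply rate is expressed only through $(\vuCtrl, \vyCtrl)$. The key step is therefore to invert the output equation $\vyCtrl = \skiCtrl \vxCtrl + \skpCtrl \vuCtrl$ to get $\skiCtrl \vxCtrl = \vyCtrl - \skpCtrl \vuCtrl$, which holds because $\skiCtrl > 0$.

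Substituting this relation into the cross term yields $\skiCtrl \vxCtrlT \vuCtrl = \vyCtrlT \vuCtrl - \skpCtrl \vuCtrlT \vuCtrl$, and into the damping term yields $-\sdampCtrl \skiCtrl \vxCtrlT \vxCtrl = -\tfrac{\sdampCtrl}{\skiCtrl}(\vyCtrl - \skpCtrl \vuCtrl)^\Transpose(\vyCtrl - \skpCtrl \vuCtrl)$. Expanding the latter square gives $-\tfrac{\sdampCtrl}{\skiCtrl}\bigl(\vyCtrlT \vyCtrl - 2\skpCtrl \vuCtrlT \vyCtrl + \skpCtrl^2 \vuCtrlT \vuCtrl\bigr)$, where the factor $2$ in the cross term comes from combining the two (scalar, hence equal) off-diagonal contributions. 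Collecting all $\vuCtrlT \vyCtrl$, $\vuCtrlT \vuCtrl$, and $\vyCtrlT \vyCtrl$ terms then reproduces exactly the coefficients $1 + 2\sdampCtrl\skpCtrl/\skiCtrl$, $\skpCtrl + \sdampCtrl\skpCtrl^2/\skiCtrl$, and $\sdampCtrl/\skiCtrl$, so that $\sSCtrldot = \swCtrl$ and the dissipativity inequality follows.

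The computation is routine and presents no genuine obstacle; the only mild subtlety worth flagging is the bookkeeping of the cross-term factor $2\skpCtrl$ generated by squaring $(\vyCtrl - \skpCtrl \vuCtrl)$, which is precisely what produces the $2\sdampCtrl\skpCtrl/\skiCtrl$ contribution to the coefficient $2\siCrossCtrl$. It is also worth remarking that the relation holds with equality rather than inequality, and that setting $\sdampCtrl = 0$ collapses the rate to the \ac{IFP}$(\skpCtrl)$ supply rate noted in the preceding text, providing a consistency check on the result.
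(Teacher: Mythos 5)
Your proposal is correct and takes essentially the same route as the paper's proof: differentiate the quadratic storage function, eliminate the state via $\skiCtrl \vxCtrl = \vyCtrl - \skpCtrl \vuCtrl$ from the output equation, and collect the $\vuCtrlT\vyCtrl$, $\vuCtrlT\vuCtrl$, and $\vyCtrlT\vyCtrl$ terms to verify $\sSCtrldot = \swCtrl$ with equality. Your added observations (equality rather than inequality, and the $\sdampCtrl = 0$ consistency check recovering the \ac{IFP}$(\skpCtrl)$ rate) are sound but not needed.
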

\begin{proof}
	Calculate the time derivative of $\sSCtrl$ as $\sSCtrldot = \skiCtrl \vxCtrlT \vuCtrl - \sdampCtrl \skiCtrl \vxCtrlT \vxCtrl$. Substitute in $\skiCtrl \vxCtrl = \vyCtrl - \skpCtrl \vuCtrl$ from the output in \eqref{eq:Control:local_PI_vector} and simplify to verify that $\sSCtrldot = \swCtrl$.
\end{proof}
Note that while $\swCtrl$ in \eqref{eq:Passivity:Leaky_PI_supply} has a quadratic form, it does not directly match the \ac{IFOFP} form in \autoref{def:Prelim:passive_rates}. However, by appropriately weighing the storage function $\sSCtrl$, the form in \autoref{def:Prelim:passive_rates} is easily obtained. For simplicity and without invalidating the results in the sequel, we omit this step here.
Furthermore, we note that the linearity of \eqref{eq:Control:local_PI_vector} ensures that the properties in \autoref{prop:Passivity:Leaky_PI} also hold for the shifted input-output combination $(\vuCtrle, \vyCtrle)$ \cite[p.~26]{Arcak2016}.
\subsubsection{Weighting Function Passivity} \label{sec:Passivity:Ctrl:Weighting}
The derivative of the weighting function in \eqref{eq:Control:NL_Func} is described by (see e.g.\ \autoref{fig:Control:NL_Func})
\begin{align} \label{eq:Passivity:Ctrl:NL_der}
	\dFull{\syNLweight}{\suNLweight} = \saNLweight + \sbNLweight\tanh^2(\sgNLweight(\suNLweight)) .
\end{align}
By setting $\sbNLweight > -\saNLweight$ and applying \autoref{prop:Prelim:Static_load}, \eqref{eq:Control:NL_Func} is found to be \ac{IFOFP}$(\siInNLweight, \siOutNLweight)$ with
\begin{equation} \label{eq:Passivity:Ctrl:Weighting_indices}
	\siInNLweight = \saNLweight, \qquad \siOutNLweight = \frac{1}{\saNLweight + \sbNLweight},
\end{equation}
	\section{Interconnected Stability} \label{sec:Stability}
Using the passivity properties of the microgrid and controller subsystems obtained in \autoref{sec:Passivity}, we now investigate the stability of the microgrid and controller interconnected as in \autoref{fig:Control:control_structure}. However, we note that the agent PI controller and the Stage~4 DDA controller exhibit a cascaded \ac{IFP}-\ac{OFP} obstacle (see \autoref{prop:Prelim:non_diss_IFP_OFP}) if the PI controller is ideal ($\sdampCtrl = 0$) which prevents a closed-loop analysis with dissipativity. Thus, in \autoref{sec:Stability:P_Stability}, we derive stability conditions using leaky agent PI controllers with $\sdampCtrl > 0$. 
\subsection{Leaky PI-Controlled Stability} \label{sec:Stability:P_Stability}
Consider the case where the passivity properties of all subsystems in \autoref{fig:Control:control_structure} except for the weighting function \eqref{eq:Control:NL_Func} are fixed. Combining the results in \autoref{sec:Passivity} with \autoref{thm:Prelim:Min_restrictive_diss}, we now determine the weighting function parameters which guarantee closed-loop stability.
\begin{theorem}[Designed closed-loop stability] \label{thm:Stability:Ctrl_MG_Stability}
	The closed-loop in \autoref{fig:Control:control_structure} is guaranteed to be asymptotically stable for the weighting function parameters $\saNLweight = \siInNLweight$, $\cramped{\sbNLweight = 1/\siOutNLweight - \saNLweight}$ if a feasible solution is found for
	\begin{equation} \label{eq:Stability:Opt_P_stability}
		\begin{array}{cl}
			\underset{\begin{array}{c}
					\\[-3.75ex]
					\scriptstyle \siInNLweight,\, \siOutNLweight,\, \sd[i], 
				\end{array}}{\min} & \siInNLweight + \siOutNLweight \\
			\text{s.t.} & \mQ \negDef 0, \quad \sd[i] > 0, \quad i = 1,\dots,5,
		\end{array}
	\end{equation}
	where $\siCrossNLweight = \nicefrac{1}{2}(1 + \siInNLweight\siOutNLweight)$,  $\siCrossDGU = \nicefrac{1}{2}(1 + \siInDGU[1]\siOutDGU)$, and 
	\begin{equation} \label{eq:Stability:Test_matrix_Q}
		\mQ {=} \!\!\left[\begin{Array}{@{\!}c@{}c@{}c@{\,}c@{}c@{\!}}
			-\siOutNLweight \sd[1] & \frac{\sd[2]}{2} & 0 & 0 & - \siCrossNLweight \sd[1]\\
			\frac{\sd[2]}{2} & -\siOutDDA \sd[2] {-} \siInCtrl \sd[3] & \siCrossCtrl \sd[3] & 0 & 0 \\
			0 & \siCrossCtrl \sd[3] & -\siOutCtrl \sd[3] & \frac{\skpCtrl \sd[4]}{2} & 0 \\
			0 & 0 & \frac{\skpCtrl \sd[4]}{2} & - \siOutDDA \sd[4] {-} \siInDGU[1] \sd[5] & \siCrossDGU \sd[5] \\
			- \siCrossNLweight \sd[1] & 0 & 0 & \siCrossDGU \sd[5]  & -\siOutDGU \sd[5] {-} \siInNLweight \sd[1]
		\end{Array}\right]
	\end{equation}
\end{theorem}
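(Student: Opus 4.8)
The plan is to read \autoref{fig:Control:control_structure} as a linear interconnection $\vu = \mH\vy$ of five dissipative blocks and then apply \autoref{thm:Prelim:Min_restrictive_diss} with the weighting function as the sole configurable subsystem, i.e.\ $\setJ = \{1\}$, so that its indices $(\siInNLweight,\siOutNLweight)$ become the free variables of \eqref{eq:Stability:Opt_P_stability}. First I would gather the shifted (\ac{EIP}) supply-rate indices established in \autoref{sec:Passivity}, labelling the blocks in the order used by \eqref{eq:Stability:Test_matrix_Q}: block~1 is the static weighting function, \ac{IFOFP}$(\siInNLweight,\siOutNLweight)$ with cross term $\siCrossNLweight$; block~2 is the Stage~2 \ac{DDA}, \ac{OFP}$(\siOutDDA)$ so that $\siIn[2]=0$, $\siOut[2]=\siOutDDA$ and $\siCross[2]=\nicefrac{1}{2}$ (\autoref{prop:Passivity:Ctrl:DDA_Passive}); block~3 is the leaky PI stage with $(\siInCtrl,\siOutCtrl,\siCrossCtrl)$ from \autoref{prop:Passivity:Leaky_PI}; block~4 is the Stage~4 \ac{DDA}; and block~5 is the entire microgrid, \ac{IFOFP}$(\siInDGU[1],\siOutDGU)$ on the port $(\vpeSP,\vve)$ with cross term $\siCrossDGU$ by \autoref{thm:Passive:MG_indep_passive}. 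All properties hold in the error coordinates about the equilibrium fixed in \autoref{prop:Control:desired_equilibrium}.

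Next I would write $\mH$ from the cascade: each stage output drives the next stage input, the edge from the PI stage into the Stage~4 \ac{DDA} carrying the feedforward weight $\skpCtrl$, and the loop closing with a sign inversion that feeds the microgrid voltage error $\vve$ back into the weighting function, i.e.\ $\vu[1] = -\vy[5]$. Substituting this cyclic $\mH$ and the indices above into $\mQ$ as defined by \eqref{eq:Prelim:Optimisation_mat_Q}--\eqref{eq:Prelim:Optimisation_mat_W} is then a purely algebraic check: the diagonal of $\mQ$ collects each block's $-\sd[i]\siOut[i]$ together with the $-\sd[j]\siIn[j]$ of its successor, while the symmetrised off-diagonal entries collect $\sd[i]\siCross[i]$ scaled by the respective interconnection weight. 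I expect this to reproduce \eqref{eq:Stability:Test_matrix_Q} entry by entry --- in particular the inverted loop edge yields the corner terms $-\siCrossNLweight\sd[1]$, and the PI feedforward yields the coupling $\nicefrac{\skpCtrl\sd[4]}{2}$.

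With $\mQ$ identified, \autoref{thm:Prelim:Min_restrictive_diss} delivers stability under $\mQ\negSemiDef0$; to obtain \emph{asymptotic} stability I would instead impose the strict $\mQ\negDef0$ of \eqref{eq:Stability:Opt_P_stability}. Taking $V \coloneqq \sum_i \sd[i]\sS[i]$ (the static block~1 contributing no state, only its nonnegative sector inequality), $V$ is positive definite because every genuine storage function is --- $\mPDGU\posDef0$ for the \acp{DGU} and the quadratic storages of the lines, loads, \ac{DDA} and PI blocks --- and $\dot V \le \vy^\Transpose\mQ\vy < 0$ for $\vy \neq \vec0$. The zero-state observability of the \ac{DDA} stages (\autoref{prop:Passivity:Ctrl:DDA_Passive}) and of the \acp{DGU} (\autoref{prop:Passivity:DGU_zero_state}) then restricts the largest invariant set inside $\{\vy=\vec0\}$ to the equilibrium, so LaSalle's invariance principle gives asymptotic stability. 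Finally, because the optimiser returns indices $\siInNLweight,\siOutNLweight$, I would realise a weighting function attaining exactly those indices by inverting \eqref{eq:Passivity:Ctrl:Weighting_indices}, giving $\saNLweight=\siInNLweight$ and $\sbNLweight=1/\siOutNLweight-\saNLweight$ as claimed; minimising $\siInNLweight+\siOutNLweight$ makes this realisable weighting as flat as possible inside the tolerance band while staying stabilising.

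The main obstacle I anticipate is the faithful construction of $\mH$: pinning down the signs and the interconnection gains --- especially locating the $\skpCtrl$ factor on the correct edge --- so that the sandwich product matches \eqref{eq:Stability:Test_matrix_Q} exactly, including verifying that the $-\siIn[j]$ contribution of each block lands on its successor's diagonal entry. A secondary subtlety is the upgrade from the Lyapunov stability of \autoref{thm:Prelim:Min_restrictive_diss} (built on $\mQ\negSemiDef0$) to asymptotic stability under the strict $\mQ\negDef0$, where the zero-state observability of the dynamic subsystems must be propagated correctly through the interconnection so that $\vy\equiv\vec0$ indeed forces the full error state to vanish.
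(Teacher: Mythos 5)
Your proposal matches the paper's proof essentially step for step: the same five-block decomposition with the weighting function as the sole configurable subsystem in \autoref{thm:Prelim:Min_restrictive_diss}, the same interconnection matrix $\mH$ (including the $-1$ loop closure $\vu[1]=-\vy[5]$ and the $\skpCtrl$ weight on the PI-to-Stage-4-DDA edge, which the paper introduces via its PI output normalisation \eqref{eq:Stability:Normalised_PI_output}), and the same upgrade to asymptotic stability by imposing the strict inequality $\mQ \negDef 0$ and invoking the zero-state results of \autoref{prop:Passivity:DGU_zero_state} and \autoref{prop:Passivity:Ctrl:DDA_Passive} together with the condition in \autoref{prop:Passivity:MG_Supply}. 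Your explicit Lyapunov/LaSalle formulation with $V=\sum_i \sd[i]\sS[i]$ simply spells out what the paper states more tersely, so there is no substantive difference in route.
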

\begin{proof}
	Use the supply rates for the DC microgrid in \eqref{eq:Passive:MG_IFOFP}, the two DDA controllers in \eqref{eq:Passivity:Ctrl:H_DDA_dot}, the agent PI controller in \eqref{eq:Passivity:Leaky_PI_supply}, and the \ac{IFOFP} supply rate for the weighting function \eqref{eq:Passivity:Ctrl:Weighting_indices} to construct $\mW$ in \eqref{eq:Prelim:Optimisation_mat_W}.
	Let the output of the PI controller be normalised according to
	\begin{equation} \label{eq:Stability:Normalised_PI_output}
		\vyCtrl = \skiCtrl \vxCtrl + \skpCtrl \vuCtrl = \skpCtrl (\skLoopiCtrl \vxCtrl + \vuCtrl) = \skpCtrl \vyCtrlLoop .
	\end{equation}
	Furthermore, the five subsystems in \autoref{fig:Control:control_structure} are interconnected by $\vu = \mH\vy$, where
	\begin{equation} \label{eq:Stability:Subsystem_interconnection}
		\mH = \begin{bmatrix}
			0 & 0 & 0 & 0 & -1 \\
			1 & 0 & 0 & 0 & 0 \\
			0 & 1 & 0 & 0 & 0 \\
			0 & 0 & \skpCtrl & 0 & 0 \\
			0 & 0 & 0 & 1 & 0 \\
		\end{bmatrix}.
	\end{equation}
	Apply \autoref{thm:Prelim:Min_restrictive_diss}, with $\mD$ as in \eqref{eq:Prelim:Optimisation_mat_D} and simplify $\mQ$ in \eqref{eq:Prelim:Optimisation_mat_Q} to obtain \eqref{eq:Stability:Test_matrix_Q}. This yields the optimisation problem \eqref{eq:Stability:Opt_P_stability}, where the indices of the weighting function $(\siInNLweight, \siOutNLweight)$ are configurable. Asymptotic stability is ensured by changing the matrix inequality in \eqref{eq:Prelim:Opt_interconnected_stability} to a strict inequality and by ensuring that any states not present in $\vy$ are asymptotically stable. The latter condition is ensured through the zero-state analyses in \autoref{prop:Passivity:DGU_zero_state} and \autoref{prop:Passivity:Ctrl:DDA_Passive} and through the condition in \autoref{prop:Passivity:MG_Supply}. Finally, the parameters $\saNLweight$ and $\sbNLweight$ are calculated from \eqref{eq:Passivity:Ctrl:Weighting_indices}.
\end{proof}
Through the application of \autoref{thm:Stability:Ctrl_MG_Stability}, the parameters for the weighting function can thus be designed to ensure stability. We highlight that the results in \autoref{sec:Passivity} and \autoref{thm:Stability:Ctrl_MG_Stability} hold irrespective of the physical or communication topologies and are independent of the actuation states of the nodes, as long as \autorefMulti{ass:Problem:Actuated_agents, ass:Problem:Connected_graphs} hold. Therefore, verifying \autoref{thm:Stability:Ctrl_MG_Stability} ensures robustness against any changes which do not alter the worst-case passivity indices of the respective subsystems (see \eqref{eq:Passive:worst_case_indices}). Note that the presented stability analysis requires strictly passive loads and leaky agent PI controllers (see \autoref{rem:Control:Agent_PI:Non_ideal}). As demonstrated via simulation, these requirements are sufficient for stability, but not necessary.

	\section{Simulation} \label{sec:Simulation}
\begin{figure*}[!ht]
	\centering
	\vspace*{4pt}
		\begin{minipage}[t!]{0.3\textwidth}
			\centering
			\phantomcaption\label{fig:Simulation:network:Stage_A}
			\resizebox{\textwidth}{!}{%
		\tikzsetnextfilename{03_Img/network_simulation_A}%
\begin{tikzpicture}
	\def\nodeXdist{3cm}
	\def\nodeYdist{1.8cm}
	\def\nodeXshift{\nodeXdist/2}
	
	\def\busTerminalXdist{10pt}
	\def\busGenYdist{14pt}
	
	\def\Nodes{1,2,3,4,5,6,7,8,9,10}
	\def\NodesGeneration{
		2/south/0*/-1*,
		3/north/0*/1*,
		4/south/0*/-1*,
		7/north/0*/1*
	}
	
	\def\CommEdgesA{1/4, 2/3, 2/4, 2/5, 4/6, 5/7, 5/8, 7/9}
	\def\CommEdgesPerm{
		1/south/1*/4/north/-2*/-8pt/0pt/8pt/0pt,
		2/south/2*/3/south/-2*/-20pt/-12pt/20pt/-12pt,
		2/south/-2*/4/north/-1*/8pt/4pt/-16pt/12pt,
		2/south/1*/5/north/0*/-12pt/0pt/12pt/0pt,
		4/south/-2*/6/north/0*/8pt/4pt/-12pt/8pt,
		5/south/-1*/7/north/1*/8pt/0pt/-8pt/0pt,
		5/south/1*/8/north/-1*/-8pt/0pt/8pt/0pt
	}
	\def\CommEdgesTemp{
		7/south/0*/9/north/1*/12pt/0pt/-12pt/0pt/0.6/0.4
	}
	
	\def\PhysEdgesA{1/2, 2/3, 2/4, 3/5, 3/8, 4/5, 4/6, 6/7, 6/9, 7/8}
	\def\PhysEdgesPerm{
		1/north/1*/2/north/-1*/8pt,
		2/north/1*/3/north/-1*/8pt,
		2/south/-1*/4/north/0*/0pt,
		3/south/-1*/5/north/1*/0pt,
		3/south/0*/8/north/0*/0pt,
		4/north/1*/5/north/-1*/8pt,
		4/south/-1*/6/north/1*/0pt,
		6/south/1*/7/south/-1*/-8pt,
		7/south/1*/8/south/-1*/-8pt
	}
	\def\PhysEdgesTemp{
		6/south/0*/9/north/-1*/0pt
	}

	\pgfdeclaredecoration{ignore}{final}{
		\state{final}{}
	}

	\pgfdeclaremetadecoration{noMiddle}{initial}{
		\state{initial}[
		width={(\pgfmetadecoratedpathlength - \the\pgfdecorationsegmentlength)/2},
		next state=middle
		]
		{\decoration{curveto}}
		
		\state{middle}[
		width={\the\pgfdecorationsegmentlength},
		next state=final
		]
		{
			\decoration{moveto}
		}
		
		\state{final}
		{
			\decoration{curveto}
		}
	}
	
	\tikzset{noMiddle segment/.style={decoration={noMiddle},decorate, segment length=#1}}
	
	\tikzstyle{bus}				= [draw, rectangle, rounded corners=1pt, fill=black, minimum width = 50pt, minimum height = 2.5pt, inner sep=0pt]
	\tikzstyle{generation}		= [draw, circle, inner sep=1pt, fill=black!10!white]
	
	\tikzstyle{linePhysPerm}	= [-, line width=1.0pt]
	\tikzstyle{lineCommPerm}	= [-, densely dotted, line width=1.0pt, red]
	\tikzstyle{lineCommTemp}	= [-, densely dotted, line width=1.0pt, red]
	\tikzstyle{lineCommTempSw}	= [-, line width=1.0pt, red]
	
	\coordinate(cNode1);
	\path (cNode1) ++(\nodeXdist,0) coordinate (cNode2) ++(\nodeXdist,0) coordinate (cNode3);
	\path (cNode1) ++(\nodeXshift,-\nodeYdist) coordinate (cNode4) ++(\nodeXdist,0) coordinate (cNode5);
	\path (cNode1) ++(0,-2*\nodeYdist) coordinate (cNode6) ++(\nodeXdist,0) coordinate (cNode7) ++(\nodeXdist,0) coordinate (cNode8);
	\path (cNode1) ++(\nodeXshift,-3*\nodeYdist) coordinate (cNode9) ++(\nodeXdist,0) coordinate (cNode10);
	
	\foreach \a in \Nodes {
		\node[bus](node\a) at(cNode\a) {};
		\node[right=2pt](nodeText\a) at(node\a.east) {\large \a};
	}

	\foreach \a/\b/\c/\d in \NodesGeneration {
		\path (node\a.\b) ++(\c\busTerminalXdist,\d\busGenYdist) coordinate (cGen\a);
		\node[generation](gen\a) at(cGen\a) {\small $\denoteDGU$};
		\draw[linePhysPerm] ($(node\a.\b) + (\c\busTerminalXdist,0)$) -- (gen\a);
	}
	
	\foreach \a/\b/\c/\d/\e/\f/\g in \PhysEdgesPerm {
		\path ($(node\a.\b) + (\c\busTerminalXdist,0)$) -- ($(node\d.\e) + (\f\busTerminalXdist,0)$) coordinate[pos=0.5] (mid_path_phys\a\d);
		\draw[linePhysPerm]($(node\a.\b) + (\c\busTerminalXdist,0)$) |- ($(mid_path_phys\a\d) + (0,\g)$) -| ($(node\d.\e) + (\f\busTerminalXdist,0)$);
	}
	\foreach \a/\b/\c/\d/\e/\f/\g in \PhysEdgesTemp {
		\path ($(node\a.\b) + (\c\busTerminalXdist,0)$) -- ($(node\d.\e) + (\f\busTerminalXdist,0)$) coordinate[pos=0.5] (mid_path_phys\a\d);
		\draw[linePhysPerm] let \p1 = ($(node\a.\b) + (\c\busTerminalXdist,0)$), \p2 = (mid_path_phys\a\d), \p3 = ($(node\d.\e) + (\f\busTerminalXdist,0)$) in
			(\p1) |- (\x1,\y2)
			to [cosw, bipoles/length=1.0cm, switches/scale=0.9, bipoles/cuteswitch/thickness=0.5] (\x3,\y2)
			-| (\p3); 
	}

	\foreach \a/\b/\c/\d/\e/\f/\g/\h/\i/\j in \CommEdgesPerm {
		\path ($(node\a.\b) + (\c\busTerminalXdist,0)$) -- ($(node\d.\e) + (\f\busTerminalXdist,0)$) coordinate[pos=0.5] (mid_path_comm\a\d);
		\draw[lineCommPerm]($(node\a.\b) + (\c\busTerminalXdist,0)$) .. controls ($(mid_path_comm\a\d) + (\g,\h)$) and ($(mid_path_comm\a\d) + (\i,\j)$) .. ($(node\d.\e) + (\f\busTerminalXdist,0)$);
	}
	\foreach \a/\b/\c/\d/\e/\f/\g/\h/\i/\j/\k/\l in \CommEdgesTemp {
		\path ($(node\a.\b) + (\c\busTerminalXdist,0)$) -- ($(node\d.\e) + (\f\busTerminalXdist,0)$) coordinate[pos=0.5] (mid_path_comm\a\d);
		
		\draw[lineCommTemp, noMiddle segment = 0.5cm]
			($(node\a.\b) + (\c\busTerminalXdist,0)$) .. 
			controls ($(mid_path_comm\a\d) + (\g,\h)$) and ($(mid_path_comm\a\d) + (\i,\j)$) .. 
			($(node\d.\e) + (\f\busTerminalXdist,0)$)
			coordinate[pos=\k] (mid_switch_start\a\b) 
			coordinate[pos=\l] (mid_switch_end\a\b);
			
		\draw[lineCommTempSw] (mid_switch_start\a\b) to [cosw, color=red, bipoles/length=1.0cm, switches/scale=0.9, bipoles/cuteswitch/thickness=0.5] (mid_switch_end\a\b);
	}

	\node[above=0.5cm] at (node2.north) {\Large \underline{State A}};
\end{tikzpicture}
	}
			\vspace*{-4pt}
		\end{minipage}
		\hspace*{22pt}
		\begin{minipage}[t!]{0.3\textwidth}
			\centering
			\phantomcaption\label{fig:Simulation:network:Stage_B}
			\resizebox{\textwidth}{!}{%
		\tikzsetnextfilename{03_Img/network_simulation_B}%
\begin{tikzpicture}
	\def\nodeXdist{3cm}
	\def\nodeYdist{1.8cm}
	\def\nodeXshift{\nodeXdist/2}
	
	\def\busTerminalXdist{10pt}
	\def\busGenYdist{14pt}
	
	\def\Nodes{1,2,3,4,5,6,7,8,9,10}
	\def\NodesGeneration{
		1/north/-1*/1*,
		3/north/0*/1*,
		4/south/0*/-1*,
		10/south/0*/-1*
	}
	
	\def\CommEdgesA{1/4, 2/3, 2/4, 2/5, 4/6, 5/7, 5/8, 7/9}
	\def\CommEdgesPerm{
		1/south/1*/2/south/-1*/-20pt/-12pt/20pt/-12pt,
		1/south/-1*/6/north/-1*/-12pt/20pt/-12pt/-20pt,
		2/south/1*/3/south/-2*/-20pt/-12pt/20pt/-12pt,
		2/south/0*/5/north/0*/-12pt/0pt/12pt/0pt,
		3/south/1*/8/north/1*/12pt/20pt/12pt/-20pt,
		4/south/-1*/6/north/0*/8pt/4pt/-12pt/8pt,
		5/south/-1*/7/north/1*/8pt/0pt/-8pt/0pt,
		5/south/1*/8/north/-1*/-8pt/0pt/8pt/0pt,
		6/north/1*/7/north/-1*/-20pt/12pt/20pt/12pt
	}
	\def\CommEdgesTemp{
		7/south/0*/10/north/-1*/-12pt/0pt/12pt/0pt/0.4/0.6,
		8/south/1*/10/north/1*/32pt/-32pt/-12pt/-8pt/0.5/0.3
	}
	
	\def\PhysEdgesA{1/2, 2/3, 2/4, 3/5, 3/8, 4/5, 4/6, 6/7, 6/9, 7/8}
	\def\PhysEdgesPerm{
		1/north/0*/2/north/-1*/8pt,
		1/south/0*/4/north/-1*/0pt,
		2/north/1*/3/north/-1*/8pt,
		3/south/-1*/5/north/1*/0pt,
		3/south/0*/8/north/0*/0pt,
		4/north/1*/5/north/-1*/8pt,
		4/south/1*/7/north/-0*/0pt,
		6/south/1*/7/south/-1*/-8pt,
		7/south/1*/8/south/-1*/-8pt
	}
	\def\PhysEdgesTemp{
		10/north/0*/8/south/0*/0pt
	}

	\pgfdeclaredecoration{ignore}{final}{
		\state{final}{}
	}

	\pgfdeclaremetadecoration{noMiddle}{initial}{
		\state{initial}[
		width={(\pgfmetadecoratedpathlength - \the\pgfdecorationsegmentlength)/2},
		next state=middle
		]
		{\decoration{curveto}}
		
		\state{middle}[
		width={\the\pgfdecorationsegmentlength},
		next state=final
		]
		{
			\decoration{moveto}
		}
		
		\state{final}
		{
			\decoration{curveto}
		}
	}
	
	\tikzset{noMiddle segment/.style={decoration={noMiddle},decorate, segment length=#1}}
	
	\tikzstyle{bus}				= [draw, rectangle, rounded corners=1pt, fill=black, minimum width = 50pt, minimum height = 2.5pt, inner sep=0pt]
	\tikzstyle{generation}		= [draw, circle, inner sep=1pt, fill=black!10!white]
	
	\tikzstyle{linePhysPerm}	= [-, line width=1.0pt]
	\tikzstyle{lineCommPerm}	= [-, densely dotted, line width=1.0pt, red]
	\tikzstyle{lineCommTemp}	= [-, densely dotted, line width=1.0pt, red]
	\tikzstyle{lineCommTempSw}	= [-, line width=1.0pt, red]
	
	\coordinate(cNode1);
	\path (cNode1) ++(\nodeXdist,0) coordinate (cNode2) ++(\nodeXdist,0) coordinate (cNode3);
	\path (cNode1) ++(\nodeXshift,-\nodeYdist) coordinate (cNode4) ++(\nodeXdist,0) coordinate (cNode5);
	\path (cNode1) ++(0,-2*\nodeYdist) coordinate (cNode6) ++(\nodeXdist,0) coordinate (cNode7) ++(\nodeXdist,0) coordinate (cNode8);
	\path (cNode1) ++(\nodeXshift,-3*\nodeYdist) coordinate (cNode9) ++(\nodeXdist,0) coordinate (cNode10);
	
	\foreach \a in \Nodes {
		\node[bus](node\a) at(cNode\a) {};
		\node[right=2pt](nodeText\a) at(node\a.east) {\large \a};
	}

	\foreach \a/\b/\c/\d in \NodesGeneration {
		\path (node\a.\b) ++(\c\busTerminalXdist,\d\busGenYdist) coordinate (cGen\a);
		\node[generation](gen\a) at(cGen\a) {\small $\denoteDGU$};
		\draw[linePhysPerm] ($(node\a.\b) + (\c\busTerminalXdist,0)$) -- (gen\a);
	}
	
	\foreach \a/\b/\c/\d/\e/\f/\g in \PhysEdgesPerm {
		\path ($(node\a.\b) + (\c\busTerminalXdist,0)$) -- ($(node\d.\e) + (\f\busTerminalXdist,0)$) coordinate[pos=0.5] (mid_path_phys\a\d);
		\draw[linePhysPerm]($(node\a.\b) + (\c\busTerminalXdist,0)$) |- ($(mid_path_phys\a\d) + (0,\g)$) -| ($(node\d.\e) + (\f\busTerminalXdist,0)$);
	}
	\foreach \a/\b/\c/\d/\e/\f/\g in \PhysEdgesTemp {
		\path ($(node\a.\b) + (\c\busTerminalXdist,0)$) -- ($(node\d.\e) + (\f\busTerminalXdist,0)$) coordinate[pos=0.5] (mid_path_phys\a\d);
		\draw[linePhysPerm] let \p1 = ($(node\a.\b) + (\c\busTerminalXdist,0)$), \p2 = (mid_path_phys\a\d), \p3 = ($(node\d.\e) + (\f\busTerminalXdist,0)$) in
			(\p1) |- (\x1,\y2)
			to [cosw, bipoles/length=1.0cm, switches/scale=0.9, bipoles/cuteswitch/thickness=0.5] (\x3,\y2)
			-| (\p3); 
	}

	\foreach \a/\b/\c/\d/\e/\f/\g/\h/\i/\j in \CommEdgesPerm {
		\path ($(node\a.\b) + (\c\busTerminalXdist,0)$) -- ($(node\d.\e) + (\f\busTerminalXdist,0)$) coordinate[pos=0.5] (mid_path_comm\a\d);
		\draw[lineCommPerm]($(node\a.\b) + (\c\busTerminalXdist,0)$) .. controls ($(mid_path_comm\a\d) + (\g,\h)$) and ($(mid_path_comm\a\d) + (\i,\j)$) .. ($(node\d.\e) + (\f\busTerminalXdist,0)$);
	}
	\foreach \a/\b/\c/\d/\e/\f/\g/\h/\i/\j/\k/\l in \CommEdgesTemp {
		\path ($(node\a.\b) + (\c\busTerminalXdist,0)$) -- ($(node\d.\e) + (\f\busTerminalXdist,0)$) coordinate[pos=0.5] (mid_path_comm\a\d);
		
		\draw[lineCommTemp, noMiddle segment = 0.5cm]
			($(node\a.\b) + (\c\busTerminalXdist,0)$) .. 
			controls ($(mid_path_comm\a\d) + (\g,\h)$) and ($(mid_path_comm\a\d) + (\i,\j)$) .. 
			($(node\d.\e) + (\f\busTerminalXdist,0)$)
			coordinate[pos=\k] (mid_switch_start\a\b) 
			coordinate[pos=\l] (mid_switch_end\a\b);
			
		\draw[lineCommTempSw] (mid_switch_start\a\b) to [cosw, color=red, bipoles/length=1.0cm, switches/scale=0.9, bipoles/cuteswitch/thickness=0.5] (mid_switch_end\a\b);
	}

	\node[above=0.5cm] at (node2.north) {\Large \underline{State B}};
\end{tikzpicture}
	}
		\end{minipage}
		\hspace*{22pt}
		\begin{minipage}[t!]{0.25\textwidth}
			\centering
			\resizebox{0.8\textwidth}{!}{%
		\tikzsetnextfilename{03_Img/network_simulation_legend}%
\begin{tikzpicture}
	\def\nodeXdist{0.5cm}
	\def\nodeYdist{0.5cm}
	\def\nodeXshift{\nodeXdist/2}
	
	\def\busTerminalXdist{10pt}
	\def\busGenYdist{14pt}
	
	\def\Nodes{1,2,3,4,5,6,7,8,9,10}
	\def\NodesGeneration{
		3/north/0*/1*,
		4/south/0*/-1*,
		10/south/0*/-1*
	}
	
	\def\CommEdgesA{1/4, 2/3, 2/4, 2/5, 4/6, 5/7, 5/8, 7/9}
	\def\CommEdgesPerm{
		1/south/1*/2/south/-1*/-20pt/-12pt/20pt/-12pt,
		1/south/-1*/6/north/-1*/-12pt/20pt/-12pt/-20pt,
		2/south/1*/3/south/-2*/-20pt/-12pt/20pt/-12pt,
		2/south/0*/5/north/0*/-12pt/0pt/12pt/0pt,
		3/south/1*/8/north/1*/12pt/20pt/12pt/-20pt,
		4/south/-1*/6/north/0*/8pt/4pt/-12pt/8pt,
		5/south/-1*/7/north/1*/8pt/0pt/-8pt/0pt,
		5/south/1*/8/north/-1*/-8pt/0pt/8pt/0pt,
		6/north/1*/7/north/-1*/-20pt/12pt/20pt/12pt
	}
	\def\CommEdgesTemp{
		7/south/0*/10/north/-1*/-12pt/0pt/12pt/0pt/0.4/0.6,
		8/south/1*/10/north/1*/32pt/-32pt/-12pt/-8pt/0.5/0.3
	}
	
	\def\PhysEdgesA{1/2, 2/3, 2/4, 3/5, 3/8, 4/5, 4/6, 6/7, 6/9, 7/8}
	\def\PhysEdgesPerm{
		1/north/0*/2/north/-1*/8pt,
		1/south/0*/4/north/-1*/0pt,
		2/north/1*/3/north/-1*/8pt,
		3/south/-1*/5/north/1*/0pt,
		3/south/0*/8/north/0*/0pt,
		4/north/1*/5/north/-1*/8pt,
		4/south/1*/7/north/-0*/0pt,
		6/south/1*/7/south/-1*/-8pt,
		7/south/1*/8/south/-1*/-8pt
	}
	\def\PhysEdgesTemp{
		10/north/0*/8/south/0*/0pt
	}

	\pgfdeclaredecoration{ignore}{final}{
		\state{final}{}
	}

	\pgfdeclaremetadecoration{noMiddle}{initial}{
		\state{initial}[
		width={(\pgfmetadecoratedpathlength - \the\pgfdecorationsegmentlength)/2},
		next state=middle
		]
		{\decoration{curveto}}
		
		\state{middle}[
		width={\the\pgfdecorationsegmentlength},
		next state=final
		]
		{
			\decoration{moveto}
		}
		
		\state{final}
		{
			\decoration{curveto}
		}
	}
	
	\tikzset{noMiddle segment/.style={decoration={noMiddle},decorate, segment length=#1}}
	
	\tikzstyle{bus}				= [draw, rectangle, rounded corners=1.5pt, fill=black, minimum width = 20pt, minimum height = 2.75pt, inner sep=0pt]
	\tikzstyle{generation}		= [draw, circle, inner sep=1pt, fill=black!10!white]
	
	\tikzstyle{linePhysPerm}	= [-, line width=1.0pt]
	\tikzstyle{lineCommPerm}	= [-, densely dotted, line width=1.0pt, red]
	\tikzstyle{lineCommTemp}	= [-, densely dotted, line width=1.0pt, red]
	\tikzstyle{lineCommTempSw}	= [-, line width=1.0pt, red]
	
	\coordinate(cBus);
	\path (cBus) ++(\nodeXdist,0) coordinate (cBusText);
	\path (cBus) ++(0,-\nodeYdist) coordinate (cDGU) ++(\nodeXdist,0) coordinate (cDGUText);
	\path (cBus) ++(0,-2*\nodeYdist) coordinate (cPhys) ++(\nodeXdist,0) coordinate (cPhysText);
	\path (cBus) ++(0,-3*\nodeYdist) coordinate (cComm) ++(\nodeXdist,0) coordinate (cCommText);
	
	\node[bus] at (cBus) {};
	\node[generation] at (cDGU) {\small $\denoteDGU$};
	\draw[linePhysPerm] ($(cPhys) + (-10pt,0)$) -- ($(cPhys) + (10pt,0)$);
	\draw[lineCommPerm] ($(cComm) + (-10pt,0)$) -- ($(cComm) + (10pt,0)$);
	
	\node[anchor=west] at (cBusText) {Bus};
	\node[anchor=west] at (cDGUText) {Active DGU};
	\node[anchor=west] at (cPhysText) {Electrical line};
	\node[anchor=west] at (cCommText) {Communication line};

\end{tikzpicture}
	}
			\newline
			
			\resizebox{\textwidth}{!}{\includegraphics[scale=1]{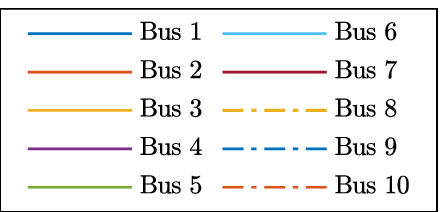}}
		\end{minipage}
	\caption{Two different states for a 10-bus DC microgrid along with electrical and communication connections. The loads at the buses are omitted for clarity.}
	\label{fig:Simulation:network}
\end{figure*}
\begin{figure}[!t]
	\centering
	\resizebox{\columnwidth}{!}{\includegraphics[scale=1]{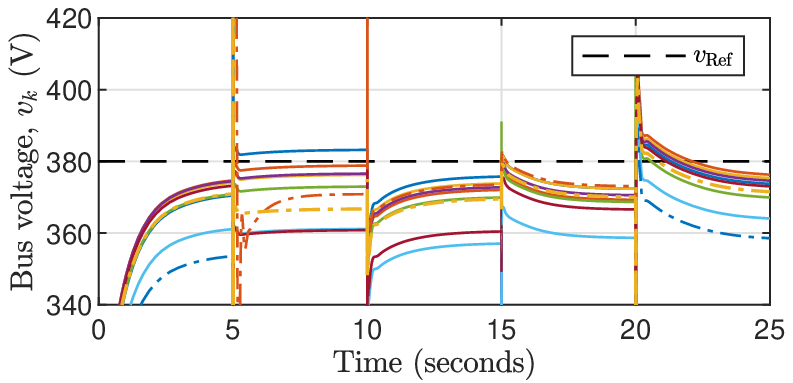}}
	\caption{Simulated bus voltages with line colours as per the legend in \autoref{fig:Simulation:network}.}
	\label{fig:Simulation:voltages}
\end{figure}
\begin{figure}[!t]
	\centering
	\resizebox{\columnwidth}{!}{\includegraphics[scale=1]{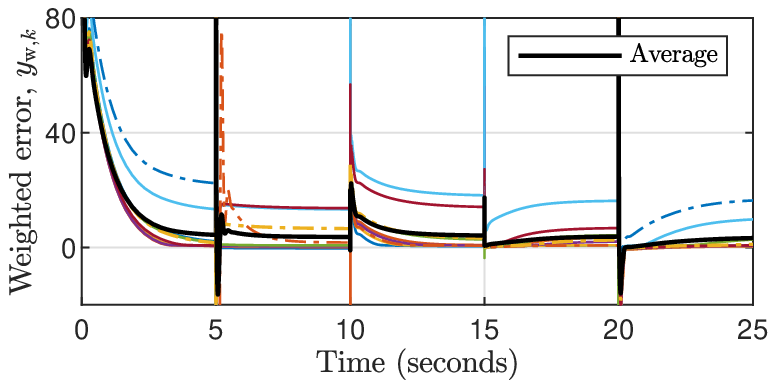}}
	\caption{Simulated weighted voltage errors and the average error of connected agents with agent line colours as per the legend in \autoref{fig:Simulation:network}.}
	\label{fig:Simulation:errors}
\end{figure}
\begin{figure}[!t]
	\centering
	\resizebox{\columnwidth}{!}{\includegraphics[scale=1]{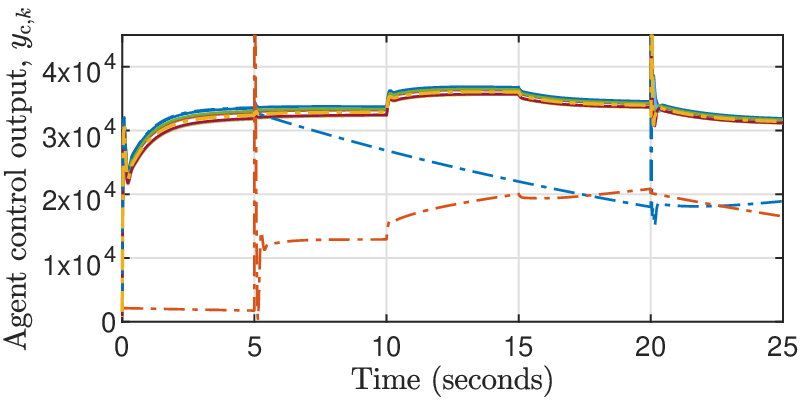}}
	\caption{Simulated outputs of the local agent controllers with line colours as per the legend in \autoref{fig:Simulation:network}.}
	\label{fig:Simulation:agent_control}
\end{figure}
\begin{figure}[!t]
	\centering
	\resizebox{\columnwidth}{!}{\includegraphics[scale=1]{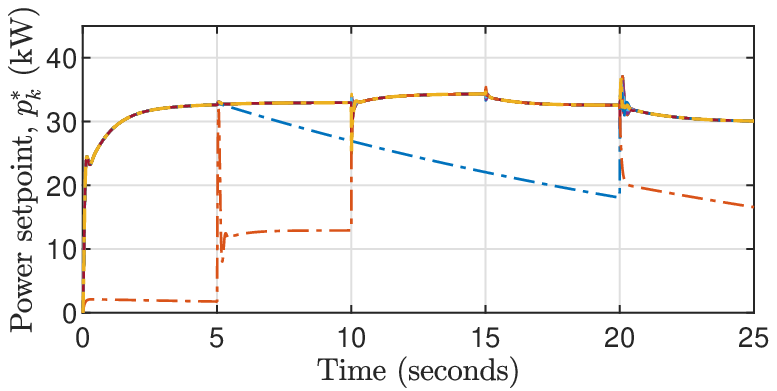}}
	\caption{Simulated power setpoints with line colours as per the legend in \autoref{fig:Simulation:network}.}
	\label{fig:Simulation:power_setpoints}
\end{figure}
In this section, we demonstrate the coordination and robustness of the proposed control structure by means of a \Matlab/\Simulink simulation using \Simscape components.
We consider the network comprising 10 buses depicted in \autoref{fig:Simulation:network}.
In \autoref{sec:Simulation:Setup}, we describe the setup of the simulation along with the various changes that the network is subjected to.
Next, in \autoref{sec:Simulation:Reults}, simulation results are presented for the case where \autoref{thm:Stability:Ctrl_MG_Stability} holds, i.e. with strictly passive loads and leaky agent PI controllers.
Finally, in \autoref{sec:Simulation:Robust}, we show the robust stability of the proposed control structure for passive loads and ideal agent PI controllers.
\subsection{Simulation Setup} \label{sec:Simulation:Setup}
%
The DC microgrid in \autoref{fig:Simulation:network} is simulated with the parameters in \autoref{tab:Simulation:Sim_params}. The ZIP load parameters are chosen randomly in the specified ranges such that the required passivity measures are fulfilled (see \autoref{rem:Passivity:load_measure}). Furthermore, typical values are used for the DGUs and the lines \cite{Nasirian2015, Tucci2016, Cucuzzella2019cons}. The lines exhibit the same per kilometer parameter values and the line length are chosen randomly in the given interval. The line lengths are given in \autorefapp{sec:App_Sim_Data}.
\begin{table}[!t]
	\centering
	\renewcommand{\arraystretch}{1.25}
	\caption{Simulation Parameter Values}
	\label{tab:Simulation:Sim_params}
	\begin{tabular}{lr@{\;}lr@{\;}l}
		\noalign{\hrule height 1.0pt}
		Voltages & $\svRef =$ & $\SI{380}{\volt}$ & $\svCrit =$ & $\SI{266}{\volt}$  \\
		\hline
		DGU Filters \eqref{eq:Problem:node_act_dynamics} & $\sR[k] =$ & $\SI{0.2}{\ohm}$ & $\sL[k] =$ & $\SI{1.8}{\milli\henry}$\\
		& $\sC[k] =$ & $\SI{2.2}{\milli\farad}$ && \\
		\hline
		ZIP Loads \eqref{eq:Problem:ZIP_model} & $|\sZinv| \le$ & $\SI{0.1}{\per\ohm}$ & $|\sI\:\!| \le$ & $\SI{21}{\ampere}$ \\
		& $|\sP\,| \le$ & $\SI{3}{\kilo\watt}$ && \\
		\hline
		Elec. Lines \eqref{eq:Problem:line_dynamics} & $\sR[kl] =$ & $\SI{0.1}{\ohm\per\kilo\meter}$ & $\sL[kl] =$ & $\SI{2}{\micro\henry\per\kilo\meter}$ \\
		& $\sC[kl] =$ & $\SI{22}{\nano\farad\per\kilo\meter}$ & length $\in$ & $[0.2; 10]\,\si{\kilo\meter}$ \\
		\noalign{\hrule height 1.0pt}
	\end{tabular}
\end{table}
\begin{table}[!t]
	\centering
	\renewcommand{\arraystretch}{1.25}
	\caption{Controller Parameter Values}
	\label{tab:Simulation:Control_params}
	\begin{tabular}{lr@{\;}l@{\quad}r@{\;}l@{\quad}r@{\;}l}
		\noalign{\hrule height 1.0pt}
		Power PI Control \eqref{eq:Problem:DGU_Power_Control} & $\skpDGU =$ & $90$ & $\skiDGU =$ & $90$ & $\sRDGUdamp =$ & $-8$ \\
		DDA Control \eqref{eq:Control:DDA} & $\skpDDA =$ & $50$ & $\skiDDA =$ & $100$ & $\sgDDA =$ & $16$ \\ 
		Agent PI Control \eqref{eq:Control:local_PI_vector} & $\skpCtrl =$ & $160$ & $\skiCtrl =$ & $600\!\!$ & $\sdampCtrl =$ & $0.08$ \\
		Weighting Function \eqref{eq:Control:NL_Func} & $\saNLweight =$ & $0.1$ & $\sbNLweight =$ & $1.1$ & $\scNLweight =$ & $\SI{7.5}{\volt}$ \\
		\noalign{\hrule height 1.0pt}
	\end{tabular}
\end{table}

The simulation starts off in State~A (see \autoref{fig:Simulation:network}) with Bus~9 connected and with all states at zero. The following changes are made at the indicated times.
\begin{itemize}
	\item $t = \SI{5}{\second}$: The actuation states $\sAct[i]$ of the buses switches from State~A to State~B and Bus~9 is disconnected.
	\item $t = \SI{10}{\second}$: The communication topology switches from State~A to State~B and Bus~10 is connected.
	\item $t = \SI{15}{\second}$: The electrical topology switches from State~A to State~B.
	\item $t = \SI{20}{\second}$: The bus actuation status along with the communication and electrical topologies revert to State~A. Bus~9 is connected and Bus~10 is disconnected.
\end{itemize}
Furthermore, at each change, half of the buses are randomly selected and assigned new ZIP load parameters. The ZIP load parameters can be found in \autorefapp{sec:App_Sim_Data}.

The parameters for the closed-loop controller, as specified in \autoref{tab:Simulation:Control_params}, are designed constructively, starting from the microgrid subsystems. First, the passivity indices for the lines ($\siOutTx = 0.01$) and loads ($\siOutLoad = \scLoadlo = 0.05$) are calculated from \autoref{prop:Passivity:Line_Passivity} and \autoref{prop:Passivity:Load_passivity}, respectively. Next, the parameters for the power regulator \eqref{eq:Problem:DGU_Power_Control} are chosen and the DGU passivity indices are calculated from \autoref{thm:Passivity:Shifted_DGU}, with the optimisation verified for the practically relevant intervals $\cramped{\sv \in [\SI{200}{\volt}, \SI{550}{\volt}]}$ and $\cramped{\sieq \in [\SI{10}{\ampere}, \SI{350}{\ampere}]}$. Note that adding the restriction $\cramped{\siInDGU[2] \ge -\siOutTx}$ to the optimisation in \autoref{thm:Passivity:Shifted_DGU} ensures that \eqref{eq:Passive:MG_lines_dominate_DGU} will be met. This yields a solution $\siInDGU[1] = -4.686$, $\siInDGU[2] = -0.01$ and $\siOutDGU = 0.01$, from which the microgrid supply rate is constructed as per \autoref{thm:Passive:MG_indep_passive}. Finally, parameters for the agent PI controllers are chosen and the weighting function parameters are designed using \autoref{thm:Stability:Ctrl_MG_Stability}. Note that \autoref{thm:Passive:MG_indep_passive} requires strictly passive loads ($\scLoadlo > 0$) and \autoref{thm:Stability:Ctrl_MG_Stability} necessitates leaky integrators $(\cramped{\sdampCtrl > 0})$.
\subsection{Results} \label{sec:Simulation:Reults}
The bus voltages $\sv[k]$ shown in \autoref{fig:Simulation:voltages} confirm the stability of the closed loop results, although the voltages tend to be lower than desired, due to the use of leaky integrators. The remaining steady-state offset can also be seen in the weighted errors plotted in \autoref{fig:Simulation:errors}, where the average tends towards a non-zero value in each instance (see \autoref{rem:Control:Agent_PI:Non_ideal}). Despite this, the four stage controller reaches a consensus on the average of the nonlinear weighted voltage errors. Moreover, the advantage of the weighting function can be seen at Bus~6 in $t \in [\SI{20}{\second}, \SI{25}{\second})$, where a significant weighted error only appears in \autoref{fig:Simulation:errors} when the voltage in \autoref{fig:Simulation:voltages} is not close to $\svRef$. Note that the voltages of Buses~9 and 10 are at $\SI{0}{\volt}$ during the respective periods where they are disconnected and not actuated.

In \autoref{fig:Simulation:agent_control}, the outputs of the agent controllers show that no synchronisation of the agent controllers are required. The agent controller outputs at Buses~1 to 8, which are continuously connected to the communication network, are near identical. However, the disconnecting buses, e.g.\ Bus~9 after $t = \SI{5}{\second}$, rapidly diverge from other controllers and do not synchronise on reconnect. Despite this, the final stage of the controller ensures cooperation of the buses, as demonstrated in the power setpoints $\spSP[k]$ in \autoref{fig:Simulation:power_setpoints}. When Bus~10 connects at $t = \SI{10}{\second}$, its setpoint $\spSP[k]$ rapidly converges to the coordinated common setpoint used by all connected agents.

Although the leaky integrators yield imperfect results (see \autoref{rem:Control:Agent_PI:Non_ideal} and \autoref{fig:Simulation:errors}), this can be mitigated by choosing a higher $\svRef$. Indeed, by combining the steady state of the agent PI controller \eqref{eq:Control:leaky_PI_steady_state} with the \ac{DDA} steady state \eqref{eq:Control:DDA_equilibrium}, we see that injecting power into the system $\vpSP > 0$ results in positive voltage errors. Since we consider (strictly) passive loads, increasing $\svRef$ is thus a viable method for correcting the imperfect results whilst retaining the advantageous properties of the stability analysis in \autoref{thm:Stability:Ctrl_MG_Stability}.
\subsection{Robustness Test} \label{sec:Simulation:Robust}
\begin{figure}[!t]
	\centering
	\resizebox{\columnwidth}{!}{\includegraphics[scale=1]{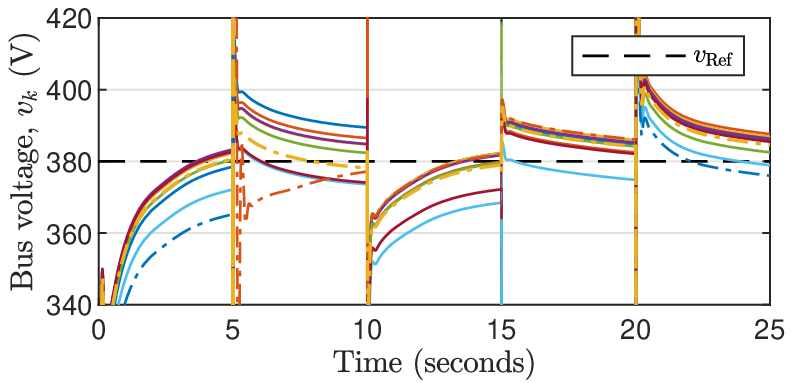}}
	\caption{Simulated bus voltages with ideal PI controllers and with line colours as per the legend in \autoref{fig:Simulation:network}.}
	\label{fig:Simulation:robust_voltages}
\end{figure}
\begin{figure}[!t]
	\centering
	\resizebox{\columnwidth}{!}{\includegraphics[scale=1]{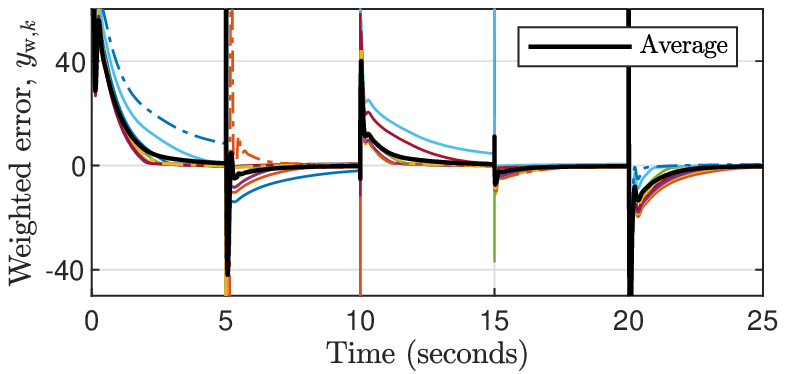}}
	\caption{Simulated weighted voltage errors and the average error of connected agents with ideal PI controllers and with agent line colours as per the legend in \autoref{fig:Simulation:network}.}
	\label{fig:Simulation:robust_errors}
\end{figure}
\begin{figure}[!t]
	\centering
	\resizebox{\columnwidth}{!}{\includegraphics[scale=1]{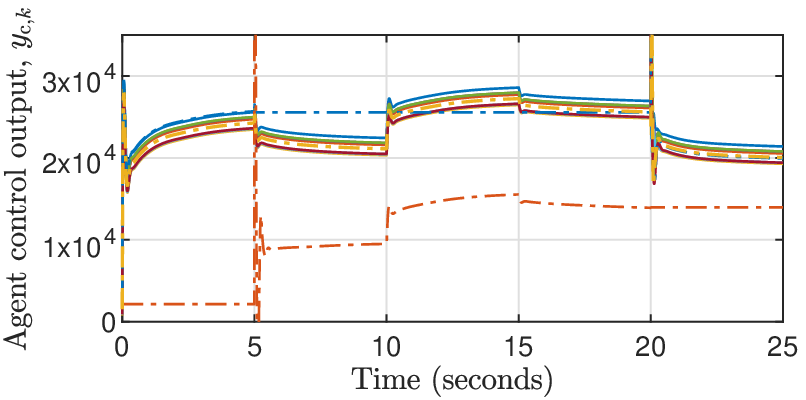}}
	\caption{Simulated outputs of the local agent controllers with ideal PI controllers and with line colours as per the legend in \autoref{fig:Simulation:network}.}
	\label{fig:Simulation:robust_agent_control}
\end{figure}
\begin{figure}[!t]
	\centering
	\resizebox{\columnwidth}{!}{\includegraphics[scale=1]{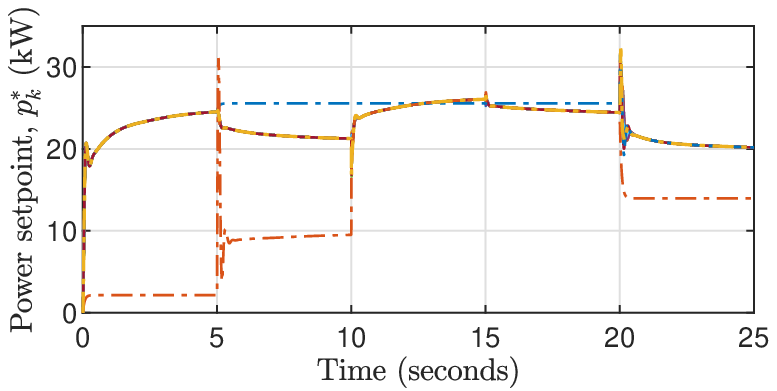}}
	\caption{Simulated power setpoints with ideal PI controllers and with line colours as per the legend in \autoref{fig:Simulation:network}.}
	\label{fig:Simulation:robust_power_setpoints}
\end{figure}
We now repeat the simulation described in \autoref{sec:Simulation:Setup} with the following changes. 1) Passive loads with $\scLoadlo = 0$ are allowed at all buses, and 2) ideal agent PI controllers with $\sdampCtrl = 0$ are used. Under these conditions, \autoref{thm:Stability:Ctrl_MG_Stability} can no longer be used to verify the stability. However, the stability may still be verified using classical approaches such as evaluating the eigenvalues for the closed loop linearised about the equilibrium. Note that the same random seed is used as for the results in \autoref{sec:Simulation:Setup}, allowing for a comparison between the scenarios to be made.

\autoref{fig:Simulation:robust_voltages} demonstrates the improved consensus achieved by the ideal PI agents, in that the bus voltages are closer to $\svRef$ at steady state than in \autoref{fig:Simulation:voltages}. Moreover, \autoref{fig:Simulation:robust_errors} shows that perfect consensus is achieved, where the average error tends to zero in each case. This figure also demonstrates the robustness against communication interruptions, as is the case for Bus~10 which, for the period $t \in [\SI{5}{\second}, \SI{10}{\second})$, is actuated but does not communicate with the other buses. Despite this, it is able to accurately regulate its own bus voltage (compared to the imperfect regulation achieved with leaky integrators as in \autoref{fig:Simulation:voltages}). The lack of leaky integrators is also evident in \autoref{fig:Simulation:robust_agent_control}, where the output of the agent controllers stay constant when a bus is disconnected and not actuated. Lastly, the power setpoints in \autoref{fig:Simulation:robust_power_setpoints} converging to a common value for the communicating agents confirm the coordination of the agents.

Note that while tests with non-passive loads can also yield a stable closed loop, instability can occur when the non-passive loads dominate. To address this, a targeted compensation of non-passive loads is required (see \autoref{rem:Passivity:Non_passive_loads}).
	\section{Conclusion}\label{sec:Conclusion}
%
In this paper, we proposed a four-stage distributed control structure that achieves power sharing in a DC microgrid while ensuring voltage regulation for the voltages of both actuated and unactuated buses. 
%
We demonstrated how the passivity properties of various subsystems can be determined and combined these in a stability analysis that is independent of topological changes, actuation changes, bus connections or disconnections and load changes.

Future work includes the consideration of non-passive loads at arbitrary locations in the microgrid and the construction of an interface to allow for the presented work to be combined with tertiary optimal controllers. 
	
	\appendices
	
	\section{Proofs} \label{sec:App_Proofs}
\begin{proof}[Proof of \autoref{prop:Control:desired_equilibrium}]
	For the control structure in steady state, $\vxCtrldot = 0$ and thus $\vyCtrl$ is constant. The steady-state output \eqref{eq:Control:DDA_equilibrium} of the Stage~4 \ac{DDA} therefore ensures \autoref{obj:Problem:power_sharing} is achieved. Furthermore, consider the steady state of the Stage~2 \ac{DDA}
	\begin{align}
		\label{eq:Control:DDA_2_equilibrium_input}
		\suDDA[s, k] &= \lim_{t \to \infty} \shNLweight(\svRef - \sv[k]), \\
		\label{eq:Control:DDA_2_equilibrium_output}
		\lim_{t \to \infty} \syDDA[2, k] &= \frac{\vuDDAT[s]\vOneCol[N]}{N} = \lim_{t \to \infty} \frac{1}{N} \sum_{k \in \setN} \left(\svRef - \sh(\sv[k])\right),
	\end{align}
	if $\sv[k]$ is in equilibrium and where $\sh$ is obtained by shifting $\shNLweight$ by $\svRef$. Note that \eqref{eq:Control:DDA_2_equilibrium_output} corresponds to the condition of \eqref{eq:Problem:Obj_voltage_consensus} in \autoref{obj:Problem:weighted_errors}. Therefore,  $\vyDDA[2]$ specifies the regulation error of the average weighted voltage error in steady state. From the steady state of the agent PI controller in \eqref{eq:Control:local_PI_vector}, we have $\sdampCtrl\vxCtrl = \vyDDA[2]$. Thus, ideal integrators with $\sdampCtrl = 0$ ensure that \autoref{obj:Problem:weighted_errors} is met exactly. For $\sdampCtrl > 0$, substitute the PI equilibrium into the output of the agent PI controller in \eqref{eq:Control:local_PI_vector} to obtain the steady state equation
	\begin{equation} \label{eq:Control:PI_forced_equilibrium}
		\vxCtrl = \frac{1}{\skiCtrl} \left(\vyCtrl + \skpCtrl \vyDDA[2]\right).
	\end{equation}
	Substitute $\sdampCtrl\vxCtrl = \vyDDA[2]$ into \eqref{eq:Control:DDA_equilibrium} and simplify to find
	\begin{equation} \label{eq:Control:Steady_state_error}
		\vyDDA[2] = \frac{\sdampCtrl}{\skiCtrl (1 + \sdampCtrl \skpCtrl)} \vyCtrl,
	\end{equation}
	for the steady state.
	Since the entries of the vector $\vyDDA[2]$ and thus of $\vxCtrl$ and $\vyCtrl$ are the same at steady state. Therefore the steady state output for the Stage~4 \ac{DDA} in \eqref{eq:Control:DDA_equilibrium} gives $\vyCtrl = \vyDDA[4]$, which we combine with \eqref{eq:Control:Steady_state_error} to obtain the error for \autoref{obj:Problem:weighted_errors} in \eqref{eq:Control:Steady_state_error_output}.
\end{proof}

\begin{proof}[Proof of \autoref{thm:Passive:MG_indep_passive}]
		Consider the supply rates which describe the actuated and unactuated states, respectively, for a given bus $k \in \setN$
	\begin{align} \label{eq:Passive:MG_supply_act}
		\swMGAct[k] &= (1 {+} \siInDGU[1]\siOutDGU)\speSPAct[k]\sveAct[k] - \siInDGU[1] {(\speSPAct[k]{})}^2 - \siOutDGU \sveAct[k]^2 ,\!\!  \\
		\label{eq:Passive:MG_supply_unact}
		\swMGUnact[k] &= - \siOutLoad \sveUnact[k]^2 .
	\end{align}
	These allow the microgrid supply rate in \eqref{eq:Passive:MG_supply_rate} to be decomposed according to the actuation states $\sAct[k]$
	\begin{equation} \label{eq:Passive:MG_indep_supply}
		\begin{aligned}
			\swMGdep =& \; \sum_{k \in \setNAct}\swMGAct[k] + \sum_{k \in \setNUnact}\swMGUnact[k] \\
			=& \sum_{k \in \setN}\left( \sAct[k]\swMGAct[k] + (1-\sAct[k])\swMGUnact[k] \right)
		\end{aligned}
	\end{equation}
	Enlarge the supply rate of the unactuated buses in \eqref{eq:Passive:MG_supply_unact} by adding the positive term $\siInLoad {(\speSPUnact[k]{})}^2$ for an arbitrarily small $\siInLoad > 0$ such that 
	\begin{equation} \label{eq:Passive:MG_supply_unact_hi}
		\begin{aligned}	
			\swMGUnact[k] &\le \swMGUnacthi[k] \,= \, \siInLoad {(\speSPUnact[k]{})}^2 - \siOutLoad \sveUnact[k]^2 \\
			&\le \frac{\swMGUnacthi[k]}{\siOutLoad} = \frac{\siInLoad}{\siOutLoad} {(\speSPUnact[k]{})}^2 - \sveUnact[k]^2
		\end{aligned}
	\end{equation}
	for $\siOutLoad$ as in \eqref{eq:Passive:MG_Unact_passive_loads}. The supply rate $\swMGUnacthi[k]/\siOutLoad$ is equivalent to the \Ltwo{} supply rate in \autoref{def:Prelim:passive_rates} and is thus bounded by the sector $[-\sqrt{\frac{\siInLoad}{\siOutLoad}},\sqrt{\frac{\siInLoad}{\siOutLoad}}]$ \cite[Lemma~4]{Malan2022b}.
	Consider now the supply rate of the actuated agents \eqref{eq:Passive:MG_supply_act} narrowed down to an \ac{IFP} sector for the case that $\siOutDGU < 0$, i.e.\
	\begin{equation} \label{eq:Passive:MG_supply_act_lo}
		\swMGAct[k] \ge \swMGActlo[k] = \left\{
		\begin{aligned}
			&\swMGAct[k], \!\quad &&\text{if} \;\; \siOutDGU \ge 0, \\
			&\speSPAct[k]\sveAct[k] - \siInDGU[1] {(\speSPAct[k]{})}^2, \!\quad &&\text{if} \;\; \siOutDGU < 0,
		\end{aligned}
		\right.
	\end{equation}
	such that $\swMGActlo[k]$ is sector bounded by $[\siInDGU[1], \frac{1}{\siOutDGU}]$ if $\siOutDGU > 0$ and $[\siInDGU[1], \infty)$ if $\siOutDGU < 0$ or if $\siOutDGU = 0$ (see \cite[p.~231]{Khalil2002}). A relation bewteen $\swMGActlo$ and $\swMGUnacthi/\siOutLoad$ can now be established by comparing their respective sector bounds:
	\begin{equation} \label{eq:Passive:MG_compare_sectors}
		\frac{\swMGUnacthi[k]}{\siOutLoad} \le \swMGActlo[k] \; \text{if}
		\left\{
		\begin{aligned}
			&[-\sqrt{\tfrac{\siInLoad}{\siOutLoad}},\sqrt{\tfrac{\siInLoad}{\siOutLoad}}] \subseteq [\siInDGU[1], \tfrac{1}{\siOutDGU}], \!\!\! &\text{if} \; \siOutDGU > 0, \\
			&[-\sqrt{\tfrac{\siInLoad}{\siOutLoad}},\sqrt{\tfrac{\siInLoad}{\siOutLoad}}] \subseteq [\siInDGU[1], \infty), \!\!\! &\text{if} \; \siOutDGU \le 0,
		\end{aligned}
		\right.
	\end{equation}
	\begin{figure}[!t]
		\centering
		\resizebox{0.7\columnwidth}{!}{%
		\tikzsetnextfilename{03_Img/mg_supply_sectors}%
		\begin{tikzpicture}[>=latex']	
    \def\c{2};
    \def\cArea{1.6};
    \def\UnactLtwoOffset{0.7}
    \def\OutXOffset{-0.8}
    \def\InYOffset{1.1}
    \def\rectSize{0.1}
    
    \coordinate (cTopLeft) at (-\c,\c);
    \coordinate (cTopRight) at (\c,\c);
    \coordinate (cBotLeft) at (-\c,-\c);
    \coordinate (cBotMid) at (0,-\c);
    \coordinate (cMidRight) at (\c,0);
    
    \coordinate (UnactL2_top_left) at (-\cArea,\UnactLtwoOffset);
    \coordinate (UnactL2_top_right) at (\cArea,\UnactLtwoOffset);
    \coordinate (UnactL2_bot_right) at (\cArea,-\UnactLtwoOffset);
    \coordinate (UnactL2_bot_left) at (-\cArea,-\UnactLtwoOffset);
    
    \coordinate (ActMin_top_left) at (-\cArea,\InYOffset);
    \coordinate (ActMin_top_right) at (0,\cArea);
    \coordinate (ActMin_bot_right) at (\cArea,-\InYOffset);
    \coordinate (ActMin_bot_left) at (0,-\cArea);

	\coordinate (Act_top_left) at (-\cArea,\InYOffset);
	\coordinate (Act_top_right) at (\OutXOffset,\cArea);
	\coordinate (Act_bot_right) at (\cArea,-\InYOffset);
	\coordinate (Act_bot_left) at (-\OutXOffset,-\cArea);

	\fill[matlabCol3!40!white] (UnactL2_top_left) -- (UnactL2_bot_right) -- (UnactL2_top_right) -- (UnactL2_bot_left) -- cycle;

	\begin{scope}
		\clip (ActMin_top_left) -- (ActMin_bot_right) |- (ActMin_top_right) -- (ActMin_bot_left) -| cycle;
		\foreach \x in {-4,-3.8,...,6}
		{\draw[bronze,line width=0.7pt]  (-\c,-\x) -- (\x,\c);}
	\end{scope}
	
	\begin{scope}
		\clip (Act_top_left) -- (Act_bot_right) |- (Act_top_right) -- (Act_bot_left) -| cycle;
		\foreach \x in {-6,-5.8,...,6}
		{\draw[ballblue,line width=0.7pt] (-\c,\x) -- (\x,-\c) ;}
	\end{scope}
    
    \draw[->,thick](-\c,0) -- (\c,0) node[below left=0.15, circle, fill=white!50, inner sep=-0.4, align=center] {$\su$};
    \draw[->,thick](0,-\c) -- (0,\c) node[below left] {$\sy$};
    
    
    \draw[-](UnactL2_top_left) -- (UnactL2_bot_right);
    \draw[-](UnactL2_bot_left) -- (UnactL2_top_right);
    \draw[-](ActMin_top_left) -- (ActMin_bot_right);
    \draw[-](ActMin_bot_left) -- (ActMin_top_right);
    \draw[-](Act_bot_left) -- (Act_top_right);
    
 	\path (cMidRight)
 	++ (0.4,-0.45-\rectSize) coordinate (cBoxUnactL2)
 	++ (0,0.45) coordinate (cBoxActMin)
 	++ (0,0.45) coordinate (cBoxAct);
	 	
	\begin{scope}[fill=white!0]
		\fill[clip] (cBoxAct) rectangle +(2*\rectSize,2*\rectSize);
		\foreach \x in {-4,-3.9,...,4}
		{\draw[ballblue,line width=0.6pt]  (-\c+2,\x) -- (\x+2,-\c);}
		\draw (cBoxAct) rectangle +(2*\rectSize,2*\rectSize);
	\end{scope}
	\path (cBoxAct) +(\rectSize*.8,\rectSize*.6) node[right=0.1, font=\footnotesize, anchor=west] {$\swMGAct[k]$};
	 	
 	\begin{scope}[fill=white!0]
		\fill[clip,draw] (cBoxActMin) rectangle +(2*\rectSize,2*\rectSize);
	 	\foreach \x in {-4,-3.9,...,6}
		 	{\draw[bronze,line width=0.6pt]  (-\c+2,-\x) -- (\x+2,\c);}
 		\draw (cBoxActMin) rectangle +(2*\rectSize,2*\rectSize);
 	\end{scope}
 	\path (cBoxActMin) +(\rectSize*.8,\rectSize*.6) node[right=0.1, font=\footnotesize, anchor=west] {$\swMGActlo[k]$};
 
	\begin{scope}[fill=white!0]
		 \fill[matlabCol3!40!white, draw=black] (cBoxUnactL2) rectangle +(2*\rectSize,2*\rectSize);
	\end{scope}
	\path (cBoxUnactL2) +(\rectSize*.8,\rectSize*.6) node[right=0.1, font=\footnotesize, anchor=west] {$\swMGUnacthi[k] / \siOutLoad$};

\end{tikzpicture}%
	}
		\caption{Comparison of the microgrid supply rate sectors in the proof of \autoref{thm:Passive:MG_indep_passive} if $\siOutDGU < 0$.}
		\label{fig:Passive:MG_supply_rates}
	\end{figure}
	Since $\siInLoad$ can be arbitrarily small, we derive \eqref{eq:Passive:MG_Act_iIn_passivity} by comparing the lower bounds in \eqref{eq:Passive:MG_compare_sectors} and note that the upper bound relation can be met for any $\siOutDGU$. A visual comparison of the sector conditions is made in \autoref{fig:Passive:MG_supply_rates}. The combination of \eqref{eq:Passive:MG_supply_unact_hi}--\eqref{eq:Passive:MG_compare_sectors} results in
	\begin{equation} \label{eq:Passive:MG_rate_comparison}
		\swMGUnact[k] \le \swMGUnacthi[k] \le \frac{\swMGUnacthi[k]}{\siOutLoad} \le \swMGActlo[k] \le \swMGAct[k] .
	\end{equation}
	Therefore, for the microgrid with the storage function $\sSMG$ that is dissipative w.r.t.\ \eqref{eq:Passive:MG_supply_rate}, it holds that
	%
	\begin{equation} \label{eq:Passive:MAG_passivity}
		\begin{aligned}
			\sSMGdot \le \swMGdep \le \sum_{k \in \setN}\swMGAct[k] = \swMG,
		\end{aligned}
	\end{equation}
	which is found by combining \eqref{eq:Passive:MG_indep_supply} with \eqref{eq:Passive:MG_rate_comparison}.
\end{proof}
	\section{Simulation Data} \label{sec:App_Sim_Data}
The simulation parameters used for the lines in \autoref{sec:Simulation} are given in \autoref{tab:App_Sim_Data:Line_Params}. Furthermore, the strictly passive load parameters for the simulation results in \autoref{sec:Simulation:Reults} and the passive load parameters for the results in \autoref{sec:Simulation:Robust} are given in \autoref{tab:App_Sim_Data:Load_params_strict} and \autoref{tab:App_Sim_Data:Load_params_pasive}, respectively. Note that the $\sP$ parameter for the loads in \autoref{tab:App_Sim_Data:Load_params_pasive} are the same as listed in \autoref{tab:App_Sim_Data:Load_params_strict}.
\begin{table}[!t]
	\centering
	\renewcommand{\arraystretch}{1.25}
	\caption{Rounded Line Lengths}
	\label{tab:App_Sim_Data:Line_Params}
	\begin{tabular}{cc@{\qquad}cc@{\qquad}cc}
		\noalign{\hrule height 1.0pt}
		Line & Length & Line & Length & Line & Length \\
		\hline
		1 -- 2 & $\SI{1.19}{\kilo\meter}$ & 1 -- 4 & $\SI{7.74}{\kilo\meter}$ & 2 -- 3 & $\SI{2.23}{\kilo\meter}$ \\
		2 -- 4 & $\SI{7.20}{\kilo\meter}$ & 3 -- 5 & $\SI{3.14}{\kilo\meter}$ & 3 -- 8 & $\SI{2.82}{\kilo\meter}$ \\
		4 -- 5 & $\SI{3.72}{\kilo\meter}$ & 4 -- 6 & $\SI{6.75}{\kilo\meter}$ & 4 -- 7 & $\SI{1.16}{\kilo\meter}$ \\
		6 -- 7 & $\SI{4.44}{\kilo\meter}$ & 6 -- 9 & $\SI{3.11}{\kilo\meter}$ & 7 -- 8 & $\SI{3.69}{\kilo\meter}$ \\
		8 -- 10 & $\SI{1.21}{\kilo\meter}$ &&&& \\
		\noalign{\hrule height 1.0pt}
	\end{tabular}
\end{table}
\begin{table}[!t]
	\centering
	\renewcommand{\arraystretch}{1.25}
	\caption{Strictly Passive Load Values}
	\label{tab:App_Sim_Data:Load_params_strict}
	\begin{tabular}{c@{\;}cccccc}
		\noalign{\hrule height 1.0pt}
		Bus & Parameter & $t = \SI{0}{\second}$ & $t = \SI{5}{\second}$ & $t = \SI{10}{\second}$ & $t = \SI{15}{\second}$ & $t = \SI{20}{\second}$ \\
		\noalign{\hrule height 1.0pt}
		& $\sZinv$ $ (\si{\per\ohm})$ &	0.103	&	0.103	&	0.106	&	0.106	&	0.083	\\
		1	& $\sI$ $ (\si{\ampere})$ &	4.66	&	2.15	&	-6.08	&	-6.08	&	14.45	\\
		& $\sP$ $ (\si{\watt})$ &	3599	&	-4055	&	4133	&	4133	&	-4927	\\
		\hline
		& $\sZinv$ $ (\si{\per\ohm})$ &	0.099	&	0.099	&	0.096	&	0.096	&	0.080	\\
		2	& $\sI$ $ (\si{\ampere})$ &	-16.09	&	-16.09	&	19.68	&	19.68	&	2.49	\\
		& $\sP$ $ (\si{\watt})$ &	3204	&	3204	&	2659	&	2659	&	1346	\\
		\hline
		& $\sZinv$ $ (\si{\per\ohm})$ &	0.128	&	0.105	&	0.105	&	0.105	&	0.096	\\
		3	& $\sI$ $ (\si{\ampere})$ &	10.27	&	-0.09	&	-0.09	&	-0.09	&	-11.09	\\
		& $\sP$ $ (\si{\watt})$ &	-1479	&	-3659	&	-3659	&	-3659	&	3031	\\
		\hline
		& $\sZinv$ $ (\si{\per\ohm})$ &	0.079	&	0.079	&	0.079	&	0.079	&	0.079	\\
		4	& $\sI$ $ (\si{\ampere})$ &	10.15	&	10.15	&	10.15	&	10.15	&	10.15	\\
		& $\sP$ $ (\si{\watt})$ &	-2711	&	-2711	&	-2711	&	-2711	&	-2711	\\
		\hline
		& $\sZinv$ $ (\si{\per\ohm})$ &	0.095	&	0.095	&	0.095	&	0.064	&	0.107	\\
		5	& $\sI$ $ (\si{\ampere})$ &	-6.64	&	-6.64	&	-6.64	&	16.68	&	2.10	\\
		& $\sP$ $ (\si{\watt})$ &	2768	&	2768	&	2768	&	-3798	&	4242	\\
		\hline
		& $\sZinv$ $ (\si{\per\ohm})$ &	0.089	&	0.089	&	0.106	&	0.103	&	0.103	\\
		6	& $\sI$ $ (\si{\ampere})$ &	6.87	&	6.87	&	7.85	&	-5.17	&	-5.17	\\
		& $\sP$ $ (\si{\watt})$ &	948		&	948		&	4321	&	370		&	370		\\
		\hline
		& $\sZinv$ $ (\si{\per\ohm})$ &	0.065	&	0.092	&	0.092	&	0.118	&	0.118	\\
		7	& $\sI$ $ (\si{\ampere})$ &	11.96	&	6.51	&	6.51	&	2.77	&	2.77	\\
		& $\sP$ $ (\si{\watt})$ &	-3624	&	-3442	&	-3442	&	-3890	&	-3890	\\
		\hline
		& $\sZinv$ $ (\si{\per\ohm})$ &	0.102	&	0.102	&	0.086	&	0.086	&	0.124	\\
		8	& $\sI$ $ (\si{\ampere})$ &	-16.85	&	-16.85	&	20.71	&	20.71	&	-4.68	\\
		& $\sP$ $ (\si{\watt})$ &	3529	&	3529	&	-4773	&	-4773	&	-3832	\\
		\hline
		& $\sZinv$ $ (\si{\per\ohm})$ &	0.111	&	0.103	&	0.109	&	0.077	&	0.077	\\
		9	& $\sI$ $ (\si{\ampere})$ &	13.79	&	-19.74	&	9.53	&	1.26	&	1.26	\\
		& $\sP$ $ (\si{\watt})$ &	-2645	&	1830	&	4215	&	1549	&	1549	\\
		\hline
		& $\sZinv$ $ (\si{\per\ohm})$ &	0.072	&	0.100	&	0.100	&	0.111	&	0.111	\\
		10	& $\sI$ $ (\si{\ampere})$ &	7.77	&	9.02	&	9.02	&	10.98	&	10.98	\\
		& $\sP $ $ (\si{\watt})$ &	-3538	&	-4143	&	-4143	&	-2795	&	-2795	\\
		\noalign{\hrule height 1.0pt}
	\end{tabular}
\end{table}
\begin{table}[!t]
	\centering
	\renewcommand{\arraystretch}{1.25}
	\caption{Passive Load Values, $\sP$ as in \autoref{tab:App_Sim_Data:Load_params_strict}}
	\label{tab:App_Sim_Data:Load_params_pasive}
	\begin{tabular}{c@{\;}cccccc}
		\noalign{\hrule height 1.0pt}
		Bus & Parameter & $t = \SI{0}{\second}$ & $t = \SI{5}{\second}$ & $t = \SI{10}{\second}$ & $t = \SI{15}{\second}$ & $t = \SI{20}{\second}$ \\
		\noalign{\hrule height 1.0pt}
		1	& $\sZinv$ $ (\si{\per\ohm})$ &	0.091	&	0.093	&	0.087	&	0.087	&	0.063	\\
		& $\sI$ $ (\si{\ampere})$ &	4.66	&	-8.15	&	-6.08	&	-6.08	&	9.71	\\
		\hline
		2	& $\sZinv$ $ (\si{\per\ohm})$ &	0.069	&	0.069	&	0.071	&	0.071	&	0.046	\\
		& $\sI$ $ (\si{\ampere})$ &	-16.09	&	-16.09	&	19.68	&	19.68	&	0.20	\\
		\hline
		3	& $\sZinv$ $ (\si{\per\ohm})$ &	0.095	&	0.082	&	0.082	&	0.082	&	0.059	\\
		& $\sI$ $ (\si{\ampere})$ &	8.91	&	-7.12	&	-7.12	&	-7.12	&	-11.09	\\
		\hline
		4	& $\sZinv$ $ (\si{\per\ohm})$ &	0.038	&	0.038	&	0.038	&	0.038	&	0.038	\\
		& $\sI$ $ (\si{\ampere})$ &	8.82	&	8.82	&	8.82	&	8.82	&	8.82	\\
		\hline
		5	& $\sZinv$ $ (\si{\per\ohm})$ &	0.065	&	0.065	&	0.065	&	0.027	&	0.078	\\
		& $\sI$ $ (\si{\ampere})$ &	-6.64	&	-6.64	&	-6.64	&	15.25	&	2.10	\\
		\hline
		6	& $\sZinv$ $ (\si{\per\ohm})$ &	0.071	&	0.071	&	0.089	&	0.102	&	0.102	\\
		& $\sI$ $ (\si{\ampere})$ &	4.04	&	4.04	&	7.85	&	-9.19	&	-9.19	\\
		\hline
		7	& $\sZinv$ $ (\si{\per\ohm})$ &	0.029	&	0.070	&	0.070	&	0.079	&	0.079	\\
		& $\sI$ $ (\si{\ampere})$ &	9.04	&	0.89	&	0.89	&	0.58	&	0.58	\\
		\hline
		8	& $\sZinv$ $ (\si{\per\ohm})$ &	0.075	&	0.075	&	0.057	&	0.057	&	0.111	\\
		& $\sI$ $ (\si{\ampere})$ &	-16.85	&	-16.85	&	20.55	&	20.55	&	-14.31	\\
		\hline
		9	& $\sZinv$ $ (\si{\per\ohm})$ &	0.105	&	0.102	&	0.061	&	0.036	&	0.036	\\
		& $\sI$ $ (\si{\ampere})$ &	10.71	&	-19.75	&	9.53	&	-0.05	&	-0.05	\\
		\hline
		10	& $\sZinv$ $ (\si{\per\ohm})$ &	0.042	&	0.091	&	0.091	&	0.088	&	0.088	\\
		& $\sI$ $ (\si{\ampere})$ &	2.53	&	2.03	&	2.03	&	8.34	&	8.34	\\
		\noalign{\hrule height 1.0pt}
	\end{tabular}
\end{table}
	

	
	\bibliographystyle{IEEEtran}
	
	\bibliography{ms}
	
	
	
	%
\end{document}